\newcommand{\p}{\partial}
\newcommand{\lsemioplus}{\mathbin{\mbox{$\lefteqn{\hspace{.77ex}\rule{.4pt}{1.2ex}}{\in}$}}}
\newcommand{\sgn}{\mathop{\rm sgn}\nolimits}
\newcommand{\spanindex}{{\mbox{\tiny$\langle\,\rangle$}}}
\newcommand{\todo}[1][\null]{\ensuremath{\clubsuit}}
\newcommand{\noprint}[1]{}
\newcommand{\checked}[1][\null]{\ensuremath{\boldsymbol{\surd}}}
\newtheorem{theorem}{Theorem}
\newtheorem{lemma}[theorem]{Lemma}
\newtheorem{corollary}[theorem]{Corollary}
\newtheorem{proposition}[theorem]{Proposition}
\newtheorem*{proposition*}{Proposition}
{\theoremstyle{definition}
\newtheorem{definition}[theorem]{Definition}

\newtheorem{remark}[theorem]{Remark}
}
\begin{document}

\par\noindent {\LARGE\bf
Algebraic method for group classification\\
of (1+1)-dimensional linear Schr\"odinger equations 
\par}

\vspace{5mm}\par\noindent{\large
C\'elestin Kurujyibwami$^{\dag}$, Peter Basarab-Horwath$^\ddag$ and Roman O. Popovych$^\S$
}\par\vspace{2mm}\par

\vspace{2mm}\par\noindent{\it
$^\dag$\,College of Science and Technology, University of Rwanda, P.O.\,Box: 3900, Kigali, Rwanda\\
$\phantom{^\dag}$\,Link\"oping University, 58183 Link\"oping, Sweden}\\
$\phantom{^\dag}$\,E-mail: celeku@yahoo.fr 

\vspace{2mm}\par\noindent{\it
$^\ddag$\,Link\"oping University, 58183 Link\"oping, Sweden}\\
$\phantom{^\ddag}$\,E-mail: pehor@mail.liu.se

\vspace{2mm}\par\noindent{\it
$^\S$\,Wolfgang Pauli Institute, Oskar-Morgenstern-Platz 1, A-1090 Vienna, Austria\\
$\phantom{^\S}$Institute of Mathematics of NAS of Ukraine, 3 Tereshchenkivs'ka Str., 01004 Kyiv, Ukraine}\\
$\phantom{^\S}$\,E-mail: rop@imath.kiev.ua

\vspace{6mm}\par\noindent\hspace*{10mm}\parbox{140mm}{\small
We carry out the complete group classification of the class 
of (1+1)-dimensional linear Schr\"odinger equations with complex-valued potentials. 
After introducing the notion of uniformly semi-normalized classes of differential equations, 
we compute the equivalence groupoid of the class under study and show that it is uniformly semi-normalized. 
More specifically, each admissible transformation in the class is 
the composition of a linear superposition transformation of the corresponding initial equation 
and an equivalence transformation of this class. 
This allows us to apply the new version of the algebraic method based on uniform semi-normalization
and reduce the group classification of the class under study
to the classification of low-dimensional appropriate subalgebras of the associated equivalence algebra. 
The partition into classification cases involves two integers
that characterize Lie symmetry extensions and are invariant with respect to equivalence transformations.}

\vspace{4mm}\par

\noprint{

Algebraic method for group classification of (1+1)-dimensional linear Schr\"odinger equations 

C\'elestin Kurujyibwami, Peter Basarab-Horwath and Roman O. Popovych

MSC
35Q41 (Primary) 35B06, 35A30 (Secondary)
35-XX   Partial differential equations 
  35Qxx  Equations of mathematical physics and other areas of application  
*   35Q41   Time-dependent Schrodinger equations, Dirac equations
    35Q40   PDEs in connection with quantum mechanics 
  35Bxx  Qualitative properties of solutions 
*   35B06   Symmetries, invariants, etc 
  35Axx  General topics 
*   35A30   Geometric theory, characteristics, transformations [See also 58J70, 58J72] 

  58Jxx  Partial differential equations on manifolds; differential operators [See also 32Wxx, 35-XX, 53Cxx] 
    58J70   Invariance and symmetry properties [See also 35A30] 

Keywords:
Group classification of differential equations, 
Group analysis of differential equations, 
Equivalence group, 
Equivalence groupoid, 
Lie symmetry, 
Schr\"odinger equations

}

\section{Introduction}

A standard assumption of quantum mechanics requires that the Hamiltonian of a quantum system be Hermitian
since this guarantees that the energy spectrum is real 
and that the time evolution of the system is unitary and hence probability-preserving~\cite{Bender2007}.
For linear Shr\"odinger equations, this means that only equations with real-valued potentials 
are considered to be physically relevant. 
Since the above assumption is, unlike the other axioms of quantum mechanics, more mathematical than physical,
attempts at weakening or modifying the Hermitian property of Hamiltonians 
have recently been made by looking at so-called $\mathcal{PT}$-symmetric Hamiltonians 
\cite{Bender2004,Bender2007,Mostafazadeh2014}. 
Here $\mathcal P$ is the space reflection (or parity) operator 
and $\mathcal T$ is the time reversal operator.     
Some complex potentials are associated with non-Hermitian $\mathcal{PT}$-symmetric Hamiltonians. 
Non-Hermitian $\mathcal{PT}$-symmetric Hamiltonians have been used to describe (observable) phenomena 
in quantum mechanics, such as systems interacting  with electromagnetic fields, 
dissipative processes such as radioactive decay, 
the ground state of Bose systems of hard spheres and both bosonic and fermionic degrees of freedom. 
Other important applications of non-Hermitian $\mathcal{PT}$-symmetric Hamiltonians are to be found 
in scattering theory which include numerical investigations of various physical phenomena in optics, 
condensed matter physics, scalar wave equations (acoustical scattering) 
and Maxwell's equations (electromagnetic scattering), in quasi-exactly solvable Hamiltonians, 
complex crystals and quantum field theory \cite{Bender2004,Bender2007,Mostafazadeh2014}.
In general, however, the physical interpretation of linear Schr\"odinger equations 
with complex potentials is not completely clear.

The study of Lie symmetries of Schr\"odinger equations was begun in the early 1970's 
after the revival of Lie's classical methods (see for instance~\cite{Ovsiannikov1982}).  
Lie symmetries of the free (1+3)-dimensional Schr\"odinger equations 
were first considered in~\cite{Niederer1972}. 
Therein it was suggested to call the essential part of the maximal Lie symmetry group 
of the free Schr\"odinger equation the Schr\"odinger group.
In~\cite{Niederer1973a} it was noted that the results of~\cite{Niederer1972} 
could be extended directly to any number of space variables, 
and the isomorphism of the Lie symmetry groups of the Schr\"odinger equations 
of the $n$-dimensional harmonic oscillator 
and of the $n$-dimensional free fall 
to the symmetry group for the (1+$n$)-dimensional free Schr\"odinger equation 
was proved in~\cite{Niederer1973a,Niederer1973b,Niederer1974}. 
This gave a hint for the construction of point transformations connecting these equations. 
The problem of finding Lie symmetries of (1+$n$)-dimensional linear Schr\"odinger equations 
with real-valued potentials was considered in~\cite{Boyer1974,Niederer1974}. 
In particular, in~\cite{Niederer1974} 
the general ``potential-independent'' form of point symmetry transformations of these equations
was found under the a priori assumption of fibre preservation. 
The classifying equation involving both transformation components and the potential 
was derived and used to obtain an upper bound of dimensions of Lie symmetry groups admitted 
by linear Schr\"odinger equations. 
Then some static potentials of physical relevance were considered, 
including the harmonic oscillator, the free fall, the inverse square potential, 
the anisotropic harmonic oscillator 
and the time-dependent Kepler problem.
The case of arbitrary time-independent real-valued potential was studied in~\cite{Boyer1974}.     
Although it was claimed there that 
``the general solution and a complete list of such potentials and their symmetry groups are then
given for the cases $n=1,2,3$'', it is now considered that this list is not complete. 
Note that in the papers cited above, phase translations and amplitude scalings were ignored, 
which makes certain points inconsistent. 

Similar studies were carried out for the time-dependent Schr\"odinger equation 
for the two-dimensional harmonic oscillator and for the two- and three-dimensional hydrogen-like atom 
in~\cite{Anderson&Kumei&Wulfman1972a,Anderson&Kumei&Wulfman1972b}.
Closely related research on both first- and higher-order symmetry operators of linear Schr\"odinger equations 
and separation variables for such equations was initiated in the same time 
(see \cite{Miller1977} and references therein).

After this ``initial'' stage of research into linear Schr\"odinger equations, 
the study of Lie symmetries was extended to various nonlinear Schr\"odinger equations 
\cite{Doebner&Goldin1994,Fushchich&Moskaliuk1981,Gagnon88a,Gagnon89a,Gagnon89b,Gagnon89c,Gagnon93,Nattermann&Doebner1996,%
Popovych&Ivanova&Eshraghi2003CubicLanl,Popovych&Ivanova&Eshragi2004,Zhdanov&Roman2000}. 
However, the group classification of linear Schr\"odinger equations 
with arbitrary complex-valued potentials still remains an open problem.

Our philosophy is that symmetries underlie physical theories 
and that it is therefore reasonable to look for physically relevant models 
from a set of models (with undetermined parameters) using symmetry criteria.
The selection of possible models is made first
by solving the group classification problem for the (class of) models at hand 
and then choosing a suitable model (or set of models) 
from the list of models obtained in the classification procedure. 
This procedure consists essentially of two parts: 
given a parameterized class of models, 
first determine the symmetry group that is common for all models from the class 
and then describe models admitting symmetry groups 
that are extensions of this common symmetry group~\cite{Ovsiannikov1982}. 

In this paper we carry out the group classification of 
(1+1)-dimensional linear Schr\"odinger equations with complex-valued potentials, having the general form
\begin{gather} \label{LinSchEqs}
i\psi_t+\psi_{xx}+V(t,x)\psi=0,
\end{gather}
where $\psi$ is an unknown complex-valued function  of two real independent variables~$t$ and~$x$ 
and $V$ is an arbitrary smooth complex-valued potential also depending on~$t$ and~$x$.
To achieve this, we apply the algebraic method of group classification (which we describe further on in this paper) 
and reduce the problem of the group classification of the class~\eqref{LinSchEqs} to the classification 
of appropriate subalgebras of the associated equivalence 
algebra~\cite{Bihlo&Cardoso-Bihlo&Popovych2012,Popovych&Kunzinger&Eshragi2010}.
In order to reduce the standard form of Schr\"odinger equations to the form~\eqref{LinSchEqs}, 
we scale $t$ and~$x$ and change the sign of~$V$. 
Note that the larger class of (1+1)-dimensional linear Schr\"odinger equations 
with ``real'' variable mass $m=m(t,x)\ne0$ can be mapped to the class~\eqref{LinSchEqs} 
by a family of point equivalence transformations 
in a way similar to that of gauging  coefficients in linear evolution equations, 
cf.\ \cite{Lie1881withTrans,Ovsiannikov1982,Popovych&Kunzinger&Ivanova2008}. 
Hence the group classification of the class~\eqref{LinSchEqs} also provides 
the group classification of this larger class.

A particular feature of the above equations is that the independent variables $t, x$, on the one hand,
and the dependent variable $\psi$ and arbitrary element $V$, on the other hand, belong to different fields. 
This feature needs a delicate treatment of objects involving~$\psi$ or $V$. 
It is possible to consider Schr\"odinger equations from a ``real perspective'' by representing them  
as systems of two equations for the real and the imaginary parts of~$\psi$, 
but such a representation will only complicate the whole discussion. 
The use of the absolute value and the argument of~$\psi$ instead of the real 
and the imaginary parts is even less convenient since it leads to nonlinear systems 
instead of linear ones. 
This is why we work with complex-valued functions. 
We then need to formally extend the space of variables~$(t,x,\psi)$ with~$\psi^*$ 
and the space of the arbitrary element~$V$ with~$V^*$. 
Here and in what follows star denotes the complex conjugate.
In particular, we consider~$\psi^*$ (resp.\ $V^*$) as an argument for all functions 
depending on~$\psi$ (resp.\ $V$), including components of point transformations and of vector fields.
When we restrict a differential function of~$\psi$ 
to the solution set of an equation from the class~\eqref{LinSchEqs}, 
we also take into account the complex conjugate of the equation, that is 
$-i\psi^*_t+\psi^*_{xx}+V^*(t,x)\psi^*=0$. 
However, it is sufficient to test invariance and equivalence conditions 
only for the original equations since the results of this testing 
will be the same for their complex conjugate counterparts. 
Presenting point transformations, we omit the transformation components for~$\psi^*$ and~$V^*$ 
since they are obtained by conjugating those for~$\psi$~and~$V$.

The structure of this paper is the following: 
In Section~\ref{SectionOnClassesOfDEs} we describe 
the general framework of the group classification of classes of differential equations. 
We define various objects related to point transformations and discuss their properties. 
In Section~\ref{SectionOnUniformlySemi-normalizedClasses} 
we extend the algebraic method of group classification 
to uniformly semi-normalized classes of differential equations. 
We compute the equivalence groupoid, the equivalence group and the equivalence algebra 
of the class~\eqref{LinSchEqs} in Section~\ref{equivgr}. 
It turns out that the class~\eqref{LinSchEqs} has rather good transformational properties: 
it is uniformly semi-normalized with respect to linear superposition of solutions.
In Section~\ref{deteq} 
we then analyze the determining equations for the Lie symmetries of equations from the class~\eqref{LinSchEqs}, 
find the kernel Lie invariance algebra of this class
and single out the classifying condition for admissible Lie symmetry extensions.
In Section~\ref{gclas} we study properties of appropriate subalgebras of the equivalence algebra, 
classify them and complete the group classification of the class~\eqref{LinSchEqs}.
In Section~\ref{LinScEqsalternative} we illustrate 
the advantages of the algebraic method of group classification 
by performing the group classification of the class~\eqref{LinSchEqs} in a different way.
The group classification of (1+1)-dimensional linear Schr\"odinger equations with real potentials 
is presented in Section~\ref{LinSchEqssubclassreal case}.
In the final section we summarize results of the paper.  

\section{Group classification in classes of differential equations}\label{SectionOnClassesOfDEs}

In this section we give the definitions and notation needed 
for the group classification of differential equations. 
For more details see~\cite{Bihlo&Cardoso-Bihlo&Popovych2012,Bihlo&Popovych2017,Opanasenko&Bihlo&Popovych2017,Ovsiannikov1982,Popovych&Kunzinger&Eshragi2010}. 

We begin with a definition of the notion of {\it class} of differential equations.
Let $\mathcal L_\theta$ be a system~$L(x,u_{(p)},\theta_{(q)}(x,u_{(p)}))=0$ of $l$ differential equations $L^1=0$, \dots, $L^l=0$
parameterized by a tuple of arbitrary elements $\theta(x,u_{(p)})=(\theta^1(x,u_{(p)}),
\dots,\theta^k(x,u_{(p)}))$, where $x=(x_1,\dots,x_n)$ is the tuple of independent variables 
and $u_{(p)}$ is the set of the dependent variables $u=(u^1,\dots,u^m)$ together with 
all derivatives of $u$ with respect to $x$ of order less than or equal to $p$. 
The symbol $\theta_{(q)}$ stands for the set of partial derivatives of $\theta$ 
of order less than or equal to $q$ with respect to the variables $x$ and $u_{(p)}$.
The tuple of arbitrary elements $\theta$ runs through the set $\mathcal S$ of solutions 
of an auxiliary system of differential equations $S(x,u_{(p)},\theta_{(q')}(x,u_{(p)}))=0$ and 
differential inequalities $\Sigma(x,u_{(p)},\theta_{(q')}(x,u_{(p)}))\ne 0$ 
(other kinds of inequalities may also appear here), 
in which both $x$ and $u_{(p)}$ play the role of independent variables and $S$ 
and $\Sigma$ are tuples of smooth functions depending on $x,u_{(p)}$ and $\theta_{(q')}$. 
The set $\{\mathcal L_\theta\mid\theta\in\mathcal S\}=:\mathcal L|_{\mathcal S}$ 
is called a {\it class (of systems) of differential equations} 
that is defined by the parameterized form of systems~$\mathcal L_\theta$ 
and the set~$\mathcal S$ run by the arbitrary elements~$\theta$.

Thus, for the class~\eqref{LinSchEqs} we have two partial differential equations (including the complex conjugate equation) 
for two (formally unrelated) dependent variables~$\psi$ and~$\psi^*$ of two independent variables~$t$ and~$x$, 
and two (formally unrelated) arbitrary elements $\theta=(V,V^*)$, which depend only on $t$ and $x$.
Therefore, the auxiliary system for the arbitrary elements of the class~\eqref{LinSchEqs} is
\begin{gather*}
V_\psi=V_{\psi^*}= V_{\psi_t}=V_{\psi^*_t}=V_{\psi_x}=V_{\psi^*_x}=
V_{\psi_{tt}}=V_{\psi^*_{tt}}=V_{\psi_{tx}}=V_{\psi^*_{tx}}=V_{\psi_{xx}}=V_{\psi^*_{xx}}=0,
\\
V^*_\psi=V^*_{\psi^*}= V^*_{\psi_t}=V^*_{\psi^*_t}=V^*_{\psi_x}=V^*_{\psi^*_x}=
V^*_{\psi_{tt}}=V^*_{\psi^*_{tt}}=V^*_{\psi_{tx}}=V^*_{\psi^*_{tx}}=V^*_{\psi_{xx}}=V^*_{\psi^*_{xx}}=0.
\end{gather*}

For a class of differential equations $\mathcal L|_{\mathcal S}$, 
there are objects of various structures that consist of point transformations related to this class.
Let $\mathcal L_\theta$ and $\mathcal L_{\tilde\theta}$ be systems belonging to~$\mathcal L|_{\mathcal S}$. 
We denote by $\mathrm T(\theta,\tilde\theta)$ the set of point transformations in the space of the variables $(x,u)$ 
that map~$\mathcal L_\theta$ to $\mathcal L_{\tilde\theta}$.

An {\it admissible transformation} of the class $\mathcal L|_{\mathcal S}$ is a triple $(\theta,\tilde\theta, \varphi)$
consisting of two arbitrary elements $\theta,\tilde\theta\in \mathcal S$ such that $\mathrm T(\theta,\tilde\theta)\ne\varnothing$
and a point transformation $\varphi \in \mathrm T(\theta,\tilde\theta)$. 
The set of all admissible transformations of the class~$\mathcal L|_{\mathcal S}$, 
\[\mathcal G^\sim=\mathcal G^\sim(\mathcal L|_{\mathcal S}):=\{(\theta,\tilde\theta,\varphi)\mid 
\theta,\tilde\theta\in \mathcal S,\varphi \in \mathrm T(\theta,\tilde\theta)\},\] 
has the structure of a groupoid: 
for any $\theta\in\mathcal S$ the triple $(\theta,\theta,{\rm id})$,  
where ${\rm id}$ is the identity point transformation, is an element of $\mathcal G^\sim$, 
every $(\theta,\tilde\theta, \varphi)\in\mathcal G^\sim$ is invertible 
and $\mathcal G^\sim$ is closed under composition. 
This is why the set~$\mathcal G^\sim$
is called the {\it equivalence groupoid} of the class~$\mathcal L|_{\mathcal S}$. 

The {\it (usual) equivalence (pseudo)group $G^\sim=G^\sim(\mathcal L|_{\mathcal S})$} 
of the class $\mathcal L|_{\mathcal S}$ is defined as being the set of point transformations 
in the joint space of independent and dependent variables, their derivatives and arbitrary elements 
with local coordinates $(x,u_{(p)},\theta)$
that are projectable to the space of $(x,u_{(p')})$ for any $0\leqslant p'\leqslant p$, 
preserve the contact structure on the space with local coordinates $(x,u_{(p)})$,
and map every system from the class~$\mathcal L|_{\mathcal S}$ to a system from the same class. 
Elements of the group~$G^\sim$ are called {\it equivalence transformations}. 
This definition includes two fundamental conditions for general equivalence transformations: 
the preservation of the class~$\mathcal L|_{\mathcal S}$ 
and the preservation of the contact structure on the space with local coordinates $(x,u_{(p)})$. 
The conditions of projectability and locality with respect to arbitrary elements can be weakened, 
and this leads to various generalizations of the notion of equivalence group (see~\cite{Popovych&Kunzinger&Eshragi2010}). 
Note that each equivalence transformation~$\mathcal T\in G^\sim$ generates 
a family of admissible transformations from~$\mathcal G^\sim$, 
$G^\sim\ni\mathcal T\rightarrow\{(\theta,\mathcal T\theta,\mathcal T|_{(x,u)})\mid \theta\in\mathcal S\}\subset\mathcal G^\sim$, 
where $\mathcal T|_{(x,u)}$ is the restriction of~$\mathcal T$ to the space of~$(x,u)$. 
For a generalized equivalence group, 
the restriction of~$\mathcal T$ is made after fixing a value of~$\theta$, 
which can be denoted as $\mathcal T^\theta|_{(x,u)}$.
\looseness=-1

The equivalence group of a subclass~$\mathcal L|_{\mathcal S'}$, $\mathcal S'\subset\mathcal S$, 
of the class~$\mathcal L|_{\mathcal S}$ is called 
a \emph{conditional equivalence group} of the class~$\mathcal L|_{\mathcal S}$ 
that is associated with the subclass~$\mathcal L|_{\mathcal S'}$. 
A useful way of describing the equivalence groupoid~$\mathcal G^\sim$ 
is to classify maximal conditional equivalence groups of the class~$\mathcal L|_{\mathcal S}$ 
up to $G^\sim$-equivalence 
and then to classify (up to an appropriate conditional equivalence) the admissible transformations 
that are not generated by conditional equivalence transformations (see~\cite{Popovych&Kunzinger&Eshragi2010} for more details). 
 
The {\it equivalence algebra} $\mathfrak g^\sim=\mathfrak g^\sim(\mathcal L|_{\mathcal S})$ 
of the class $\mathcal L|_{\mathcal S}$ is defined as the set of generators of one-parameter groups 
of equivalence transformations of the class~$\mathcal L|_{\mathcal S}$.
These generators are vector fields in the space of $(x,u_{(p)},\theta)$, 
that are projectable to the space of $(x,u_{(p')})$ for any $0\leqslant p'\leqslant p$ 
and whose projections to the space of $(x,u_{(p)})$ are the $p$th order prolongations 
of the corresponding projections to the space of $(x,u)$.

The {\it maximal point symmetry (pseudo)group} $G_\theta$ of the system $\mathcal L_\theta$ 
(for a fixed $\theta\in \mathcal S$) is a (pseudo)group of transformations  
that act in the space of independent and dependent variables 
that preserve the solution set of the system $\mathcal L_\theta$. 
Each $G_\theta$ can be interpreted as a vertex group of the equivalence groupoid~$\mathcal G^\sim$.
The intersection $G^\cap=G^\cap(\mathcal L|_{\mathcal S}):=\bigcap_{\theta\in \mathcal S} 
G_\theta$ of all $G_\theta$, $\theta\in \mathcal S$, 
is called the {\it kernel of the maximal point symmetry groups} of systems from~$\mathcal L|_{\mathcal S}$.

The vector fields in the space of $(x,u)$ generating one-parameter subgroups 
of the maximal point symmetry group $G_\theta$ of the system $\mathcal L_\theta$
form a Lie algebra~$\mathfrak g_\theta$ with the Lie bracket defined by commutators of vector fields. 
It is called the {\it maximal Lie invariance algebra} of $\mathcal L_\theta$.
The \emph{kernel invariance algebra} of the class~$\mathcal L|_{\mathcal S}$ is the intersection 
$\mathfrak g^\cap=\mathfrak g^\cap(\mathcal L|_{\mathcal S}):= \bigcap_{\theta\in\mathcal S}\mathfrak g_\theta$ 
of the algebras~$\mathfrak g_\theta$, $\theta\in\mathcal S$.

The classical \emph{group classification problem} for the class~$\mathcal L|_{\mathcal S}$ is 
to list all $G^\sim$-inequivalent values of $\theta\in\mathcal S$ 
for which the corresponding  maximal Lie invariance algebras, $\mathfrak g_\theta$, 
are larger than the kernel invariance algebra~$\mathfrak g^\cap$. 
There may be additional point equivalences between the cases obtained in this way 
and these additional equivalences have then to be incorporated into the results. 
This solves the group classification problem for the class~$\mathcal L|_{\mathcal S}$
up to $\mathcal G^\sim$-equivalence. 

Summing up, objects to be found 
in the course of group classification of the class~$\mathcal L|_{\mathcal S}$ include 
the equivalence groupoid~$\mathcal G^\sim$, 
the equivalence group~$G^\sim$, 
the equivalence algebra~$\mathfrak g^\sim$, 
the kernel invariance algebra~$\mathfrak g^\cap$ and 
a complete list of $G^\sim$-inequivalent (resp.\ $\mathcal G^\sim$-inequivalent) values of $\theta$ 
with the corresponding Lie symmetry extensions of~$\mathfrak g^\cap$. 
Additional point equivalences between classification cases can be computed directly
via looking for pairs of cases with similar Lie invariance algebras
(if two systems are equivalent under an invertible point transformation, 
then their Lie symmetry algebras are isomorphic).
Therefore, the construction of the equivalence groupoid~$\mathcal G^\sim$ can be excluded 
from the procedure of group classification  
if this groupoid is of complicated structure, e.g., 
due to the involved hierarchy of maximal conditional equivalence groups of the class~$\mathcal L|_{\mathcal S}$. 

The classical way of performing a group classification of a class~$\mathcal L|_{\mathcal S}$ 
is to use the \emph{infinitesimal invariance criterion}~\cite{Olver1993,Ovsiannikov1982}: 
under an appropriate nondegeneracy condition for the system $\mathcal L_\theta\in\mathcal L|_{\mathcal S}$,
a vector field $Q=\xi^j(x,u)\p_{x_j}+\eta^a(x,u)\p_{u^a}$ belongs to 
the maximal Lie invariance algebra $\mathfrak g_\theta$ of~$\mathcal L_\theta$ 
if and only if the condition 
\[Q_{(p)}L(x,u_{(p)},\theta_{(q)}(x,u_{(p)}))=0,\]
holds on the manifold $\mathcal L_\theta^p$ defined by the system $\mathcal L_\theta$ 
together with its differential consequences in the jet space $J^{(p)}$. 
Here the indices $j$ and $a$ run from $1$ to $n$ and from $1$ to $m$, respectively, 
and we  use the summation convention for repeated indices. 
$Q_{(p)}$ denotes the standard $p$th prolongation of the vector field~$Q$,
\[
Q_{(p)}=Q+\sum_{0<|\alpha|\leqslant p}\Big(D_1^{\alpha_1}\cdots D_n^{\alpha_n}
\left(\eta^a-\xi^ju^a_j\right)+\xi^ju^a_{\alpha+\delta_j}\Big)\p_{u_{\alpha}^a}.
\]
The tuple $\alpha=(\alpha_1,\dots,\alpha_n)$ is a multiindex, 
$\alpha_j\in \mathbb N\cup 0,|\alpha|:=\alpha_1+\dots+\alpha_n$, 
and $\delta_j$ is the multiindex whose $i$th entry equals~1 and whose other entries are zero. 
The variable~$u^a_{\alpha}$ of the jet space $J^{(p)}$ is identified 
with the derivative $\p^{|\alpha|}u^a/\p x^{\alpha_1}_1\dots\p x^{\alpha_n}_n$.
$D_j=\p_j+u^a_{\alpha+\delta_j}\p_{u^a_{\alpha}}$ is the total derivative operator 
with respect to the variable $x_j$. 

The infinitesimal invariance criterion yields the \emph{system of determining equations} 
for the components of the generators of the one-parameter Lie symmetry groups of systems from the class~$\mathcal L|_{\mathcal S}$, 
where the arbitrary elements~$\theta$ play the role of parameters.
Integrating those determining equations that do not involve arbitrary elements 
gives a preliminary form of the generator components, 
and one must then solve the remaining equations. 
The solution of these remaining equations depends on the values of the arbitrary elements. 
We call these equations the \emph{classifying equations} for the class~$\mathcal L|_{\mathcal S}$. 

In order to find the kernel invariance algebra~$\mathfrak g^\cap$, 
one must first split the determining equations 
with respect to parametric derivatives of arbitrary elements and of dependent variables 
and then solve the system obtained.

Finding Lie symmetry extensions of the kernel Lie algebra depends on an analysis of the classifying equations. 
This part of solving the group classification problem is intricate and 
various techniques are used to obtain the solution. There are two main approaches. 
If the class considered has a simple structure 
(for example, when arbitrary elements are constants or are functions of just one argument), 
then the techniques used rely on the study of the compatibility of the classifying equations 
and their direct solution with respect to both the components of Lie symmetry generators and the arbitrary elements 
(up to the equivalence defined by the equivalence group). 
See, for instance, \cite{Bluman&Kumei,Ovsiannikov1982,Popovych&Ivanova2003a,Vaneeva&Popovych&Sophocleous2009,Vaneeva&Popovych&Sophocleous2012} and the references given there.
For more complicated classes, the direct approach seems to be irrelevant, 
and more advanced algebraic techniques need to be used.

\section{Uniformly semi-normalized classes}\label{SectionOnUniformlySemi-normalizedClasses}

In the most general setting, the main point of the algebraic approach to group classification is 
to classify (up to certain equivalence relation induced by point transformations 
between systems belonging to the class~$\mathcal L|_{\mathcal S}$ under study) 
certain Lie algebras of vector fields related to these systems.%
\footnote{See, e.g., \cite{Basarab&Lahno&Gungor,Basarab&Lahno&Zhadonov,Gagnon93,Gazeau92,Magadeev1993,Popovych&Ivanova&Eshraghi2003CubicLanl,Popovych&Ivanova&Eshragi2004,Zhdanov&Lahno1999} 
and~\cite{Bihlo&Cardoso-Bihlo&Popovych2012,Bihlo&Popovych2017,Opanasenko&Bihlo&Popovych2017,Popovych&Kunzinger&Eshragi2010}
for the application of the preliminary and advanced versions of the algebraic method, respectively, 
to various classes of differential equations.%
}  
The key problem is to select sets of vector fields to be classified 
and the equivalence relation to be used in this classification~\cite{Bihlo&Cardoso-Bihlo&Popovych2012}.
For the application of the algebraic method to be effective, 
the selected objects have to satisfy certain consistency conditions 
which then require particular properties of the equivalence groupoid~$\mathcal G^\sim$ of~$\mathcal L|_{\mathcal S}$. To this end, we begin with some definitions which enable us to formulate our approach. 

We say that the class~$\mathcal L|_{\mathcal S}$ is {\it normalized} 
if its equivalence groupoid~$\mathcal G^\sim$ is generated by its equivalence group~$G^\sim$.
We say that it is {\it semi-normalized}
if the equivalence groupoid~$\mathcal G^\sim$ is generated by 
transformations from~$G^\sim$ and point symmetry transformations of the corresponding source or target systems. 
It is clear that any normalized class of differential equations is semi-normalized.
Normalized classes are especially convenient when one applies the algebraic method of group classification.
If the class~$\mathcal L|_{\mathcal S}$ is normalized, 
then the Lie symmetry extensions of its kernel invariance algebra are obtained via 
the classification of appropriate subalgebras of the equivalence algebra 
whose projections onto the space with local coordinates~$(x,u)$ coincide with 
the maximal Lie invariance algebras of systems from~$\mathcal L|_{\mathcal S}$. 
The property of semi-normalization is useful for determining equivalences between Lie symmetry extensions 
but not for finding such extensions. 
For rigorous definitions and more details, we refer the reader to~\cite{Bihlo&Cardoso-Bihlo&Popovych2012,Popovych&Kunzinger&Eshragi2010}.

Classes of differential equations that are not normalized but have stronger normalization properties 
than semi-normalization often appear in physical applications. 
This is why it is important to weaken the normalization property in such a way 
that still allows us to apply group classification techniques 
analogous to those developed for normalized classes.

\begin{definition}\label{DefinitionOfUniformlySemi-normalizedClasses}
Given a class of differential equations~$\mathcal L|_{\mathcal S}$ 
with equivalence groupoid~$\mathcal G^\sim$ and (usual) equivalence group~$G^\sim$,%
\footnote{A subgroup of the equivalence group can be considered here instead of the entire group.%
} 
suppose that for each $\theta\in\mathcal S$ the point symmetry group~$G_\theta$ 
of the system~$\mathcal L_\theta\in\mathcal L|_{\mathcal S}$ contains a subgroup~$N_\theta$
such that the family $\mathcal N_{\mathcal S}=\{N_\theta\mid\theta\in\mathcal S\}$ of all these subgroups satisfies the following properties:
\begin{enumerate}
\item
$\mathcal T|_{(x,u)}\notin N_\theta$ 
for any $\theta\in\mathcal S$ and any $\mathcal T\in G^\sim$ with $\mathcal T\neq {\rm id}$.
\item
$N_{\mathcal T\theta}=\mathcal T|_{(x,u)} N_\theta(\mathcal T|_{(x,u)})^{-1}$ 
for any $\theta\in\mathcal S$ and any $\mathcal T\in G^\sim$.
\item
For any $(\theta^1,\theta^2,\varphi)\in\mathcal G^\sim$ there exist 
$\varphi^1\in N_{\theta^1}$, $\varphi^2\in N_{\theta^2}$ and $\mathcal T\in G^\sim$ such that 
$\theta^2=\mathcal T\theta^1$ and $\varphi=\varphi^2(\mathcal T|_{(x,u)})\varphi^1$.
\end{enumerate}
Here $\mathcal T|_{(x,u)}$ denotes the restriction of~$\mathcal T$ to the space with local coordinates~$(x,u)$. 
We then say that the class of differential equations~$\mathcal L|_{\mathcal S}$ is \emph{uniformly semi-normalized} 
with respect to the symmetry-subgroup family~$\mathcal N_{\mathcal S}$.
\end{definition}

The qualification ``uniformly'' is justified by the fact that in practically relevant examples of such classes 
all the subgroups~$N_\theta$'s are isomorphic or at least of a similar structure (in particular, of the same dimension). 
The first property in Definition~\ref{DefinitionOfUniformlySemi-normalizedClasses} means that 
the intersection of each subgroup~$N_\theta$ with the restriction of~$G^\sim$ to the space of~$(x,u)$ 
is just the identity transformation.
The second property can be interpreted as
equivariance of equivalence transformations with respect to~$\mathcal N_{\mathcal S}$.
The third property means, essentially, that the entire equivalence groupoid~$\mathcal G^\sim$ 
is generated by equivalence transformations and transformations from uniform point symmetry groups.
One of the symmetry transformations $\varphi^1$ or $\varphi^2$ in the last property may be taken to be the identity. 

Each normalized class of differential equations is uniformly semi-normalized with respect to the trivial family~$\mathcal N_{\mathcal S}$, 
where for each~$\theta$ the group~$N_\theta$ consists of just the identity transformation.
It is also obvious that each uniformly semi-normalized class is semi-normalized.
At the same time, there exist semi-normalized classes that are not uniformly semi-normalized. 
A simple example of such a class is given by the class~${\rm ND}$ of nonlinear diffusion equations of the form $u_t=(f(u)u_x)_x$ with $f_u\ne0$, 
which is a classic example in the group analysis of differential equations~\cite{Ovsiannikov1959,Ovsiannikov1982}.
Such equations with special power nonlinearities of the form $f=c_1(u+c_0)^{-4/3}$ 
have singular symmetry properties within the class~${\rm ND}$. 
This fact does not allow the class~${\rm ND}$ to be normalized, 
although it is semi-normalized. 
The elements from~$\mathcal G^\sim({\rm ND})$ that are not generated by elements of~$G^\sim({\rm ND})$ 
are given by the equivalence transformations of equations with the above power nonlinearities 
that are composed with conformal symmetry transformations of these equations. 
The nonlinear diffusion equation with $f=c_1(u+c_0)^{-4/3}$ 
admits the conformal symmetry group with infinitesimal generator $x^2\p_x-3x(u+c_0)\p_u$, 
but this is not a normal subgroup of the point symmetry group of the equation.

The following result, which we call \emph{the theorem on splitting symmetry groups in uniformly semi-normalized classes}, 
provides a theoretical basis for the algebraic method of group classification of such classes. 

\begin{theorem}\label{TheoremOnSplittingSymGroupsInUniformlySemi-normalizedClasses}
Let a class of differential equations~$\mathcal L|_{\mathcal S}$ be uniformly semi-normalized 
with respect to a symmetry-subgroup family~$\mathcal N_{\mathcal S}=\{N_\theta\mid\theta\in\mathcal S\}$.
Then for each $\theta\in\mathcal S$ the point symmetry group~$G_\theta$ of the system~$\mathcal L_\theta\in\mathcal L|_{\mathcal S}$ 
splits over~$N_\theta$. 
More specifically, $N_\theta$ is a normal subgroup of~$G_\theta$, 
$G^{\rm ess}_\theta=G^\sim|_{(x,u)}\cap G_\theta$ is a subgroup of~$G_\theta$, 
and the group~$G_\theta$ is the semidirect product of~$G^{\rm ess}_\theta$ acting on~$N_\theta$, 
$G_\theta=G^{\rm ess}_\theta\ltimes N_\theta$.  
Here $G^\sim|_{(x,u)}$ denotes the restriction of~$G^\sim$ to the space with local coordinates~$(x,u)$,  
$G^\sim|_{(x,u)}=\{\mathcal T|_{(x,u)}\mid \mathcal T\in G^\sim\}$.
\end{theorem}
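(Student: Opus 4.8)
The plan is to carry out the entire argument inside the equivalence groupoid, exploiting the identification of the point symmetry group~$G_\theta$ with the vertex group of~$\mathcal{G}^\sim$ at~$\theta$: a point transformation~$\varphi$ on the $(x,u)$-space belongs to~$G_\theta$ precisely when $(\theta,\theta,\varphi)\in\mathcal{G}^\sim$. Under this identification the three conditions of Definition~\ref{DefinitionOfUniformlySemi-normalizedClasses} translate directly into the three ingredients of an internal semidirect product, and the whole proof becomes a sequence of elementary group manipulations.

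First I would settle the subgroup claim. Since every $\mathcal{T}\in G^\sim$ is projectable to the $(x,u)$-space, the restriction map $\mathcal{T}\mapsto\mathcal{T}|_{(x,u)}$ is a homomorphism, so $G^\sim|_{(x,u)}$ is a group; being the intersection of the two subgroups $G^\sim|_{(x,u)}$ and $G_\theta$ of the pseudogroup of point transformations, $G^{\rm ess}_\theta$ is itself a subgroup of~$G_\theta$. The key observation that drives everything else is this: if $\mathcal{T}\in G^\sim$ satisfies $\mathcal{T}\theta=\theta$, then $\mathcal{T}$ maps $\mathcal{L}_\theta$ to $\mathcal{L}_{\mathcal{T}\theta}=\mathcal{L}_\theta$, so that $\mathcal{T}|_{(x,u)}$ lies simultaneously in $G^\sim|_{(x,u)}$ and in $G_\theta$, i.e.\ in $G^{\rm ess}_\theta$. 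Applying the third property of Definition~\ref{DefinitionOfUniformlySemi-normalizedClasses} to a triple $(\theta,\theta,\varphi)$ with coinciding source and target then produces $\varphi=\varphi^2\,(\mathcal{T}|_{(x,u)})\,\varphi^1$ with $\mathcal{T}\theta=\theta$; here $\varphi^1\in N_\theta$, while $\varphi^2\in N_{\mathcal{T}\theta}=N_\theta$ by the second property, and $g:=\mathcal{T}|_{(x,u)}\in G^{\rm ess}_\theta$. Thus every element of~$G_\theta$ factors as $\varphi=\varphi^2 g\varphi^1$ with $\varphi^1,\varphi^2\in N_\theta$ and $g\in G^{\rm ess}_\theta$.

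Next I would extract normality and collapse this factorization. The second property of Definition~\ref{DefinitionOfUniformlySemi-normalizedClasses}, applied with $\mathcal{T}\theta=\theta$, gives $gN_\theta g^{-1}=N_\theta$ for every $g\in G^{\rm ess}_\theta$, so $G^{\rm ess}_\theta$ normalizes~$N_\theta$. Writing $g\varphi^1=(g\varphi^1 g^{-1})g$ with $g\varphi^1 g^{-1}\in N_\theta$ collapses the above factorization to $\varphi=ng$ with $n\in N_\theta$ and $g\in G^{\rm ess}_\theta$, whence $G_\theta=N_\theta\cdot G^{\rm ess}_\theta$. Since $G_\theta$ is therefore generated by $N_\theta$ and $G^{\rm ess}_\theta$, both of which normalize~$N_\theta$, the subgroup $N_\theta$ is normal in~$G_\theta$. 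Finally, the first property of Definition~\ref{DefinitionOfUniformlySemi-normalizedClasses} says exactly that $N_\theta\cap G^\sim|_{(x,u)}=\{{\rm id}\}$, hence $N_\theta\cap G^{\rm ess}_\theta=\{{\rm id}\}$. Normality of~$N_\theta$, the subgroup property of~$G^{\rm ess}_\theta$, the trivial intersection and the product decomposition $G_\theta=N_\theta G^{\rm ess}_\theta$ together constitute precisely the internal semidirect product $G_\theta=G^{\rm ess}_\theta\ltimes N_\theta$.

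All the manipulations are routine once the groupoid dictionary is in place; the one step that genuinely requires care — and which I regard as the crux — is the passage from the groupoid condition $\mathcal{T}\theta=\theta$ to the membership $\mathcal{T}|_{(x,u)}\in G^{\rm ess}_\theta$, together with the bookkeeping of which subgroup $N_{(\cdot)}$ each symmetry factor in the third property belongs to when the source and target arbitrary elements coincide. Everything else follows from the standard internal characterization of semidirect products.
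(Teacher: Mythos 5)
Your proof is correct and takes essentially the same route as the paper's: apply property 3 of Definition~\ref{DefinitionOfUniformlySemi-normalizedClasses} to vertex triples $(\theta,\theta,\varphi)$ to obtain the factorization through $G^{\rm ess}_\theta=G^\sim|_{(x,u)}\cap G_\theta$, use property 1 for the trivial intersection, and use property 2 for the conjugation invariance of $N_\theta$. The only cosmetic differences are that you keep both symmetry factors $\varphi^1,\varphi^2$ and collapse them afterwards (the paper takes $\varphi^2={\rm id}$ from the outset, as the remark after the definition permits) and that you obtain normality of $N_\theta$ via the normalizer-generation argument rather than the paper's direct computation showing $\varphi\tilde\varphi\varphi^{-1}\in N_\theta$.
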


\begin{proof}
We fix an arbitrary $\theta\in\mathcal S$ and take an arbitrary~$\varphi\in G_\theta$. 
Then $(\theta,\theta,\varphi)\in\mathcal G^\sim$ and, 
by the third property in Definition~\ref{DefinitionOfUniformlySemi-normalizedClasses}, 
the transformation~$\varphi$ admits the factorization $\varphi=\mathcal T|_{(x,u)}\varphi^1$ 
for some $\mathcal T\in G^\sim$ and some $\varphi^1\in N_\theta$. 
The element~$N_\theta$ of the family~$\mathcal N_{\mathcal S}$ is a subgroup of~$G_\theta$, $N_\theta<G_\theta$ 
and hence the transformation $\varphi^0:=\mathcal T|_{(x,u)}=\varphi(\varphi^1)^{-1}$ 
also belongs to~$G_\theta$, and consequently to $G^\sim|_{(x,u)}\cap G_\theta=:G^{\rm ess}_\theta$, 
which is a subgroup of~$G_\theta$ as it is the intersection of two groups. 
From this, it follows that for any~$\varphi\in G_\theta$ we have the representation $\varphi=\varphi^0\varphi^1$, 
where $\varphi^0\in G^{\rm ess}_\theta$ and $\varphi^1\in N_\theta$.

The first property of Definition~\ref{DefinitionOfUniformlySemi-normalizedClasses} means that 
the intersection of $G^\sim|_{(x,u)}$ and $N_\theta$ consists of just the identity transformation  
so that the intersection $G^{\rm ess}_\theta\cap N_\theta$ contains only the identity transformation. 

For an arbitrary~$\varphi\in G_\theta$ and an arbitrary~$\tilde\varphi\in N_\theta$, 
we consider the composition $\varphi\tilde\varphi\varphi^{-1}$. 
As an element of~$G_\theta$, the transformation~$\varphi$ admits 
the factorization $\varphi=\varphi^0\varphi^1$
with some $\varphi^0\in G^{\rm ess}_\theta$ and some $\varphi^1\in N_\theta$. 
Since $G^{\rm ess}_\theta<G^\sim|_{(x,u)}$, there exists a $\mathcal T\in G^\sim$ 
such that $\mathcal T\theta=\theta$ and $\varphi^0=\mathcal T|_{(x,u)}$. 
By property 2 of Definition~\ref{DefinitionOfUniformlySemi-normalizedClasses},
we obtain $N_\theta=\varphi^0N_\theta(\varphi^0)^{-1}$. 
Hence the composition 
$\varphi\tilde\varphi\varphi^{-1}=
\varphi^0\varphi^1\tilde\varphi(\varphi^1)^{-1}(\varphi^0)^{-1}$ belongs to~$N_\theta$. 
Thus we have that $N_\theta$ is a normal subgroup of~$G_\theta$, $N_\theta\lhd G_\theta$, and so 
$G_\theta=G^{\rm ess}_\theta\ltimes N_\theta$.
\end{proof}

The members of the family~$\mathcal N_{\mathcal S}=\{N_\theta\mid\theta\in\mathcal S\}$ 
are called \emph{uniform point symmetry groups} 
of the systems from the class~$\mathcal L|_{\mathcal S}$, 
and the subgroup~$G^{\rm ess}_\theta$ is called the \emph{essential point symmetry group} of the system~$\mathcal L_\theta$ 
associated with the uniform point symmetry group~$N_\theta$. 
The knowledge of a family of uniform point symmetry groups trivializes them in the following sense: 
since~$G_\theta$ splits over~$N_\theta$ for each~$\theta$,
then we only need to find the subgroup~$G^{\rm ess}_\theta$ in order to construct~$G_\theta$.

The infinitesimal version of Theorem~\ref{TheoremOnSplittingSymGroupsInUniformlySemi-normalizedClasses} 
may be called \emph{the theorem on splitting invariance algebras in uniformly semi-normalized classes}. 
This version follows immediately from Theorem~\ref{TheoremOnSplittingSymGroupsInUniformlySemi-normalizedClasses} 
if we replace the groups with the corresponding algebras of generators of the one-parameter subgroups 
of these groups. 

\begin{theorem}\label{TheoremOnSplittingInvAlgebrasInUniformlySemi-normalizedClasses}
Suppose that a class of differential equations~$\mathcal L|_{\mathcal S}$ is uniformly semi-normalized 
with respect to a family of symmetry subgroups~$\mathcal N_{\mathcal S}=\{N_\theta\mid\theta\in\mathcal S\}$.
Then for each $\theta\in\mathcal S$ the Lie algebras~$\mathfrak g^{\rm ess}_\theta$ and~$\mathfrak n_\theta$ 
that are associated with the groups~$G^{\rm ess}_\theta$ and~$N_\theta$
are, respectively, a subalgebra and an ideal of the maximal Lie invariance algebra~$\mathfrak g_\theta$ 
of the system~$\mathcal L_\theta\in\mathcal L|_{\mathcal S}$.
Moreover, the algebra~$\mathfrak g_\theta$ is the semi-direct sum
$\mathfrak g_\theta=\mathfrak g^{\rm ess}_\theta\lsemioplus\mathfrak n_\theta$, 
and $\mathfrak g^{\rm ess}_\theta=\mathfrak g^\sim|_{(x,u)}\cap \mathfrak g_\theta$, 
where $\mathfrak g^\sim|_{(x,u)}$ denotes the restriction of~$\mathfrak g^\sim$ to the space with local coordinates~$(x,u)$.
\end{theorem}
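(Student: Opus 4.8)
The plan is to deduce this algebraic statement from its group-theoretic counterpart, Theorem~\ref{TheoremOnSplittingSymGroupsInUniformlySemi-normalizedClasses}, by passing from each of the groups $G_\theta$, $N_\theta$, $G^{\rm ess}_\theta$ and $G^\sim|_{(x,u)}$ to its associated algebra of infinitesimal generators of one-parameter subgroups, and then transporting the four structural assertions of the group version (subgroup, normal subgroup, semidirect product, and the intersection identity) across this correspondence. Throughout I would use that, by definition, $\mathfrak g_\theta$, $\mathfrak n_\theta$ and $\mathfrak g^{\rm ess}_\theta$ consist precisely of the generators of the one-parameter subgroups of $G_\theta$, $N_\theta$ and $G^{\rm ess}_\theta$, respectively, and that $\mathfrak g^\sim|_{(x,u)}$ is the restriction to $(x,u)$ of the generators of $G^\sim$.

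First I would record the subalgebra claims. Since $N_\theta<G_\theta$ and $G^{\rm ess}_\theta<G_\theta$ by Theorem~\ref{TheoremOnSplittingSymGroupsInUniformlySemi-normalizedClasses}, every one-parameter subgroup of $N_\theta$ (resp.\ of $G^{\rm ess}_\theta$) is also one of $G_\theta$, so $\mathfrak n_\theta\subseteq\mathfrak g_\theta$ and $\mathfrak g^{\rm ess}_\theta\subseteq\mathfrak g_\theta$, and both are closed under the Lie bracket by the standard fact that the generators of a subgroup form a subalgebra. Next I would upgrade normality to the ideal property. Fix $Q\in\mathfrak g_\theta$ and $R\in\mathfrak n_\theta$ and let $\{\varphi_s\}$, $\{\tilde\varphi_t\}$ be the one-parameter subgroups they generate, with $\tilde\varphi_t\in N_\theta$. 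Normality $N_\theta\lhd G_\theta$ gives $\varphi_s\tilde\varphi_t\varphi_s^{-1}\in N_\theta$ for all $s,t$. Differentiating this curve in $t$ at $t=0$ yields $\mathrm{Ad}(\varphi_s)R\in\mathfrak n_\theta$, and differentiating the result in $s$ at $s=0$ yields $[Q,R]\in\mathfrak n_\theta$; hence $[\mathfrak g_\theta,\mathfrak n_\theta]\subseteq\mathfrak n_\theta$, so $\mathfrak n_\theta$ is an ideal of $\mathfrak g_\theta$.

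Then I would assemble the semidirect sum and the intersection identity. The trivial group intersection $G^{\rm ess}_\theta\cap N_\theta=\{{\rm id}\}$ forces $\mathfrak g^{\rm ess}_\theta\cap\mathfrak n_\theta=\{0\}$, since a nonzero common generator would produce a nontrivial one-parameter subgroup lying in both groups. The group decomposition $G_\theta=G^{\rm ess}_\theta\ltimes N_\theta$ then gives, on differentiating the semidirect-product factorization $\varphi=\varphi^0\varphi^1$ along curves through the identity, the vector-space splitting $\mathfrak g_\theta=\mathfrak g^{\rm ess}_\theta\oplus\mathfrak n_\theta$; combined with the subalgebra and ideal properties already established, this is exactly the semidirect sum $\mathfrak g_\theta=\mathfrak g^{\rm ess}_\theta\lsemioplus\mathfrak n_\theta$. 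Finally, the identity $\mathfrak g^{\rm ess}_\theta=\mathfrak g^\sim|_{(x,u)}\cap\mathfrak g_\theta$ follows from $G^{\rm ess}_\theta=G^\sim|_{(x,u)}\cap G_\theta$ because taking generators of one-parameter subgroups commutes with group intersection and with restriction to the space of $(x,u)$.

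The step I expect to be delicate is the passage from groups to algebras itself, rather than any individual computation. The objects here are point-symmetry pseudogroups, which may be infinite-dimensional, so the textbook facts I am invoking --- subgroup maps to subalgebra, normal subgroup to ideal, semidirect product to semidirect sum, and compatibility with both intersection and restriction --- should be justified directly from the one-parameter-subgroup definition of these algebras rather than quoted from finite-dimensional Lie theory. In particular, establishing $\mathfrak g_\theta=\mathfrak g^{\rm ess}_\theta+\mathfrak n_\theta$ requires that a generator of an arbitrary one-parameter subgroup of $G_\theta$ decompose as a sum of a generator in $\mathfrak g^{\rm ess}_\theta$ and one in $\mathfrak n_\theta$, which is precisely what the semidirect structure of $G_\theta$ supplies once the factorization $\varphi=\varphi^0\varphi^1$ is differentiated. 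This is why the paper can legitimately say the result \emph{follows immediately}, provided the smoothness of the pseudogroup action guarantees that these differentiations are valid and that the relevant curves stay in the respective (closed) subalgebras.
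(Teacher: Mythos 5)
Your proposal takes essentially the same route as the paper: the paper's entire proof is the remark that the theorem ``follows immediately'' from Theorem~\ref{TheoremOnSplittingSymGroupsInUniformlySemi-normalizedClasses} by replacing each group with the algebra of generators of its one-parameter subgroups, which is precisely the correspondence you spell out. Your elaboration of the differentiation arguments (conjugation yielding the ideal property, factorization of one-parameter subgroups yielding the vector-space splitting) and your caveat about justifying these steps for pseudogroups rather than finite-dimensional Lie groups are correct and simply make explicit what the paper leaves implicit.
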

 
The group classification problem for a uniformly semi-normalized class~$\mathcal L|_{\mathcal S}$ is solved in the following way: 
when computing the equivalence groupoid~$\mathcal G^\sim$ and analyzing its structure, 
we construct a family of uniform point symmetry groups $\mathcal N_{\mathcal S}=\{N_\theta\mid\theta\in\mathcal S\}$, 
which then establishes the uniformly semi-normalization of the class~$\mathcal L|_{\mathcal S}$ 
and yields the corresponding uniform Lie invariance algebras~$\mathfrak n_\theta$'s.
The subgroup $G^{\rm ess}_\theta=G^\sim|_{(x,u)}\cap G_\theta$ 
and the subalgebra~$\mathfrak g^{\rm ess}_\theta=\mathfrak g^\sim|_{(x,u)}\cap \mathfrak g_\theta$ 
which are the complements of $N_\theta$ and~$\mathfrak n_\theta$ respectively, are in general not known on this step. 
By Theorem~\ref{TheoremOnSplittingInvAlgebrasInUniformlySemi-normalizedClasses}, 
we have for each $\theta\in\mathcal S$ that the maximal Lie invariance algebra~$\mathfrak g_\theta$ 
of the system~$\mathcal L_\theta$ is given by the semi-direct sum
$\mathfrak g_\theta=\mathfrak g^{\rm ess}_\theta\lsemioplus\mathfrak n_\theta$.
Essential Lie invariance algebras are subalgebras of $\mathfrak g^\sim|_{(x,u)}$ and 
are mapped onto each other by the differentials of restrictions of equivalence transformations: 
$\mathfrak g^{\rm ess}_{\mathcal T\theta}=(\mathcal T|_{(x,u)})_*\mathfrak g^{\rm ess}_\theta$. 
Consequently, the  group classification of the class~$\mathcal L|_{\mathcal S}$ 
reduces to the classification of appropriate subalgebras of~$\mathfrak g^\sim|_{(x,u)}$
or, equivalently, of the equivalence algebra~$\mathfrak g^\sim$ itself. 

An important case of uniformly semi-normalized classes, which is relevant to the present paper,  
is given by classes of homogeneous linear systems of differential equations. 

Consider a normalized class $\mathcal L^{\rm inh}|_{\mathcal S^{\rm inh}}$ 
of (generally) inhomogeneous linear systems of differential equations $\mathcal L^{\rm inh}_{\theta,\zeta}$'s
of the form $L(x,u_{(p)},\theta_{(q)}(x))=\zeta(x)$, 
where $\theta$ is a tuple of arbitrary elements parameterizing the homogeneous linear left hand side and depending only on~$x$ 
and the right hand side $\zeta$ is a tuple of arbitrary functions of~$x$.
Suppose that the class $\mathcal L^{\rm inh}|_{\mathcal S^{\rm inh}}$ also satisfies the following conditions: 
\begin{itemize}\itemsep=0ex
\item 
Each system from $\mathcal L^{\rm inh}|_{\mathcal S^{\rm inh}}$ is locally solvable. 
\item 
The zero function is the only common solution of the homogeneous systems from $\mathcal L^{\rm inh}|_{\mathcal S^{\rm inh}}$.
\item 
Restrictions of elements of the equivalence group 
$G^\sim_{\rm inh}=G^\sim(\mathcal L^{\rm inh}|_{\mathcal S^{\rm inh}})$ to the space of~$(x,u)$ 
are fibre-preserving transformations whose components for~$u$ are affine in~$u$, 
that is they are of the form $\tilde x_j=X^j(x)$, $\tilde u^a=M^{ab}(x)(u^b+h^b(x))$, 
where \smash{$\det(X^j_{x_{j'}})\ne0$} and $\det(M^{ab})\ne0$.
\end{itemize}
Here the indices $j$ and $j'$ run from $1$ to $n$ and the indices $a$ and $b$ run from $1$ to $m$. 
The functions~$X^{j\,}$'s and~$M^{ab\,}$'s may satisfy additional constraints 
but the $h^{a\,}$'s are arbitrary smooth functions of~$x$.

Any system~$\mathcal L_{\theta\zeta}$ from the class $\mathcal L^{\rm inh}|_{\mathcal S^{\rm inh}}$ is mapped 
to the associated homogeneous system $\mathcal L_{\theta0}$ by the equivalence transformation 
\[
\mathcal T_{\theta\zeta}\colon\quad
\tilde x_j=x_j,\quad 
\tilde u^a=u^a+h^a(x),\quad 
\tilde\theta=\theta,\quad 
\tilde\zeta=\zeta-L(x,h_{(p)}(x),\theta_{(q)}(x)),
\] 
where $h=(h^1,\dots,h^m)$ is a solution of $\mathcal L_{\theta\zeta}$. 
In other words, the class $\mathcal L^{\rm inh}|_{\mathcal S^{\rm inh}}$ is mapped, 
by the family $\{\mathcal T_{\theta\zeta}\}$ of equivalence transformations 
that are nonlocally parameterized by the arbitrary elements~$\theta$'s and~$\zeta$'s 
to the corresponding class $\mathcal L^{\rm hmg}|_{\mathcal S^{\rm hmg}}$ of homogeneous systems.
The transformations from the equivalence group $G^\sim_{\rm inh}$ with $(x,u)$-components 
$\tilde x_j=x_j$, $\tilde u^a=u^a+h^a(x)$, 
where the $h^{a\,}$'s run through the set of smooth functions of~$x$, 
constitute a normal subgroup~$N^\sim_{\rm inh}$ of this group. 
Furthermore, $G^\sim_{\rm inh}$ splits over~$N^\sim_{\rm inh}$ 
since $G^\sim_{\rm inh}=H^\sim_{\rm inh}\ltimes N^\sim_{\rm inh}$, 
where $H^\sim_{\rm inh}$ is the subgroup of~$G^\sim_{\rm inh}$ consisting of those elements with $h^{a\,}=0$. 
The restriction of~$H^\sim_{\rm inh}$ to the space with local coordinates $(x,u,\theta)$ coincides with 
the equivalence group~$G^\sim_{\rm hmg}$ of the class $\mathcal L^{\rm hmg}|_{\mathcal S^{\rm hmg}}$.
The equivalence groupoid~$\mathcal G^\sim_{\rm hmg}$ of this class can be considered 
as the subgroupoid of the equivalence groupoid~$\mathcal G^\sim_{\rm inh}$ 
of the class $\mathcal L^{\rm inh}|_{\mathcal S^{\rm inh}}$
that is singled out by the constraints $\zeta=0$, $\tilde\zeta=0$ for 
the source and target systems of admissible transformations, respectively. 
For each relevant transformational part, the tuple of parameter-functions~$h$ 
is an arbitrary solution of the corresponding source system. 
The systems  $\mathcal L_{\theta\zeta}$ and $\mathcal L_{\tilde\theta\tilde\zeta}$ 
are $G^\sim_{\rm inh}$-equivalent if and only if 
their homogeneous counterparts $\mathcal L_{\theta0}$ and $\mathcal L_{\tilde\theta0}$
are $G^\sim_{\rm hmg}$-equivalent.
Thus the group classification of systems from the class $\mathcal L^{\rm inh}|_{\mathcal S^{\rm inh}}$ 
reduces to the group classification of systems from the class $\mathcal L^{\rm hmg}|_{\mathcal S^{\rm hmg}}$. 

For each~$\theta\in\mathcal S^{\rm hmg}$ 
we denote by~$G^{\rm lin}_{\theta0}$ the subgroup 
of the point symmetry group~$G_{\theta0}$ of~$\mathcal L_{\theta0}$ 
that consists of the linear superposition transformations:
\[\tilde x_j=x_j,\quad \tilde u^a=u^a+h^a(x),\] 
where the tuple~$h$ is a solution of~$\mathcal L_{\theta0}$. 
The family $\mathcal N_{\rm lin}=\{G^{\rm lin}_{\theta0}\mid\theta\in\mathcal S_{\rm hmg}\}$ of all these subgroups
satisfies the properties of Definition~\ref{DefinitionOfUniformlySemi-normalizedClasses}. 
Therefore, the class $\mathcal L^{\rm hmg}|_{\mathcal S^{\rm hmg}}$ is uniformly semi-normalized 
with respect to the family~$\mathcal N_{\rm lin}$. 
We call this kind of semi-normalization, 
which is characteristic for classes of homogeneous linear systems of differential equations, 
\emph{uniform semi-normalization with respect to linear superposition of solutions}.
By Theorem~\ref{TheoremOnSplittingSymGroupsInUniformlySemi-normalizedClasses},
for each~$\theta\in\mathcal S^{\rm hmg}$ the group~$G_{\theta0}$ splits over~$G^{\rm lin}_{\theta0}$, 
and $G_{\theta0}=G^{\rm ess}_{\theta0}\ltimes G^{\rm lin}_{\theta0}$, 
where $G^{\rm ess}_{\theta0}=G^\sim_{\rm hmg}|_{(x,u)}\cap G_{\theta0}$.
By Theorem~\ref{TheoremOnSplittingInvAlgebrasInUniformlySemi-normalizedClasses}, 
the splitting of the point symmetry group induces a splitting of the corresponding maximal Lie invariance algebra: 
$\mathfrak g_{\theta0}=\mathfrak g^{\rm ess}_{\theta0}\lsemioplus\mathfrak g^{\rm lin}_{\theta0}$, 
where $\mathfrak g^{\rm ess}_{\theta0}$ is the essential Lie invariance algebra of~$\mathcal L_{\theta0}$, 
$\mathfrak g^{\rm ess}_{\theta0}=\mathfrak g^\sim_{\rm hmg}|_{(x,u)}\cap\mathfrak g_{\theta0}$, 
and the ideal $\mathfrak g^{\rm lin}_{\theta0}$, being the trivial part of~$\mathfrak g_{\theta0}$, 
consists of vector fields generating one-parameter symmetry groups of linear superposition of solutions. 
Thus, the group classification problem for the class $\mathcal L^{\rm hmg}|_{\mathcal S^{\rm hmg}}$ 
reduces to the classification of appropriate subalgebras 
of the equivalence algebra $\mathfrak g^\sim_{\rm hmg}$ of this class. 
The qualification ``appropriate'' means that the restrictions of these subalgebras to the space of $(x,u)$ 
are essential Lie invariance algebras of systems from $\mathcal L^{\rm hmg}|_{\mathcal S^{\rm hmg}}$.

For a class $\mathcal L^{\rm hmg}|_{\mathcal S^{\rm hmg}}$ 
of linear homogeneous systems of differential equations
that is uniformly semi-normalized with respect to the linear superposition of solutions, 
it is not necessary to start by considering the associated normalized superclass 
$\mathcal L^{\rm inh}|_{\mathcal S^{\rm inh}}$ of generally inhomogeneous linear systems.
The class $\mathcal L^{\rm hmg}|_{\mathcal S^{\rm hmg}}$ itself 
can be the starting point of the analysis. 
In order to get directly its uniform semi-normalization, 
we need to suppose the following properties of~$\mathcal L^{\rm hmg}|_{\mathcal S^{\rm hmg}}$, 
which are counterparts of the above properties of~$\mathcal L^{\rm hmg}|_{\mathcal S^{\rm hmg}}$:
\begin{itemize}
\item 
Each system from $\mathcal L^{\rm hmg}|_{\mathcal S^{\rm hmg}}$ is locally solvable. 
\item 
The zero function is the only common solution of systems from $\mathcal L^{\rm hmg}|_{\mathcal S^{\rm hmg}}$.
\item 
For any admissible transformation $(\theta^1,\theta^2,\varphi)\in\mathcal G^\sim_{\rm hmg}$, 
its transformational part~$\varphi$ is of the form $\tilde x_j=X^j(x)$, $\tilde u^a=M^{ab}(x)(u^b+h^b(x))$, 
where $h=(h^1,\dots,h^m)$ is a solution of $\mathcal L_{\theta^10}$, 
\smash{$\det(X^j_{x_{j'}})\ne0$} and $\det(M^{ab})\ne0$. 
The functions~$X^{j\,}$'s and~$M^{ab\,}$'s may satisfy 
only additional constraints that do not depend on both $\theta^1$ and $\theta^2$. 
\end{itemize}

The class~\eqref{LinSchEqs} of linear Schr\"odinger equations fits very well into the framework 
which we use in the present paper to solve the group classification problem for this class.  

\begin{remark}
There exist classes of homogeneous linear systems of differential equations 
that are uniformly semi-normalized with respect to symmetry-subgroup families 
different from the corresponding families of subgroups of linear superposition transformations. 
See Corollary~\ref{CorollaryOnUniformSemi-nomalizationOf_igammax_subclass} below.
\end{remark}

\begin{remark}
A technique similar to factoring out uniform Lie invariance algebras 
can be applied to kernel invariance algebras in the course of group classification 
of some normalized classes using the algebraic method. 
It is well known that for any class of differential equations~$\mathcal L|_{\mathcal S}$ 
the kernel~$G^\cap$ of the maximal point symmetry groups~$G_\theta$'s 
of systems~$\mathcal L_\theta$'s from~$\mathcal L|_{\mathcal S}$ 
is a normal subgroup of the restriction $G^\sim|_{(x,u)}$ 
of the (usual) equivalence group~$G^\sim$ of~$\mathcal L|_{\mathcal S}$ 
to the space with local coordinates~$(x,u)$, $G^\cap\lhd G^\sim|_{(x,u)}$.  
Analogues of this result can be found in the references
\cite[Proposition~3]{Bihlo&Cardoso-Bihlo&Popovych2011},
\cite[p.~52, Proposition~3.3.9]{Lisle1992}, 
\cite{Ovsiannikov&Ibragimov1975} and \cite[Section~II.6.5]{Ovsiannikov1982}. 
Furthermore, if the class~$\mathcal L|_{\mathcal S}$ is normalized, 
then the kernel~$G^\cap$ is a normal subgroup of each~$G_\theta$, $G^\cap\lhd G_\theta$, $\theta\in \mathcal S$
\cite[Corollary~2]{Bihlo&Cardoso-Bihlo&Popovych2011}. 
However, the groups $G^\sim|_{(x,u)}$ and~$G_\theta$'s do not in general split over~$G^\cap$ 
even if the class~$\mathcal L|_{\mathcal S}$ is normalized. 
Similar assertions for the associated algebras are also true. 
See, in particular, Remark~9 and the subsequent subalgebra classification in~\cite{Bihlo&Cardoso-Bihlo&Popovych2012} 
for a physically relevant example. 
Thus, the splitting of~$G^\sim$ over~$G^\cap$ does not follow from the normalization of~$\mathcal L|_{\mathcal S}$ 
and is an additional requirement 
for the kernel invariance algebra~$\mathfrak g^\cap$ to be factored out 
when the group classification of~$\mathcal L|_{\mathcal S}$ is carried out.
\end{remark}

\section{Equivalence groupoid}\label{equivgr}

In this section we find the equivalence groupoid~$\mathcal G^\sim$ 
and the complete equivalence group~$G^\sim$ of the class~\eqref{LinSchEqs} 
in finite form (that is, not using the infinitesimal method). 
In the following $\mathcal L_V$ denotes the Schr\"odinger equation from the class~\eqref{LinSchEqs} 
with potential $V=V(t,x)$.
We look for all (locally) invertible point transformations of the form 
\begin{equation}\label{equivptrans}
\tilde t=T(t,x,\psi,\psi^*),\quad 
\tilde x =X(t,x,\psi,\psi^*),\quad 
\tilde\psi=\Psi(t,x,\psi,\psi^*),\quad
\tilde\psi^*=\Psi^*(t,x,\psi,\psi^*)
\end{equation}
(that is, $dT\wedge dX\wedge d\Psi\wedge d\Psi^*\ne 0$) 
that map a fixed equation~$\mathcal L_V$ from the class~\eqref{LinSchEqs} 
to an equation~$\mathcal L_{\tilde V}$: $i\tilde \psi_{\tilde t}+\tilde \psi_{\tilde x\tilde x}+\tilde V\tilde \psi=0$ of the same class.

In the rest of this paper, we use the following notation: for any given complex number $\beta$ 
\[
\hat\beta=\beta \quad\mbox{if}\quad T_t>0  \quad\mbox{and}\quad
\hat\beta=\beta^* \quad\mbox{if}\quad T_t<0.
\]

\begin{theorem}\label{thmequivptransf}
The equivalence groupoid~$\mathcal G^\sim$ of the class~\eqref{LinSchEqs} 
consists of triples of the form $(V,\tilde V,\varphi)$, 
where $\varphi$ is a point transformation acting on the space with local coordinates $(t,x,\psi)$ and given by
\begin{subequations}\label{equgroupoid}
\begin{gather}\label{equgroupoid_a}
\tilde t=T, \quad
\tilde x=\varepsilon|T_t|^{1/2}x+X^0,
\\[.5ex]\label{equgroupoid_b}
\tilde \psi=\exp\left(\frac i8\frac{T_{tt}}{|T_t|}\,x^2
 +\frac i2\frac{\varepsilon\varepsilon' X^0_t}{|T_t|^{1/2}}\,x+i\Sigma+\Upsilon\right)
 (\hat\psi+\hat\Phi), 
\end{gather}
$V$ is an arbitrary potential and 
the transformed potential $\tilde V$ is related to~$V$ by the equation
\begin{gather}\label{equgroupoid_c}
\tilde V=\frac{\hat V}{|T_t|}
 +\frac{2T_{ttt}T_t-3T_{tt}^{\,2}}{16\varepsilon'T_t^{\,3}}x^2
 +\frac{\varepsilon\varepsilon'}{2|T_t|^{1/2}}\left(\dfrac{X^0_t}{T_t}\right)_{\!t}x
 -\frac{iT_{tt}+(X^0_t)^2}{4T_t^{\,2}}+\frac{\Sigma_t-i\Upsilon_t}{T_t},
\end{gather}
\end{subequations} 
$T=T(t)$, $X^0=X^0(t)$, $\Sigma=\Sigma(t)$ and $\Upsilon=\Upsilon(t)$ are 
arbitrary smooth real-valued functions of $t$ with $T_t\ne 0$ and
$\Phi=\Phi(t,x)$ denotes an arbitrary solution of the initial equation. 
$\varepsilon=\pm 1$ and $\varepsilon'=\sgn T_t$.
\end{theorem}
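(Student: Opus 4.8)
The plan is to determine the equivalence groupoid by the direct method: substitute a general point transformation into the target equation, eliminate the $t$-derivatives $\psi_t$ and $\psi^*_t$ using the source equation and its complex conjugate, and then split the resulting identity with respect to the parametric jet variables $\psi_x,\psi^*_x,\psi_{xx},\psi^*_{xx},\dots$, which are treated as independent. First I would write the transformation in the general form~\eqref{equivptrans}, \emph{without} assuming fibre preservation or linearity, and express $\tilde\psi_{\tilde t}$, $\tilde\psi_{\tilde x}$ and $\tilde\psi_{\tilde x\tilde x}$ through the total derivative operators $D_t$, $D_x$ and the Jacobian of $(T,X)$. After inserting $\psi_t=i(\psi_{xx}+V\psi)$ and $\psi^*_t=-i(\psi^*_{xx}+V^*\psi^*)$, the condition $i\tilde\psi_{\tilde t}+\tilde\psi_{\tilde x\tilde x}+\tilde V\tilde\psi=0$ becomes a polynomial identity in these parametric variables. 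Collecting the coefficients of the genuinely nonlinear and of the highest-order monomials forces $T$ and $X$ to be independent of $\psi,\psi^*$ (fibre preservation), $\Psi$ to be affine in $(\psi,\psi^*)$, and $T_x=0$, so that $T=T(t)$, $X=X(t,x)$ and $\Psi=f(t,x)\psi+g(t,x)\psi^*+h(t,x)$.

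The decisive step is the analysis of the coefficients of $\psi_{xx}$ and $\psi^*_{xx}$. With $T=T(t)$ a short computation gives these coefficients as $f(X_x^{-2}-T_t^{-1})$ and $g(X_x^{-2}+T_t^{-1})$ respectively; since the term $\tilde V\tilde\psi$ contributes no second-order derivatives, both must vanish. As the two bracketed factors behave oppositely under a change of sign of $T_t$, exactly one of $f,g$ can survive: when $T_t>0$ one gets $g=0$ and $X_x^2=T_t$, whereas when $T_t<0$ one gets $f=0$ and $X_x^2=-T_t$. This is precisely the dichotomy encoded in the hat notation ($\hat\psi=\psi$ or $\hat\psi=\psi^*$); in both cases $X_x=\varepsilon|T_t|^{1/2}$ with $\varepsilon=\pm1$, and since $T_t$ is $x$-independent this integrates to the affine form $\tilde x=\varepsilon|T_t|^{1/2}x+X^0(t)$ of~\eqref{equgroupoid_a}.

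With these reductions in hand I would split the remaining identity by powers of $x$ and by the parametric variables $\hat\psi_x$ and $\hat\psi$. Setting the coefficient of $\hat\psi_x$ to zero determines the logarithmic $x$-derivative of the multiplier of $\hat\psi$; integrating in $x$ produces exactly the phase $\tfrac i8(T_{tt}/|T_t|)x^2+\tfrac i2(\varepsilon\varepsilon'X^0_t/|T_t|^{1/2})x$ together with the $x$-free factor $e^{i\Sigma(t)+\Upsilon(t)}$, the integration ``constant'' in $x$ being the arbitrary pair of real functions $\Sigma(t),\Upsilon(t)$; this yields~\eqref{equgroupoid_b}. Setting the coefficient of $\hat\psi$ to zero then expresses $\tilde V$ through $V$ and the transformation data, which is relation~\eqref{equgroupoid_c}, while the residual, $\hat\psi$-free part of the identity reduces to the statement that the additive term (equal to $f\hat\Phi$) forces $\Phi$ to satisfy the source equation, i.e.\ to be a linear-superposition contribution. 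Finally I would verify the converse, namely that every triple of the stated form genuinely lies in $\mathcal G^\sim$, so that the description is both necessary and sufficient.

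I expect the main obstacle to be the bookkeeping in the chain-rule expressions and in the subsequent splitting, in particular treating the complex-conjugate variable $\psi^*$ as an independent jet coordinate and carrying the sign $\varepsilon'=\sgn T_t$ consistently through every formula. The conceptual heart, however, is the second-derivative coefficient argument, which simultaneously pins down $X_x^2=|T_t|$ and the conjugation dichotomy, so I would organize the write-up around that computation.
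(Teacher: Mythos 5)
Your proposal is correct, and its second half coincides with the paper's own computation: once the preliminary constraints are in place, both you and the paper apply the chain rule, substitute the source equation, split with respect to $\hat\psi_x$ and $\hat\psi$, integrate the resulting ODE in $x$ for the multiplier of $\hat\psi$ to obtain the phase in~\eqref{equgroupoid_b}, read off~\eqref{equgroupoid_c} from the coefficient of $\hat\psi$, and identify the inhomogeneous term as a solution $\Phi$ of the initial equation; this is exactly the content of the paper's equations~\eqref{equiveqderspla}--\eqref{equiveqdersplb}. The genuine difference is in the first half. The paper never derives the constraints $T_x=T_\psi=T_{\psi^*}=0$, $X_\psi=X_{\psi^*}=0$, $X_x^2=|T_t|$, $\Psi_{\hat\psi^*}=0$: it imports them (conditions~\eqref{Condtransf}) from the known normalization of the superclass $i\psi_t+\psi_{xx}+F(t,x,\psi,\psi^*,\psi_x,\psi^*_x)=0$ established in~\cite{Popovych&Kunzinger&Eshragi2010}, using that every admissible transformation of the subclass is an admissible transformation of the superclass. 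You instead re-derive these constraints from scratch, and your key computation is sound: with $T=T(t)$ and $\Psi=f\psi+g\psi^*+h$, the coefficients of $\psi_{xx}$ and $\psi^*_{xx}$ are indeed $f\bigl(X_x^{-2}-T_t^{-1}\bigr)$ and $g\bigl(X_x^{-2}+T_t^{-1}\bigr)$, and since invertibility forbids $f=g=0$, exactly one of $f,g$ survives according to $\sgn T_t$, which yields $X_x^2=|T_t|$ and the conjugation dichotomy behind the hat notation in one stroke. Your route buys a self-contained proof at the price of the heaviest bookkeeping---establishing fibre preservation and affineness of $\Psi$ while the transformed derivatives are still rational in the jet variables---whereas the paper's route is short but rests on an external result. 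One small correction: your phrase about splitting the residual identity ``by powers of $x$'' is not meaningful at this stage, since $V$ is an arbitrary function of $(t,x)$ and no splitting in $x$ is available or needed; what you actually describe afterwards (splitting only by $\hat\psi_x$ and $\hat\psi$, then integrating an ODE in $x$ whose integration ``constants'' are the real functions $\Sigma(t)$, $\Upsilon(t)$) is exactly right.
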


\begin{proof}
The class~\eqref{LinSchEqs} is a subclass of the class of generalized Schr\"odinger 
equations of the form $i\psi_t+\psi_{xx}+F=0$,
where $\psi$ is a complex dependent variable of two real independent variables 
$t$ and $x$ and $F=F(t,x,\psi,\psi^*,\psi_x,\psi^*_x)$ is an arbitrary smooth 
complex-valued function of its arguments. 
This superclass is normalized, see~\cite{Popovych&Kunzinger&Eshragi2010}, 
where it was also shown that any admissible transformation of the superclass 
satisfies the conditions
\begin{gather}\label{Condtransf}
T_x=T_\psi=T_{\psi^*}=0, \quad
X_\psi=X_{\psi^*}=0, \quad
X_x^2=|T_t|, \quad  
\Psi_{\hat\psi^*}=0.   
\end{gather}
Hence the same is true for the class~\eqref{LinSchEqs}.
The equations~\eqref{Condtransf} give us 
\[
T=T(t),\quad 
X=\varepsilon|T_t|^{1/2}x+X^0(t),\quad 
\Psi=\Psi(t,x,\hat\psi),
\]
where $T$ and $X^0$ are arbitrary smooth real-valued functions of $t$ 
and $\Psi$ is an arbitrary smooth complex-valued function of its arguments.
Then the invertibility of the transformation gives 
$T_t\ne 0$ and $\smash{\Psi_{\hat\psi}\ne 0}$.
Using the chain rule, we take total derivatives 
of the equation $\tilde\psi(\tilde t,\tilde x)=\Psi(t,x,\hat\psi)$ 
with respect to~$t$ and~$x$, with $\tilde t=T$ and $\tilde x=X$ 
and we find the following expressions for the transformed derivatives: 
\begin{gather*}
\tilde\psi_{\tilde x}=\frac1{X_x}(\Psi_x+\Psi_{\hat\psi}\hat\psi_x),\quad
\tilde\psi_{\tilde t}=\frac1{T_t}(\Psi_t+\Psi_{\hat\psi}\hat\psi_t)-\frac{X_t}{T_tX_x}(\Psi_x
+\Psi_{\hat\psi}\hat\psi_x),\\
\tilde\psi_{\tilde x\tilde x}=\frac1{X_x^{\,2}}(\Psi_{xx}+2\Psi_{x\hat\psi}\hat\psi_x+\Psi_{\hat\psi\hat\psi}\hat\psi_x^{\,2}
+\Psi_{\hat\psi}\hat\psi_{xx}).
\end{gather*}
We substitute these expressions into the equation~$\mathcal L_{\tilde V}$, 
and then take into account that $\psi$ satisfies the equation~\eqref{LinSchEqs} 
so that $\hat\psi_{xx}=-\varepsilon'i\hat\psi_t-\hat V\hat\psi$.  
We then split with respect to~$\hat\psi_t$ and~$\hat\psi_x$, 
which yields  
\begin{gather}\label{equiveqderspla}
\Psi_{\hat\psi\hat\psi}=0,\quad 
\Psi_{x\hat\psi}=\frac i2\frac{X_x}{T_t}X_t\Psi_{\hat\psi},
\\\label{equiveqdersplb}
\varepsilon'i\Psi_t+\Psi_{xx}-\varepsilon'i\frac{X_t}{X_x}\Psi_x
+|T_t|\tilde V\Psi-\hat V\Psi_{\hat\psi}\hat\psi=0.
\end{gather}
The general solution of the first equation in~\eqref{equiveqderspla} 
is $\Psi=\Psi^1(t,x)\hat\psi+\Psi^0(t,x)$, 
where~$\Psi^0$ and~$\Psi^1$ are smooth complex-valued functions of~$t$ and~$x$. 
The second equation in~\eqref{equiveqderspla} then reduces to  
a linear ordinary differential equation with respect to~$\Psi^1$ 
with the independent variable~$x$, and the variable~$t$ plays the role of a parameter. 
Integrating this equation gives the following expression for~$\Psi^1$:  
\[
\Psi^1=\exp\left(\frac i{8}\frac{T_{tt}}{|T_t|}x^2
+\frac i2\frac{\varepsilon\varepsilon'X^0_t}{|T_t|^{1/2}}x+i\Sigma(t)+\Upsilon(t)\right),
\]
where~$\Sigma$ and~$\Upsilon$ are arbitrary smooth real-valued functions of $t$. 
We substitute the expression for~$\Psi$ into the equation~\eqref{equiveqdersplb} 
and then split this equation with respect to~$\hat\psi$ and this then gives us the equation
\[
\tilde V=\frac{\hat V}{|T_t|}-\frac 1{|T_t|}\frac{\Psi^1_{xx}}{\Psi^1}
-\frac i{|T_t|\Psi^1}\left(\Psi^1_t-\frac{X_t}{X_x}\Psi^1_x\right).
\]
which represents the component of the transformation~\eqref{equgroupoid} for~$V$. 
We introduce the function $\Phi=\hat\Psi^0/\hat\Psi^1$, i.e., $\Psi^0=\Psi^1\hat\Phi$. 
The terms in~\eqref{equiveqdersplb} not containing~$\hat\psi$ give an equation in~$\Psi^0$, 
which is equivalent to the initial linear Schr\"odinger equation in terms of~$\Phi$. 
\end{proof}

\begin{corollary}
A (1+1)-dimensional linear Schr\"odinger equation of the form~\eqref{LinSchEqs} 
is equivalent to the free linear Schr\"odinger equation with 
respect to a point transformation if and only if 
$V=\gamma^2(t)x^2+ \gamma^1(t)x+\gamma^0(t)+i\tilde \gamma^0(t)$ 
for some real-valued functions $\gamma^2$, $\gamma^1$, 
$\gamma^0$ and $\tilde \gamma^0$ of $t$.
\end{corollary}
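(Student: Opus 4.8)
The plan is to extract both implications directly from the explicit description of the equivalence groupoid~$\mathcal G^\sim$ furnished by Theorem~\ref{thmequivptransf}. Since $\mathcal G^\sim$ is closed under inversion, the equation~$\mathcal L_V$ is equivalent to the free equation~$\mathcal L_0$ if and only if the triple $(0,V,\varphi)$ belongs to~$\mathcal G^\sim$ for some point transformation~$\varphi$; that is, $V$ must arise as the target potential~$\tilde V$ produced by~\eqref{equgroupoid_c} out of the source potential~$0$. Putting $\hat V=0$ in~\eqref{equgroupoid_c} annihilates its first summand and leaves
\[
V=\frac{2T_{ttt}T_t-3T_{tt}^{\,2}}{16\varepsilon'T_t^{\,3}}x^2
 +\frac{\varepsilon\varepsilon'}{2|T_t|^{1/2}}\left(\frac{X^0_t}{T_t}\right)_{\!t}x
 -\frac{iT_{tt}+(X^0_t)^2}{4T_t^{\,2}}+\frac{\Sigma_t-i\Upsilon_t}{T_t}.
\]

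For the necessity (``only if'') part I would simply read off the structure of this expression. As $T$, $X^0$, $\Sigma$ and $\Upsilon$ are real-valued, the right-hand side is a quadratic polynomial in~$x$ whose coefficients of~$x^2$ and~$x$ are real and whose $x$-free term has real part $-(X^0_t)^2/(4T_t^{\,2})+\Sigma_t/T_t$ and imaginary part $-T_{tt}/(4T_t^{\,2})-\Upsilon_t/T_t$. Hence $V$ necessarily has the asserted form $V=\gamma^2(t)x^2+\gamma^1(t)x+\gamma^0(t)+i\tilde\gamma^0(t)$ with real-valued $\gamma^2,\gamma^1,\gamma^0,\tilde\gamma^0$.

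For the sufficiency (``if'') part I would read the same four coefficient relations as a determining system for the real-valued unknowns $T$, $X^0$, $\Sigma$, $\Upsilon$ for prescribed data $\gamma^2,\gamma^1,\gamma^0,\tilde\gamma^0$. The key observation is that this system is triangular. Fixing $\varepsilon=1$ and seeking $T$ with $T_t>0$ (so that $\varepsilon'=\sgn T_t=1$), the relation for~$\gamma^2$ involves only~$T$; rewriting it through the Schwarzian derivative $\{T;t\}=T_{ttt}/T_t-\tfrac32(T_{tt}/T_t)^2$ as $\{T;t\}=8\gamma^2(t)T_t$, it becomes a third-order ordinary differential equation that can be solved for the top derivative~$T_{ttt}$ whenever $T_t\neq0$. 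The remaining relation for~$\gamma^1$ then determines~$X^0$ by two successive integrations once~$T$ is known, and the real and imaginary parts of the free term determine~$\Sigma_t$ and~$\Upsilon_t$, hence~$\Sigma$ and~$\Upsilon$, by a single quadrature each. Assembling $T,X^0,\Sigma,\Upsilon$ into~\eqref{equgroupoid_a}--\eqref{equgroupoid_b} yields a point transformation carrying~$\mathcal L_0$ to~$\mathcal L_V$, which proves the claim.

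I expect the only genuinely nontrivial step to be the solvability of the third-order equation for~$T$: it is the single nonlinear equation in the chain, while all the others are mere integrations. This obstacle is resolved by noting that the equation can be solved for~$T_{ttt}$ as long as $T_t\neq0$, so standard local existence theory provides a solution with $T_t\neq0$ on a neighbourhood of any point; since point equivalence is a local notion, such a local solution is all that is required.
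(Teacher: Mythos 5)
Your proposal is correct and takes exactly the route the paper intends: the statement is given as an immediate corollary of Theorem~\ref{thmequivptransf}, obtained by reading off the structure of~\eqref{equgroupoid_c} with vanishing source potential. Your necessity argument (real quadratic and linear coefficients, complex free term) and your sufficiency argument (the triangular system: the Schwarzian equation $\{T;t\}=8\gamma^2T_t$ solved locally for $T$ with $T_t\ne0$, then $X^0$, $\Sigma$, $\Upsilon$ by quadratures) simply spell out the verification the paper leaves implicit, and both steps are sound within the paper's local framework.
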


\begin{corollary}\label{eqgpe}
The usual equivalence group~$G^\sim$ of the class~\eqref{LinSchEqs} consists of point 
transformations of the form~\eqref{equgroupoid} with $\Phi=0$.
\end{corollary}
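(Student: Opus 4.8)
The plan is to derive this corollary directly from the full description of the equivalence groupoid in Theorem~\ref{thmequivptransf}, exploiting the one feature that separates the usual equivalence group from the groupoid: an element of~$G^\sim$ must restrict, on the space of $(t,x,\psi)$, to a \emph{single} point transformation that is admissible for every source potential~$V$ simultaneously. This is precisely the projectability requirement of Section~\ref{SectionOnClassesOfDEs}, namely that the $(t,x,\psi)$-components be independent of the arbitrary element. Concretely, each $\mathcal T\in G^\sim$ generates the family $\{(V,\mathcal TV,\varphi)\mid V\in\mathcal S\}\subset\mathcal G^\sim$ with one and the same transformational part $\varphi=\mathcal T|_{(t,x,\psi)}$. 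I would prove the two inclusions separately.

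For the inclusion of $G^\sim$ into the set of transformations~\eqref{equgroupoid} with $\Phi=0$, take $\mathcal T\in G^\sim$ and let $\varphi$ be its $(t,x,\psi)$-part. For each fixed~$V$ the triple $(V,\mathcal TV,\varphi)$ lies in~$\mathcal G^\sim$, so by Theorem~\ref{thmequivptransf} the map~$\varphi$ has the form \eqref{equgroupoid_a}--\eqref{equgroupoid_b} with some parameters $T,X^0,\Sigma,\Upsilon,\varepsilon$ and some~$\Phi$ solving~$\mathcal L_V$. Since $\varphi$ is one and the same transformation for all~$V$, I would first argue that these parameters are uniquely determined by~$\varphi$ and hence cannot depend on~$V$: the component $\tilde t=T$ fixes~$T$; then $\tilde x$ fixes $\varepsilon|T_t|^{1/2}$ and~$X^0$; and matching the exponential prefactor and the additive summand in~\eqref{equgroupoid_b} fixes $\Sigma$,~$\Upsilon$ and the term~$\hat\Phi$. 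In particular $\hat\Phi$, and hence~$\Phi$ (the hat operation being governed by the already-fixed sign of~$T_t$), must be independent of~$V$. But admissibility for the source~$\mathcal L_V$ forces~$\Phi$ to solve~$\mathcal L_V$, and a single fixed function solving~$\mathcal L_V$ for all smooth~$V$ must vanish: for two potentials the difference of the equations gives $(V_1-V_2)\Phi=0$, and $V_1,V_2$ are arbitrary. Thus $\Phi=0$.

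The reverse inclusion is the easy direction and is essentially built into Theorem~\ref{thmequivptransf}: for any admissible $T,X^0,\Sigma,\Upsilon,\varepsilon$ and $\Phi=0$, the components \eqref{equgroupoid_a}--\eqref{equgroupoid_b} depend only on $(t,x,\psi)$, the sole $V$-dependent piece~$\hat\Phi$ having been removed, while~\eqref{equgroupoid_c} expresses~$\tilde V$ as a function of $(t,x,V)$ alone, with no appearance of~$\psi$ or its derivatives. Hence the map is a point transformation on the joint space $(t,x,\psi,\psi^*,V,V^*)$ that is projectable to $(t,x,\psi,\psi^*)$, preserves the contact structure there, and by Theorem~\ref{thmequivptransf} sends every~$\mathcal L_V$ to some~$\mathcal L_{\tilde V}$ in the class; so it is an equivalence transformation. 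The only genuinely delicate point is the uniqueness-of-parameters step in the forward direction, which rules out smuggling $V$-dependence into $T,X^0,\Sigma,\Upsilon$ to compensate a nonzero~$\Phi$; once that is secured, the common-solution argument forcing $\Phi=0$ is immediate.
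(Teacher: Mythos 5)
Your proof is correct and follows essentially the same route as the paper's: each element of~$G^\sim$ generates admissible transformations with a $V$-independent transformational part, so by Theorem~\ref{thmequivptransf} its~$\Phi$ must be a common solution of all equations of the class, and only $\Phi=0$ qualifies. You merely make explicit two points the paper leaves implicit --- the uniqueness of the parameters $T,X^0,\Sigma,\Upsilon,\varepsilon,\Phi$ determined by a given transformation, and the subtraction argument $(V_1-V_2)\Phi=0$ showing the zero function is the only common solution --- both of which are sound.
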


\begin{proof}
Each transformation from~$G^\sim$ generates a family of admissible transformations 
for the class~\eqref{LinSchEqs} and hence is of the form~\eqref{equgroupoid}.
As a usual equivalence group, the group~$G^\sim$ merely consists of point transformations 
in the variables $(t,x,\psi)$ and the arbitrary element~$V$ 
that can be applied to each equation from the class~\eqref{LinSchEqs} 
and whose components for the variables are independent of~$V$. 
The only transformations of the form~\eqref{equgroupoid} that satisfy these requirements are those 
for which $\Phi$ runs through the set of common solutions of all equations 
from the class~\eqref{LinSchEqs}. $\Phi=0$ is the only common solution.
\end{proof}

\begin{remark}\label{RemarkOnProjectibilityOfEquivGroupOf1+1DLinSchEqs}
Consider the natural projection~$\pi$ of the joint space of the variables and the arbitrary element~$V$ 
on the space of the variables only. 
For each transformation~$\mathcal T$ from the equivalence group~$G^\sim$ 
its components for the variables do not depend on~$V$ and are uniquely extended to~$V$. 
These components define a transformation~$\varphi$, 
which can be taken to be the pushforward of~$\mathcal T$ by the projection~$\pi$, 
$\varphi=\pi_*\mathcal T$. 
Therefore, there is a one-to-one correspondence between the equivalence group~$G^\sim$ 
and the group~$\pi_*G^\sim$ consisting of the projected equivalence transformations, 
$G^\sim\ni\mathcal T\mapsto\pi_*\mathcal T\in\pi_*G^\sim$.
\end{remark}

\begin{remark}\label{RemarkOnDiscreteEquivTransOf1+1DLinSchEqs}
The identity component of the equivalence group~$G^\sim$ consists of transformations
of the form~\eqref{equgroupoid} with $T_t>0$ and $\varepsilon=1$. 
This group also contains two discrete transformations that are involutions:
the space reflection $\tilde t=t$, $\tilde x=-x$, $\tilde \psi=\psi$, $\tilde V=V$
and the Wigner time reflection $\tilde t=-t$, $\tilde x=x$, $\tilde \psi=\psi^*$, $\tilde V=V^*$, 
which are independent up to composition with each other and with continuous transformations. 
These continuous and discrete transformations generate the entire equivalence group~$G^\sim$.
\end{remark}

\begin{corollary}\label{corolequivalg}
The equivalence algebra of the class~\eqref{LinSchEqs} is the algebra 
\[
\mathfrak g^\sim=\langle \hat D(\tau),\hat G(\chi),\hat M(\sigma), \hat I(\rho)\rangle, 
\]
where $\tau$, $\chi$, $\sigma$ and~$\rho$ run through the set of smooth real-valued functions of~$t$. 
The vector fields that span~$\mathfrak g^\sim$ are given~by
\begin{gather*}
\hat D(\tau)=\tau\p_t+\frac 1{2}\tau_t x\p_x+\frac i{8}\tau_{tt}x^2(\psi\p_\psi-\psi^*\p_{\psi^*})
\\ \phantom{\hat D(\tau)=}
{}-\left(\tau_t V-\frac 1{8}\tau_{ttt}x^2-i\frac{\tau_{tt}}{4}\right)\p_V
-\left(\tau_t V^*-\frac 1{8}\tau_{ttt}x^2+i\dfrac{\tau_{tt}}{4}\right)\p_{V^*},
\\
\hat G(\chi)=\chi\p_x+\frac i{2}\chi_tx(\psi\p_\psi-\psi^*\p_{\psi^*})+\frac{\chi_{tt}}{2}x(\p_V+\p_{V^*}),
\\
\hat M(\sigma)=i\sigma(\psi\p_\psi-\psi^*\p_{\psi^*})+\sigma_t(\p_V+\p_{V^*}),
\\[.5ex]
\hat I(\rho)=\rho(\psi\p_\psi+\psi^*\p_{\psi^*})-i\rho_t(\p_V+\p_{V^*}).
\end{gather*}
\end{corollary}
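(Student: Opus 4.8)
The plan is to obtain $\mathfrak g^\sim$ directly from the finite description of the equivalence group furnished by Corollary~\ref{eqgpe}. Every element of the equivalence algebra is the infinitesimal generator of a one-parameter subgroup of $G^\sim$, so it suffices to take the transformations~\eqref{equgroupoid} with $\Phi=0$, restrict to the identity component (where $\varepsilon=1$ and $\varepsilon'=\sgn T_t=1$, since the reflections of Remark~\ref{RemarkOnDiscreteEquivTransOf1+1DLinSchEqs} are discrete and contribute no generators), introduce a group parameter $\delta$ through the four arbitrary real functions $T=T(t,\delta)$, $X^0=X^0(t,\delta)$, $\Sigma=\Sigma(t,\delta)$, $\Upsilon=\Upsilon(t,\delta)$ normalized so that at $\delta=0$ the transformation is the identity (that is $T=t$, $X^0=\Sigma=\Upsilon=0$), and then differentiate each component of the transformation with respect to $\delta$ at $\delta=0$.

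First I would treat the four functional parameters one at a time, freezing the remaining three at their identity values. Writing $\tau:=\p_\delta T|_{\delta=0}$, $\chi:=\p_\delta X^0|_{\delta=0}$, $\sigma:=\p_\delta\Sigma|_{\delta=0}$ and $\rho:=\p_\delta\Upsilon|_{\delta=0}$, these are arbitrary smooth real-valued functions of $t$. At $\delta=0$ one has $T_t=1$ and $T_{tt}=T_{ttt}=0$, so differentiating $\tilde t$, $\tilde x$ and the exponential prefactor in~\eqref{equgroupoid_b} is routine: varying $T$ produces the $\tau\p_t$, $\frac12\tau_t x\,\p_x$ and $\frac i8\tau_{tt}x^2(\psi\p_\psi-\psi^*\p_{\psi^*})$ terms, varying $X^0$ produces $\chi\p_x$ and $\frac i2\chi_t x(\psi\p_\psi-\psi^*\p_{\psi^*})$, varying $\Sigma$ produces $i\sigma(\psi\p_\psi-\psi^*\p_{\psi^*})$, and varying $\Upsilon$ produces $\rho(\psi\p_\psi+\psi^*\p_{\psi^*})$, the sign in the last case differing because the amplitude factor $\Upsilon$ is conjugation-invariant whereas the phase factors in the exponent are imaginary and flip under conjugation.

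The bulk of the work, and the step I expect to be the main obstacle, is the differentiation of the potential component~\eqref{equgroupoid_c} together with its complex conjugate (the latter giving the $\p_{V^*}$ parts). The delicate terms are the rational expressions in $T_t$, $T_{tt}$, $T_{ttt}$, whose numerators vanish at $\delta=0$; here one expands to first order in $\delta$ and keeps only the surviving derivative, for instance $\p_\delta\big[(2T_{ttt}T_t-3T_{tt}^{\,2})/(16T_t^{\,3})\big]_{\delta=0}=\tfrac18\tau_{ttt}$ and $\p_\delta[T_{tt}/T_t^{\,2}]_{\delta=0}=\tau_{tt}$, which produce the $\tfrac18\tau_{ttt}x^2$ and the imaginary $\tau_{tt}$ contributions of $\hat D(\tau)$. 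For the $\hat G(\chi)$-part the term $\tfrac{\varepsilon\varepsilon'}{2}(X^0_t/T_t)_t x$ survives and gives $\tfrac12\chi_{tt}x$, whereas $(X^0_t)^2$ drops out since $X^0_t|_{\delta=0}=0$; finally the term $(\Sigma_t-i\Upsilon_t)/T_t$ yields directly the $\sigma_t$ and $-i\rho_t$ potential coefficients of $\hat M(\sigma)$ and $\hat I(\rho)$.

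Finally I would argue that the span is exactly $\mathfrak g^\sim$. Since the generator obtained above depends linearly on the four functions $(\tau,\chi,\sigma,\rho)$ and these can be prescribed independently, the vector fields $\hat D(\tau)$, $\hat G(\chi)$, $\hat M(\sigma)$, $\hat I(\rho)$ together exhaust all infinitesimal generators of one-parameter subgroups of $G^\sim$, while conversely each of them integrates to a subgroup of~$G^\sim$; hence $\mathfrak g^\sim=\langle \hat D(\tau),\hat G(\chi),\hat M(\sigma),\hat I(\rho)\rangle$. As a cross-check one could instead apply the infinitesimal equivalence criterion and solve the determining equations for vector fields in $(t,x,\psi,\psi^*,V,V^*)$ that are projectable to the space of $(t,x)$, but working from the already-known global group of Corollary~\ref{eqgpe} is considerably shorter.
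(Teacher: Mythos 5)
Your proposal is correct and follows essentially the same route as the paper's own proof, which likewise avoids the infinitesimal method by starting from the finite form~\eqref{equgroupoid} with $\Phi=0$, restricting to the identity component ($\varepsilon=1$, $T_t>0$), letting the parameter-functions depend on a group parameter~$\delta$ one at a time, and differentiating the transformation components at $\delta=0$. The only difference is that the paper first reparameterizes $\Sigma=\frac14X^0X^0_t+\bar\Sigma$ so that the varied families are genuine one-parameter subgroups of~$G^\sim$ (as the definition of~$\mathfrak g^\sim$ formally requires); since this correction is quadratic in~$\delta$, it does not affect any of the first-order derivatives you compute, and your generators coincide with those obtained in the paper.
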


\begin{proof}
The equivalence algebra~$\mathfrak g^\sim$ can be computed 
using the infinitesimal Lie method in a way similar to that for finding the Lie invariance algebra
of a single system of differential equations~\cite{Ovsiannikov1982}. 
However, we can avoid these calculations by noting 
that we have already constructed the complete point equivalence group~$G^\sim$. 
The algebra~$\mathfrak g^\sim$ is just the set of infinitesimal generators of one-parameter subgroups 
of the group~$G^\sim$. 
In order to find all such generators, in the transformation form~\eqref{equgroupoid} 
we set~$\Phi=0$ (to single out equivalence transformations), 
$\varepsilon=1$ and $T_t>0$, i.e., $\varepsilon'=1$ 
(to restrict to the continuous component of the identity transformation in~$G^\sim$), 
and represent the parameter function~$\Sigma$ in the form $\Sigma=\frac14X^0X^0_t+\bar\Sigma$ 
with some function~$\bar\Sigma$ of~$t$
(to make the group parameterization more consistent with the one-parameter subgroup structure of~$G^\sim$).
Then we successively take one of the parameter-functions~$T$, $X^0$, $\bar\Sigma$ and~$\Upsilon$ 
to depend on a continuous subgroup parameter~$\delta$, 
set the other parameter-functions to their trivial values, 
which are $t$ for $T$ and zeroes for $X^0$, $\bar\Sigma$ and~$\Upsilon$, 
differentiate the transformation components with respect to~$\delta$ and evaluate the result at $\delta=0$.
The corresponding infinitesimal generator is the vector field 
$\tau\p_t+\xi\p_x+\eta\p_\psi+\eta^*\p_{\psi^*}+\theta\p_V+\theta^*\p_{V^*}$, where 
\[
\tau=\frac{\mathrm d\tilde t}{\mathrm d\delta}\,\bigg|_{\delta=0},\quad
\xi=\frac{\mathrm d\tilde x}{\mathrm d\delta}\,\bigg|_{\delta=0},\quad
\eta=\frac{\mathrm d\tilde\psi}{\mathrm d\delta}\,\bigg|_{\delta=0},\quad
\theta=\frac{\mathrm d\tilde V}{\mathrm d\delta}\,\bigg|_{\delta=0}.
\]
The above procedure gives the vector fields
$\hat D(\tau)$, $\hat G(\chi)$, $\hat M(\sigma)$ and~$\hat I(\rho)$ 
for the parameter-functions~$T$, $X^0$, $\bar\Sigma$ and~$\Upsilon$,
 respectively. 
\end{proof}

Consider the point symmetry group~$G_V$ of an equation~$\mathcal L_V$ from the class~\eqref{LinSchEqs}. 
Each element~$\varphi$ of~$G_V$ generates an admissible point transformation in the class~\eqref{LinSchEqs} 
with the same initial and target arbitrary element~$V$. 
Therefore, the components of~$\varphi$ necessarily have 
the form given in~\eqref{equgroupoid_a}--\eqref{equgroupoid_b}, 
and the parameter-functions satisfy the equation~\eqref{equgroupoid_c} 
with $\tilde V(\tilde t,\tilde x)=V(\tilde t,\tilde x)$. 
The symmetry transformations defined by linear superposition of solutions to the equation~$\mathcal L_V$ 
are of the form given in~\eqref{equgroupoid_a}--\eqref{equgroupoid_b} with $T=t$ and $X^0=\Sigma=\Upsilon=0$.
They constitute a normal subgroup~$G^{\rm lin}_V$ of the group~$G_V$, 
which can be assumed to be the trivial part of~$G_V$.  
The factor group $G_V/G^{\rm lin}_V$ is isomorphic to the subgroup~$G^{\rm ess}_V$ of~$G_V$ 
that is singled out from~$G_V$ by the constraint $\Phi=0$ and will be considered as the only essential part of~$G_V$.
 
\begin{corollary}
The essential point symmetry group~$G^{\rm ess}_V$ of any equation~$\mathcal L_V$ from the class~\eqref{LinSchEqs} 
is contained in the projection~$\pi_*G^\sim$ of the equivalence group~$G^\sim$ 
to the space with local coordinates $(t,x,\psi,\psi^*)$.
\end{corollary}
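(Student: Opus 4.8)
The plan is to compare, directly, the explicit description of the essential point symmetry group~$G^{\rm ess}_V$ obtained in the paragraph preceding the statement with the description of the projected equivalence group~$\pi_*G^\sim$ that follows from Corollary~\ref{eqgpe} together with Remark~\ref{RemarkOnProjectibilityOfEquivGroupOf1+1DLinSchEqs}, and then simply to observe that every transformation of the former automatically lies in the latter. The inclusion is essentially structural: both groups consist of transformations of exactly the same shape in the variables $(t,x,\psi,\psi^*)$, and $G^{\rm ess}_V$ is merely cut out from this common shape by the extra requirement that the potential be preserved.

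First I would recall that, by Corollary~\ref{eqgpe}, every element of~$G^\sim$ is a transformation of the form~\eqref{equgroupoid} with $\Phi=0$, and that, by Remark~\ref{RemarkOnProjectibilityOfEquivGroupOf1+1DLinSchEqs}, the components of such a transformation for the variables $(t,x,\psi,\psi^*)$ are independent of~$V$, so that its pushforward $\pi_*\mathcal T$ is well defined and acts precisely by~\eqref{equgroupoid_a}--\eqref{equgroupoid_b} with $\Phi=0$. Hence $\pi_*G^\sim$ is exactly the set of point transformations of the space with coordinates $(t,x,\psi,\psi^*)$ of the form~\eqref{equgroupoid_a}--\eqref{equgroupoid_b} with $\Phi=0$, where $T$, $X^0$, $\Sigma$ and~$\Upsilon$ range over all admissible parameter-functions.

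Next I would invoke the description of $G^{\rm ess}_V$ given just above the statement: an arbitrary $\varphi\in G_V$ is an admissible transformation with coinciding source and target potential~$V$, so by Theorem~\ref{thmequivptransf} its components necessarily have the form~\eqref{equgroupoid_a}--\eqref{equgroupoid_b}, with the parameter-functions additionally subject to~\eqref{equgroupoid_c} in which $\tilde V=V$, and $G^{\rm ess}_V$ is singled out from~$G_V$ by the constraint $\Phi=0$. Therefore every $\varphi\in G^{\rm ess}_V$ has precisely the form~\eqref{equgroupoid_a}--\eqref{equgroupoid_b} with $\Phi=0$, which is exactly the shape of the elements of~$\pi_*G^\sim$. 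This yields $\varphi\in\pi_*G^\sim$, and hence $G^{\rm ess}_V\subseteq\pi_*G^\sim$.

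I do not expect a genuine obstacle here: the whole content of the corollary is the observation that the potential-preservation condition~\eqref{equgroupoid_c} with $\tilde V=V$ only restricts \emph{which} parameter-functions occur but never forces the transformational part outside the family~\eqref{equgroupoid_a}--\eqref{equgroupoid_b}. The single point deserving care is the passage from a joint-space equivalence transformation to its restriction to $(t,x,\psi,\psi^*)$, which is exactly what the projectibility statement in Remark~\ref{RemarkOnProjectibilityOfEquivGroupOf1+1DLinSchEqs} guarantees; once this is in place the inclusion is immediate.
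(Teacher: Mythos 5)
Your proposal is correct and matches the paper's reasoning: the paper states this corollary without a separate proof precisely because, as you observe, it follows immediately from the preceding description of $G^{\rm ess}_V$ (transformations of the form~\eqref{equgroupoid_a}--\eqref{equgroupoid_b} with $\Phi=0$ satisfying~\eqref{equgroupoid_c} with $\tilde V=V$) combined with Corollary~\ref{eqgpe} and Remark~\ref{RemarkOnProjectibilityOfEquivGroupOf1+1DLinSchEqs}, which identify $\pi_*G^\sim$ as all transformations of that same form with $\Phi=0$. Your closing observation---that the constraint~\eqref{equgroupoid_c} with $\tilde V=V$ only restricts the admissible parameter-functions but never pushes the transformation outside the family---is exactly the point that makes the inclusion immediate.
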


\begin{corollary}
The class~\eqref{LinSchEqs} is uniformly semi-normalized with respect to linear superposition of solutions.  
\end{corollary}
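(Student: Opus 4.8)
The plan is to invoke the criterion isolated in Section~\ref{SectionOnUniformlySemi-normalizedClasses} for classes of homogeneous linear systems: such a class is uniformly semi-normalized with respect to linear superposition of solutions as soon as (i) each system is locally solvable, (ii) the zero function is the only common solution, and (iii) every admissible transformation has the affine form $\tilde x_j=X^j(x)$, $\tilde u^a=M^{ab}(x)(u^b+h^b(x))$ with $h$ a solution of the source system and $\det M^{ab}\ne0$. I would view~\eqref{LinSchEqs} as a homogeneous linear class in the two formally independent dependent variables $\psi$ and $\psi^*$, take the uniform symmetry groups to be the linear-superposition groups $G^{\rm lin}_V$, and verify these three conditions; the decisive third one is essentially already proved in Theorem~\ref{thmequivptransf}.

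First I would dispose of the two easy conditions. Local solvability of each $\mathcal L_V$ is standard for a linear Schr\"odinger equation with smooth potential, and it guarantees that each $G^{\rm lin}_V$ is genuinely present. That $\psi\equiv0$ is the only function solving $\mathcal L_V$ for every potential follows at once by subtracting the equations for $V$ and for $V+1$, which forces $\psi=0$.

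The substantive condition is the form of admissible transformations, and this is the content of Theorem~\ref{thmequivptransf}: any $(V,\tilde V,\varphi)\in\mathcal G^\sim$ has transformational part~\eqref{equgroupoid_a}--\eqref{equgroupoid_b}, which I would rewrite as $\tilde t=T$, $\tilde x=\varepsilon|T_t|^{1/2}x+X^0$, $\tilde\psi=M(t,x)(\hat\psi+\hat\Phi)$, with the nowhere-vanishing factor $M=\exp(\cdots)$ and $\Phi$ a solution of $\mathcal L_V$. This matches the template $\tilde u^a=M^{ab}(x)(u^b+h^b(x))$ with $h=(\Phi,\Phi^*)$, the nondegeneracy constraints $T_t\ne0$ and $\det M^{ab}\ne0$ being independent of both $V$ and $\tilde V$. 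By the general criterion this already establishes the claim; under the hood it produces the family $\mathcal N_{\mathcal S}=\{G^{\rm lin}_V\}$ and the three properties of Definition~\ref{DefinitionOfUniformlySemi-normalizedClasses}. Property~3 is the factorization $\varphi=(\mathcal T|_{(x,u)})\varphi^1$ with $\varphi^2=\mathrm{id}$, obtained by taking $\varphi^1\in G^{\rm lin}_V$ to be the shift $\psi\mapsto\psi+\Phi$ and $\mathcal T\in G^\sim$ the equivalence transformation carrying the same $T,X^0,\Sigma,\Upsilon$ but $\Phi=0$ (which lies in $G^\sim$ by Corollary~\ref{eqgpe}). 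Property~1 holds because an equivalence transformation has $\Phi=0$ and hence carries no additive part, while its restriction to $(x,u)$ is faithful by Remark~\ref{RemarkOnProjectibilityOfEquivGroupOf1+1DLinSchEqs}, so $G^\sim|_{(x,u)}\cap G^{\rm lin}_V=\{\mathrm{id}\}$; Property~2 (equivariance) holds because $\mathcal T$ restricts to an $\mathbb R$-linear bijection of the solution space of $\mathcal L_V$ onto that of $\mathcal L_{\tilde V}$, so conjugating an additive shift remains an additive shift by a solution of $\mathcal L_{\tilde V}$.

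The one place demanding genuine care — and the only spot where the complex-valued setting bites — is the case $T_t<0$, where the hat operation turns~\eqref{equgroupoid_b} into $\tilde\psi=M(\psi^*+\Phi^*)$, so that $\varphi$ is antilinear in $\psi$. I would resolve this by keeping throughout the viewpoint of the Introduction and treating $(\psi,\psi^*)$ as two formally independent dependent variables: the antilinear action then becomes a linear, off-diagonal action on the pair $(\psi,\psi^*)$, the scalar $M$ becomes an invertible $2\times2$ matrix $M^{ab}(t,x)$ (diagonal for $T_t>0$, anti-diagonal for $T_t<0$), and $h=(\Phi,\Phi^*)$ is a solution of $\mathcal L_V$ in this extended sense. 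With this reading the template $\tilde u^a=M^{ab}(x)(u^b+h^b(x))$ is matched verbatim for both signs of $T_t$, and the uniform semi-normalization of~\eqref{LinSchEqs} with respect to linear superposition of solutions follows.
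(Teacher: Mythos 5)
Your proposal is correct and takes essentially the same route as the paper: the decisive content in both is Theorem~\ref{thmequivptransf}, from which one reads off the factorization $\varphi=\varphi^2\varphi^1$ of an arbitrary admissible transformation into the linear-superposition shift $\psi\mapsto\psi+\Phi$ of the source equation followed by the $\Phi=0$ transformation, which extends to an element of~$G^\sim$ by Corollary~\ref{eqgpe}; your ``under the hood'' paragraph is precisely the paper's proof, including its closing observation that conjugation by~$\varphi$ maps $G^{\rm lin}_V$ onto $G^{\rm lin}_{\tilde V}$. The additional packaging you supply --- routing through the Section~\ref{SectionOnUniformlySemi-normalizedClasses} criterion for homogeneous linear classes, checking local solvability and that zero is the only common solution, and reading the $T_t<0$ (antilinear) case as a linear anti-diagonal action on the pair $(\psi,\psi^*)$ --- is fully consistent with the paper's conventions and merely makes explicit what its proof leaves implicit.
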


\begin{proof}
We need to show that any admissible transformation in the class~\eqref{LinSchEqs} 
is the composition of a specific symmetry transformation of the initial equation and 
a transformation from~$G^\sim$. 
We consider two fixed similar equations~$\mathcal L_V$ and~$\mathcal L_{\tilde V}$ in the class~\eqref{LinSchEqs} 
and let $\varphi$ be a point transformation connecting these equations. 
Then $\varphi$ is of the form~\eqref{equgroupoid_a}--\eqref{equgroupoid_b}, 
and the potentials~$V$ and~$\tilde V$ are related by~\eqref{equgroupoid_c}. 
The point transformation $\varphi^1$ given by
$\tilde t=t$, $\tilde x=x$, $\tilde \psi=\psi+\Phi$
with the same function~$\Phi$ as in~$\varphi$ is a symmetry transformation of the initial equation, 
which is related to the linear superposition principle. 
We choose the transformation~$\varphi^2$ to be of the same form~\eqref{equgroupoid_a}--\eqref{equgroupoid_b} 
but with $\Phi=0$. 
By~\eqref{equgroupoid_c}, its extension to the arbitrary element belongs to the group~$G^\sim$. 
The transformation~$\varphi$ is the composition of $\varphi^1$ and~$\varphi^2$.

It is obvious that the transformation~$\varphi$ maps 
the subgroup of linear superposition transformations of the equation~$\mathcal L_V$ onto 
that of the equation~$\mathcal L_{\tilde V}$.
\end{proof}

\section{Analysis of determining equations for Lie symmetries}\label{deteq}

We derive the determining equations for elements from the maximal Lie invariance algebra~$\mathfrak g_V$ 
of an equation~$\mathcal L_V$ from the class~\eqref{LinSchEqs} with potential $V=V(t,x)$.
The general form of a vector field~$Q$ in the space with local coordinates $(t,x,\psi,\psi^*)$ is 
$Q=\tau\p_t+\xi\p_x+\eta\p_\psi+\eta^*\p_{\psi^*}$, 
where the components of~$Q$ are smooth functions of $(t,x,\psi,\psi^*)$. 
The vector field~$Q$ belongs to the algebra~$\mathfrak g_V$ 
if and only if it satisfies the infinitesimal invariance criterion for the equation~$\mathcal L_V$,
which gives
\begin{equation}\label{Deteqcondev}
i\eta^t+\eta^{xx}+\tau V_t\psi+\xi V_x\psi+V\eta=0
\end{equation}
for all solutions of~$\mathcal L_V$. 
Here 
\begin{gather*}
\eta^t=D_t\left(\eta-\tau\psi_t-\xi\psi_x\right)+\tau\psi_{tt}+\xi\psi_{tx},\\
\eta^{xx}=D^2_x\left(\eta-\tau\psi_t-\xi\psi_x\right)+\tau\psi_{txx}+\xi\psi_{xxx},
\end{gather*}
$D_t$ and $D_x$ denote the total derivative operators 
with respect to $t$ and $x$, respectively. 
After substituting $\psi_{xx}=-i\psi_t-V\psi$ and $\psi_{xx}^*=i\psi_t^*-V^*\psi^*$ 
into~\eqref{Deteqcondev} and splitting with respect to 
the other derivatives of~$\psi$ and~$\psi^*$ that occur, 
we obtain a linear overdetermined system of determining equations for the components of~$Q$,
\begin{gather*}
\tau_\psi=\tau_{\psi^*}=0,\quad \tau_x=0,\quad \xi_\psi=\xi_{\psi^*}=0,\quad
\tau_t=2\xi_x,\quad
\eta_{\psi^*}=\eta_{\psi\psi}=0,\quad
2\eta_{x\psi}=i\xi_t,\\
i\eta_t+\eta_{xx}+\tau V_t\psi+\xi V_x\psi+V\eta-(\eta_\psi-\tau_t)V\psi=0.
\end{gather*}
We solve the determining equations in the first line to obtain 
\begin{gather*}
\tau=\tau(t),\quad \xi=\frac 1{2}\tau_tx+\chi(t),\quad
\eta=\left(\frac i{8}\tau_{tt}x^2+\frac i{2}\chi_tx+\rho(t)+i\sigma(t)\right)\psi+\eta^0(t,x),
\end{gather*}
where $\tau$, $\chi$, $\rho$ and~$\sigma$ are smooth real-valued functions of~$t$, 
and $\eta^0$ is a complex-valued function of~$t$ and~$x$.
Then splitting the last determining equation with respect to~$\psi$, 
we derive two equations: 
\begin{gather}\label{classcondForEta0}
i\eta^0_t+\eta^0_{xx}+\eta^0 V=0,
\\ \label{classcond}
\tau V_t+\left(\frac12\tau_tx+\chi\right)V_x+\tau_tV=\frac18\tau_{ttt}x^2+
\frac12\chi_{tt}x+\sigma_t-i\rho_t-\frac i4\tau_{tt}.
\end{gather}
Although both these equations contain the potential~$V$, 
the first equation just means 
that the parameter-function~$\eta^0$ satisfies the equation~$\mathcal L_V$, 
which does not affect the structure of the algebra~$\mathfrak g_V$ when the potential~$V$ varies. 
This is why we only consider the second equation as the real \emph{classifying condition} 
for Lie symmetry operators of equations from the class~\eqref{LinSchEqs}.

We have thus proved the following result:

\begin{theorem}\label{Determing theorem}
The maximal Lie invariance algebra~$\mathfrak g_V$ of the equation~$\mathcal L_V$ 
from the class~\eqref{LinSchEqs} consists of the vector fields of the form 
$
Q=D(\tau)+G(\chi)+\sigma M +\rho I+Z(\eta^0),
$
where
\begin{gather*}
D(\tau)=\tau\p_t+\frac12\tau_tx\p_x+\frac18\tau_{tt}x^2M,\quad
G(\chi)=\chi\p_x+\frac12\chi_txM,\\
M=i\psi\p_\psi-i\psi^*\p_{\psi^*},\quad 
I=\psi\p_\psi+\psi^*\p_{\psi^*},\quad
Z(\eta^0)=\eta^0\p_\psi+\eta^0{^*}\p_{\psi^*},
\end{gather*}
the parameters $\tau$, $\chi$, $\rho$, $\sigma$ run through the set of real-valued 
smooth functions of $t$ satisfying the classifying condition~\eqref{classcond},
and $\eta^0$ runs through the solution set of the equation~$\mathcal L_V$.
\end{theorem}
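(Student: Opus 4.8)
The plan is to apply the infinitesimal invariance criterion directly, treating $\psi$ and $\psi^*$ as formally independent dependent variables as explained in the Introduction. First I would take a general vector field $Q=\tau\p_t+\xi\p_x+\eta\p_\psi+\eta^*\p_{\psi^*}$ with smooth coefficients depending on $(t,x,\psi,\psi^*)$, and require that its second prolongation annihilate the left-hand side of~\eqref{LinSchEqs} on the manifold cut out by this equation together with its complex conjugate. Computing the prolongation coefficients $\eta^t$ and $\eta^{xx}$ from the standard formulas, and then substituting $\psi_{xx}=-i\psi_t-V\psi$ (together with the conjugate relation for $\psi^*_{xx}$) to restrict to the solution manifold, reduces the invariance condition to the single equation~\eqref{Deteqcondev}.

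Next I would split~\eqref{Deteqcondev} with respect to those derivatives of $\psi$ and $\psi^*$ that remain free after the substitution. Since these derivatives enter polynomially while the coefficients of $Q$ depend only on $(t,x,\psi,\psi^*)$, each independent monomial yields a separate equation, giving an overdetermined linear system. Integrating the equations that do not involve $V$ — the universal part of the system — produces the preliminary form of the components: $\tau=\tau(t)$, $\xi=\tfrac12\tau_tx+\chi(t)$, and $\eta$ affine in $\psi$ with coefficients assembled from $\tau,\chi,\rho,\sigma$ plus an additive term $\eta^0(t,x)$ independent of $\psi$. Here the conditions $\eta_{\psi^*}=0$ and $\eta_{\psi\psi}=0$ are exactly what force $\eta$ to be independent of $\psi^*$ and affine in $\psi$, and it is the $\psi$-independent summand that will become the linear-superposition part $Z(\eta^0)$.

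The one remaining determining equation still contains $V$, so I would split it with respect to $\psi$. The coefficient of the $\psi$-independent term yields~\eqref{classcondForEta0}, which says precisely that $\eta^0$ is a solution of $\mathcal L_V$ (hence contributes the trivial part $Z(\eta^0)$ that does not constrain the structure of $\mathfrak g_V$ as $V$ varies), while the coefficient of $\psi$ yields the classifying condition~\eqref{classcond} relating $\tau,\chi,\rho,\sigma$ to $V$. Reassembling the surviving components by matching terms then gives the decomposition $Q=D(\tau)+G(\chi)+\sigma M+\rho I+Z(\eta^0)$ with the stated vector fields, which establishes the claimed form of $\mathfrak g_V$.

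The main obstacle, rather than any hard calculation, is the bookkeeping forced by the complex setting: one must consistently regard $\psi^*$ and $V^*$ as independent coordinates, track carefully which jet variables survive after the equation and its conjugate are imposed, and split correctly against the remaining free monomials. Once this splitting is organized, the integration of the universal subsystem is routine and the identification of the basis vector fields $D(\tau)$, $G(\chi)$, $M$, $I$, $Z(\eta^0)$ is a direct comparison of terms.
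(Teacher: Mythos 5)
Your proposal is correct and follows essentially the same route as the paper's own proof: apply the infinitesimal invariance criterion with $\psi,\psi^*$ treated as independent dependent variables, substitute $\psi_{xx}=-i\psi_t-V\psi$ and its conjugate, split with respect to the remaining free derivatives to get the overdetermined system, integrate the $V$-free equations to obtain $\tau=\tau(t)$, $\xi=\tfrac12\tau_t x+\chi(t)$ and $\eta$ affine in $\psi$, and finally split the last equation with respect to $\psi$ into the condition~\eqref{classcondForEta0} (i.e.\ $\eta^0$ solves $\mathcal L_V$) and the classifying condition~\eqref{classcond}. The only thing the paper adds beyond your argument is a remark, after the proof, that the theorem can alternatively be deduced from the equivalence-groupoid description (Theorem~\ref{thmequivptransf}) by differentiating one-parameter families of transformations, but this is not needed for the proof itself.
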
 

Note that Theorem~\ref{Determing theorem} can be derived from  Theorem~\ref{thmequivptransf} 
using the same technique as in Corollary~\ref{corolequivalg}. 
The algebra~$\mathfrak g_V$ consists of infinitesimal generators of one-parameter subgroups  
of the point symmetry group~$G_V$ of the equation~$\mathcal L_V$. 
In considering one-parameter subgroups of~$G_V$, 
we set $\varepsilon=1$ and $T_t>0$, i.e., $\varepsilon'=1$ 
since one-parameter subgroups are contained in the identity component of~$G_V$. 
We also represent~$\Sigma$ in the form $\Sigma=\frac14X^0X^0_t+\bar\Sigma$.
We let the parameter functions~$T$, $X^0$, $\bar\Sigma$, $\Upsilon$ and~$\Phi$
properly depend on a continuous subgroup parameter~$\delta$. 
Then we differentiate the equations~\eqref{equgroupoid} with respect to~$\delta$ 
and evaluate the result at $\delta=0$.
The corresponding infinitesimal generator 
is the vector field $Q=\tau\p_t+\xi\p_x+\eta\p_\psi+\eta^*\p_{\psi^*}$, where 
\[
\tau=\frac{\mathrm d\tilde t}{\mathrm d\delta}\,\bigg|_{\delta=0},\quad
\xi=\frac{\mathrm d\tilde x}{\mathrm d\delta}\,\bigg|_{\delta=0},\quad
\eta=\frac{\mathrm d\tilde\psi}{\mathrm d\delta}\,\bigg|_{\delta=0},
\]
and hence $Q$ has the form given in Theorem~\ref{Determing theorem}. 
The classifying condition~\eqref{classcond} is derived from the equation~\eqref{equgroupoid_c} 
with $\tilde V(\tilde t,\tilde x)=V(\tilde t,\tilde x)$.

In order to find the kernel invariance algebra $\mathfrak g^\cap$ of the class~\eqref{LinSchEqs}, $\mathfrak g^\cap:=\bigcap_V\mathfrak g_V$, 
we vary the potential~$V$ and then split the equations~\eqref{classcondForEta0} and~\eqref{classcond} with respect to~$V$ and its derivatives. 
This gives us the equations $\tau=\chi=0$, $\eta^0=0$ and $\rho_t=\sigma_t=0$.

\begin{proposition}\label{propLinSchEqkernel}
The kernel invariance algebra of the class~\eqref{LinSchEqs} is the algebra $\mathfrak g^\cap=\langle M,I\rangle$.
\end{proposition}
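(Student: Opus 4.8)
The plan is to determine the kernel invariance algebra $\mathfrak{g}^\cap$ directly from the description of $\mathfrak{g}_V$ in Theorem~\ref{Determing theorem}, exploiting the fact that by definition $\mathfrak{g}^\cap=\bigcap_V\mathfrak{g}_V$ is the set of vector fields lying in \emph{every} $\mathfrak{g}_V$ as $V$ ranges over all admissible potentials. The key observation is that the two equations~\eqref{classcondForEta0} and~\eqref{classcond} contain the potential~$V$ explicitly, so requiring a single vector field~$Q$ to satisfy them \emph{simultaneously for all~$V$} forces a drastic collapse of the parameter-functions.

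First I would treat~\eqref{classcond} as an identity that must hold for arbitrary smooth complex-valued~$V$ and its derivatives~$V_t$, $V_x$. Since $\tau$, $\chi$, $\rho$, $\sigma$ depend only on~$t$ (not on~$V$), I split~\eqref{classcond} with respect to the functionally independent quantities $V$, $V_t$, $V_x$. The coefficient of~$V_t$ gives $\tau=0$; the coefficient of~$V_x$, namely $\tfrac12\tau_t x+\chi$, then forces $\chi=0$ (using $\tau=0$); and the coefficient of~$V$, namely $\tau_t$, is automatically zero. With $\tau=\chi=0$, the right-hand side of~\eqref{classcond} reduces to $\sigma_t-i\rho_t$, which must vanish for all~$V$; since $\sigma$ and~$\rho$ are real-valued, this separates into $\sigma_t=0$ and $\rho_t=0$, so both are real constants. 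Next I would turn to~\eqref{classcondForEta0}, which states that~$\eta^0$ is a solution of~$\mathcal L_V$ for the given~$V$. As noted in the paragraph preceding Proposition~\ref{propLinSchEqkernel}, the intersection over all~$V$ requires~$\eta^0$ to be a common solution of every equation in the class~\eqref{LinSchEqs}; since the zero function is the only such common solution, we get $\eta^0=0$.

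Assembling these results, the surviving parameters are precisely the constant~$\sigma$ and the constant~$\rho$, while $\tau$, $\chi$ and~$\eta^0$ all vanish. By the parameterization in Theorem~\ref{Determing theorem}, this means $Q=\sigma M+\rho I$ with $\sigma,\rho$ constant, so indeed $\mathfrak{g}^\cap=\langle M,I\rangle$. One should also verify the reverse inclusion, but this is immediate: for constant $\sigma,\rho$ the condition~\eqref{classcond} reduces to $0=0$ and~$\eta^0=0$ trivially solves~\eqref{classcondForEta0}, so $M$ and $I$ lie in $\mathfrak{g}_V$ for every~$V$.

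The main obstacle is purely a matter of rigor in the splitting step: one must justify that $V$, $V_t$, $V_x$ may be treated as independent quantities when varying~$V$ over the whole class, i.e.\ that there is no hidden algebraic relation among them that would weaken the conclusions. This is exactly the content of ``varying the potential~$V$ and splitting with respect to~$V$ and its derivatives'' mentioned in the text, and it is legitimate here because the auxiliary system for the arbitrary elements of the class~\eqref{LinSchEqs} imposes no constraints relating $V$ to its own $t$- and $x$-derivatives. Once this independence is granted, the derivation is entirely mechanical.
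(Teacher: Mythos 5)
Your proposal is correct and follows essentially the same route as the paper: the paper's (very brief) proof likewise varies $V$ and splits~\eqref{classcondForEta0} and~\eqref{classcond} with respect to $V$ and its derivatives to obtain $\tau=\chi=0$, $\eta^0=0$ and $\rho_t=\sigma_t=0$. Your write-up merely makes the splitting explicit (and obtains $\eta^0=0$ via the no-nonzero-common-solution argument rather than by reading off the coefficient of $V$ in~\eqref{classcondForEta0}, an immaterial variation), and adds the easy reverse inclusion.
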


Consider the linear span of all vector fields from Theorem~\ref{Determing theorem} when~$V$ varies given by 
\[
\mathfrak g_\spanindex:=\langle D(\tau),G(\chi),\sigma M,
\rho I,Z(\zeta)\rangle=\textstyle\sum_V\mathfrak g_V. 
\]
Here and in the following
the parameters $\tau$, $\chi$, $\sigma$ and $\rho$ run through the set of real-valued 
smooth functions of $t$, $\zeta$ runs through the set of complex-valued smooth functions of $(t,x)$
and $\eta^0$ runs through the solution set of the equation~$\mathcal L_V$ when the potential~$V$ is fixed. 
We have $\mathfrak g_\spanindex=\textstyle\sum_V\mathfrak g_V$ 
since each vector field~$Q$ from $\mathfrak g_\spanindex$ with nonvanishing~$\tau$ or~$\chi$ 
or with jointly vanishing~$\tau$, $\chi$, $\sigma$ and $\rho$ necessarily 
belongs to~$\mathfrak g_V$ for some~$V$.
The nonzero commutation relations between vector fields spanning~$\mathfrak g_\spanindex$ are 
\begin{gather*}\label{LinSchEqs(1+1)comrel}
[D(\tau^1),D(\tau^2)]=D(\tau^1\tau^2_t-\tau^2\tau^1_t),\quad
[D(\tau),G(\chi)]=G\left(\tau\chi_t-\frac12\tau_t\chi\right),\\
[ D(\tau),\sigma M]=\tau\sigma_t M,\quad
[D(\tau),\rho I]=\tau\rho_t I,\\
[D(\tau),Z(\zeta)]=Z\left(\tau\zeta_t+\frac12\tau_tx\zeta_x-\frac i8\tau_{tt}x^2\zeta\right),\\
[G(\chi^1),G(\chi^2)]=\left(\chi^1\chi^2_t-\chi^2\chi^1_t\right)M,\quad
[G(\chi),Z(\zeta)]=Z\left(\chi\zeta_x-\frac i2\chi_t x\zeta\right),\\
[\sigma M,Z(\zeta)]=Z(-i\sigma \zeta),\quad
[\rho I,Z(\zeta)]=Z(-\rho\zeta).
\end{gather*}
The commutation relations between elements of~$\mathfrak g_\spanindex$ show 
that~$\mathfrak g_\spanindex$ itself is a Lie algebra. 
It is convenient to represent~$\mathfrak g_\spanindex$ as a semi-direct sum, 
\[
\mathfrak g_\spanindex=\mathfrak g^{\rm ess}_\spanindex\lsemioplus\mathfrak g^{\rm lin}_\spanindex, 
\quad\mbox{where}\quad
\mathfrak g^{\rm ess}_\spanindex:=\langle D(\tau),G(\chi),\sigma M,\rho I\rangle 
\quad\mbox{and}\quad
\mathfrak g^{\rm lin}_\spanindex:=\langle Z(\zeta)\rangle
\]
are a subalgebra and an abelian ideal of~$\mathfrak g_\spanindex$, respectively. 
Note that the kernel invariance algebra $\mathfrak g^\cap$ is 
an ideal of $\mathfrak g^{\rm ess}_\spanindex$ and of $\mathfrak g_\spanindex$. 
The above representation of $\mathfrak g_\spanindex$ induces a similar representation 
for each~$\mathfrak g_V$, 
\[
\mathfrak g_V=\mathfrak g^{\rm ess}_V\lsemioplus\mathfrak g^{\rm lin}_V, 
\quad\mbox{where}\quad
\mathfrak g^{\rm ess}_V:=\mathfrak g_V\cap\mathfrak g^{\rm ess}_\spanindex
\quad\mbox{and}\quad
\mathfrak g^{\rm lin}_V:=\mathfrak g_V\cap\mathfrak g^{\rm lin}_\spanindex=\langle Z(\eta^0),\,\eta^0\in\mathcal L_V\rangle
\]
are a finite-dimensional subalgebra (see Lemma~\ref{dimgess} below) and an infinite-dimensional abelian ideal of~$\mathfrak g_V$, respectively. 
We call $\mathfrak g^{\rm ess}_V$ the {\it essential Lie invariance algebra} of the equation~$\mathcal L_V$ for each~$V$. 
The ideal~$\mathfrak g^{\rm lin}_V$ consists of 
vector fields associated with transformations of {\it linear superposition} 
and therefore it is a trivial part of~$\mathfrak g_V$. 

\begin{definition}
A subalgebra~$\mathfrak s$ of~$\mathfrak g^{\rm ess}_\spanindex$ is called \emph{appropriate} 
if there exists a potential~$V$ such that $\mathfrak s=\mathfrak g^{\rm ess}_V$.
\end{definition}

The algebras~$\mathfrak g^{\rm ess}_\spanindex$ and $\mathfrak g^\sim$ are 
related to each other 
by $\mathfrak g^{\rm ess}_\spanindex=\pi_*\mathfrak g^\sim$, 
where $\pi$ is the projection of the joint space of the variables and the arbitrary 
element on the space of the variables only. 
The mapping~$\pi_*$ induced by~$\pi$ is well defined on~$\mathfrak g^\sim$ due to 
the structure of elements of~$\mathfrak g^\sim$.
Note that the vector fields~$\hat D(\tau)$, $\hat G(\chi)$, $\hat M(\sigma)$, $\hat I(\rho)$ 
spanning~$\mathfrak g^\sim$ 
are mapped by~$\pi_*$ to the vector fields~$D(\tau)$, $G(\chi)$, $\sigma M$, $\rho I$ 
spanning~$\mathfrak g^{\rm ess}_\spanindex$, respectively. 
The above relation is stronger than that implied by the specific semi-normalization of 
the class~\eqref{LinSchEqs},  
$\mathfrak g^{\rm ess}_\spanindex\subseteq\pi_*\mathfrak g^\sim$.
Since the algebra~$\mathfrak g^{\rm ess}_\spanindex$ coincides with 
the set~$\pi_*\mathfrak g^\sim$ of infinitesimal generators of one-parameter subgroups 
of the group~$\pi_*G^\sim$, the structure of \smash{$\mathfrak g^{\rm ess}_\spanindex$} is compatible with 
the action of~$\pi_*G^\sim$ on this algebra. 
Moreover, both $\mathfrak g^{\rm ess}_\spanindex$ and $\mathfrak g^{\rm lin}_\spanindex$ 
are invariant with respect to the action of the group~$\pi_*G^\sim$.
This is why the action of~$G^\sim$ on equations from the class~\eqref{LinSchEqs} 
induces the well-defined action of~$\pi_*G^\sim$ on 
the essential Lie invariance algebras of these equations, 
which are subalgebras of $\mathfrak g^{\rm ess}_\spanindex$.  
The kernel $\mathfrak g^\cap$ is obviously an ideal in~$\mathfrak g^{\rm ess}_V$ for any~$V$.

Collecting all the above arguments, we obtain the following assertion. 

\begin{proposition}
The problem of group classification of (1+1)-dimensional linear Schr\"odinger equations 
reduces to the classification of appropriate subalgebras of 
the algebra~$\mathfrak g^{\rm ess}_\spanindex$ 
with respect to the equivalence relation generated by the action of~$\pi_*G^\sim$.
\end{proposition}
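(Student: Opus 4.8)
The plan is to assemble the structural facts already established in the excerpt and read off the asserted reduction, so the argument is a synthesis rather than a fresh computation. First I would invoke the uniform semi-normalization of the class~\eqref{LinSchEqs} with respect to linear superposition of solutions, proved just above. By property~3 of Definition~\ref{DefinitionOfUniformlySemi-normalizedClasses}, any admissible transformation relating two potentials~$V$ and~$\tilde V$ factors as $\varphi=\varphi^2(\pi_*\mathcal T)\varphi^1$ with $\mathcal T\in G^\sim$, $\tilde V=\mathcal T V$, and $\varphi^1,\varphi^2$ linear superposition symmetries of the source and target equations; consequently $V$ and~$\tilde V$ are $\mathcal G^\sim$-equivalent if and only if they are $G^\sim$-equivalent. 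By Remark~\ref{RemarkOnProjectibilityOfEquivGroupOf1+1DLinSchEqs} the projection~$\pi_*$ identifies $G^\sim$ with $\pi_*G^\sim$, so classifying potentials up to $\mathcal G^\sim$ is the same as classifying them up to the action of $\pi_*G^\sim$ on the space of $(t,x,\psi,\psi^*)$.

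Next I would localize all classifying information in the essential part of the symmetry algebra. By Theorem~\ref{Determing theorem} and the subsequent discussion, each maximal Lie invariance algebra splits as $\mathfrak g_V=\mathfrak g^{\rm ess}_V\lsemioplus\mathfrak g^{\rm lin}_V$, where the infinite-dimensional abelian ideal $\mathfrak g^{\rm lin}_V=\langle Z(\eta^0)\rangle$ is the trivial part attached to linear superposition of solutions and is present, with the same structure, for every~$V$. The kernel $\mathfrak g^\cap=\langle M,I\rangle$ of Proposition~\ref{propLinSchEqkernel} lies in~$\mathfrak g^{\rm ess}_V$ as an ideal, and no proper extension of~$\mathfrak g^\cap$ can occur within~$\mathfrak g^{\rm lin}_V$. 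Hence $\mathfrak g_V\supsetneq\mathfrak g^\cap$ exactly when $\mathfrak g^{\rm ess}_V\supsetneq\mathfrak g^\cap$, so the Lie symmetry extensions to be classified are entirely carried by the essential algebras~$\mathfrak g^{\rm ess}_V$, each of which is, by the definition of an appropriate subalgebra, an appropriate subalgebra of~$\mathfrak g^{\rm ess}_\spanindex$.

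It then remains to match the two classification problems equivariantly. Here I would use that $\mathfrak g^{\rm ess}_\spanindex=\pi_*\mathfrak g^\sim$ is invariant under the action of $\pi_*G^\sim$, so that this group acts on the set of its subalgebras, together with the relation $\mathfrak g^{\rm ess}_{\mathcal T V}=(\pi_*\mathcal T)_*\mathfrak g^{\rm ess}_V$ for every $\mathcal T\in G^\sim$. The latter says that the assignment $V\mapsto\mathfrak g^{\rm ess}_V$ intertwines the action of $G^\sim$ on potentials with the action of $\pi_*G^\sim$ on appropriate subalgebras. Combining this with the two preceding paragraphs, $G^\sim$-inequivalent potentials admitting Lie symmetry extensions correspond to $\pi_*G^\sim$-inequivalent appropriate subalgebras strictly containing~$\mathfrak g^\cap$, which is precisely the claimed reduction.

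The step I expect to require the most care is establishing the equivariance relation $\mathfrak g^{\rm ess}_{\mathcal T V}=(\pi_*\mathcal T)_*\mathfrak g^{\rm ess}_V$ together with the $\pi_*G^\sim$-invariance of $\mathfrak g^{\rm ess}_\spanindex$, since this is what guarantees that passing from potentials to essential algebras neither destroys nor manufactures equivalences. This rests on the \emph{stronger} identity $\mathfrak g^{\rm ess}_\spanindex=\pi_*\mathfrak g^\sim$ rather than the mere inclusion $\mathfrak g^{\rm ess}_\spanindex\subseteq\pi_*\mathfrak g^\sim$ implied by semi-normalization alone, and on the explicit structure of $\mathfrak g^\sim$ and $G^\sim$ obtained in Corollaries~\ref{eqgpe} and~\ref{corolequivalg}. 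A secondary caveat worth recording is that the reduction is to be read modulo the additional point equivalences between classification cases that may appear and must be incorporated afterwards, as discussed in Section~\ref{SectionOnClassesOfDEs}.
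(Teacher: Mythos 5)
Your proposal is correct and takes essentially the same route as the paper, which obtains this proposition precisely by ``collecting the above arguments'': uniform semi-normalization of the class~\eqref{LinSchEqs}, the splitting $\mathfrak g_V=\mathfrak g^{\rm ess}_V\lsemioplus\mathfrak g^{\rm lin}_V$, the identity $\mathfrak g^{\rm ess}_\spanindex=\pi_*\mathfrak g^\sim$ (which the paper, like you, stresses is stronger than the inclusion $\mathfrak g^{\rm ess}_\spanindex\subseteq\pi_*\mathfrak g^\sim$ implied by semi-normalization alone), the $\pi_*G^\sim$-invariance of $\mathfrak g^{\rm ess}_\spanindex$, and the equivariance $\mathfrak g^{\rm ess}_{\mathcal T V}=(\pi_*\mathcal T)_*\mathfrak g^{\rm ess}_V$. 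One small imprecision worth fixing: the claim that $\mathfrak g_V\supsetneq\mathfrak g^\cap$ exactly when $\mathfrak g^{\rm ess}_V\supsetneq\mathfrak g^\cap$ is literally false, since $\mathfrak g_V$ strictly contains $\mathfrak g^\cap$ for every $V$ owing to the nonzero ideal $\mathfrak g^{\rm lin}_V$; the correct statement, which is what your argument actually uses, is that Lie symmetry extensions \emph{modulo the trivial part} $\mathfrak g^{\rm lin}_V$ are exactly measured by $\mathfrak g^{\rm ess}_V$.
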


Equivalently, we can classify the counterparts of appropriate subalgebras 
in~$\mathfrak g^\sim$ up to~$G^\sim$-equivalence 
and then project them to the space of variables~\cite{Bihlo&Cardoso-Bihlo&Popovych2012,Cardoso-Bihlo&Bihlo&Popovych2011}.

\section{Group classification}\label{gclas}

To classify appropriate subalgebras of the algebra~$\mathfrak g^{\rm ess}_\spanindex$, 
we need to compute the action of transformations from the group~$\pi_*G^\sim$ 
on vector fields from~$\mathfrak g^{\rm ess}_\spanindex$. 
For any transformation $\varphi\in \pi_*G^\sim $ and 
any vector field~$Q\in\mathfrak g^{\rm ess}_\spanindex$,
the pushforward action of~$\varphi$ on~$Q$ is given by
\[
\tilde Q:=\varphi_*Q=Q(T)\p_{\tilde t}+Q(X)\p_{\tilde x}+Q(\Psi)\p_{\tilde \psi}+Q(\Psi^*)\p_{\tilde\psi^*},
\]
where in each component of~$\tilde Q$ we substitute 
the expressions of the variables without tildes in terms of the ``tilded'' variables, 
$(t,x,\psi,\psi^*)=\varphi^{-1}(\tilde t,\tilde x,\tilde \psi,\tilde \psi^*)$, 
and $\varphi^{-1}$ denotes the inverse of~$\varphi$.

For convenience, we introduce the following notation for elementary transformations from $\pi_*G^\sim$, 
which generate the entire group~$\pi_*G^\sim$: 
$\mathcal D(T)$, $\mathcal G(X^0)$, $\mathcal M(\Sigma)$ and $\mathcal I(\Upsilon)$ 
respectively denote the transformations of the form~\eqref{equgroupoid_a}--\eqref{equgroupoid_b} with $\Phi=0$ and $\varepsilon=1$, 
where the parameter-functions~$T$, $X^0$, $\Sigma$ and~$\Upsilon$, successively excluding one of them, 
are set to the values corresponding to the identity transformation, 
which are $t$ for $T$ and zeroes for $X^0$, $\Sigma$ and~$\Upsilon$.
The nontrivial pushforward actions of elementary transformations from~$\pi_*G^\sim$ 
on the vector fields spanning $\mathfrak g^{\rm ess}_\spanindex$~are
\begin{gather*}
\mathcal D_*(T)D(\tau)=\tilde D(T_t\tau),\quad
\mathcal D_*(T)G(\chi)=\tilde G(T^{1/2}_t\chi),\\
\mathcal D_*(T)(\sigma M)=\sigma\tilde M,\quad
\mathcal D_*(T)(\rho I)=\rho\tilde I,\\
\mathcal G_*(X^0)D(\tau)=\tilde D(\tau)+\tilde G\left(\tau X^0_t-\frac12\tau_tX^0\right)+\left(\frac18\tau_{tt}(X^0)^2-\frac14\tau_tX^0X^0_t-\frac12\tau X^0X^0_{tt}\right)\tilde M,\\
\mathcal G_*(X^0)G(\chi)=\tilde G(\chi)+\frac12(\chi X^0_t-\chi_tX^0)\tilde M,\\
\mathcal M_*(\Sigma)D(\tau)=\tilde D(\tau)+\tau\Sigma_t\tilde M,\quad
\mathcal I_*(\Upsilon)D(\tau)=\tilde D(\tau)+\tau\Upsilon_t\tilde I,
\end{gather*}
where in each pushforward by~$\mathcal D_*(T)$ we should substitute the expression for~$t$ given by inverting the relation $\tilde t=T(t)$;
$t=\tilde t$ for the other pushforwards. 
Tildes over vector fields mean that these vector fields are represented in the new variables.

\begin{lemma}\label{dimgess}
$\dim\mathfrak g^{\rm ess}_V\leqslant 7$ for any potential $V$.
\end{lemma}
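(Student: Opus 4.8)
The plan is to turn the defining constraint on an element of $\mathfrak g^{\rm ess}_V$ into a closed first-order linear system of ordinary differential equations in~$t$ for a seven-component vector, and then read off the dimension bound from the uniqueness theorem for such systems. By Theorem~\ref{Determing theorem}, an element $Q\in\mathfrak g^{\rm ess}_V$ is completely determined by a tuple $(\tau,\chi,\sigma,\rho)$ of real-valued functions of~$t$ subject to the classifying condition~\eqref{classcond}, and conversely every solution of~\eqref{classcond} produces such a $Q$. Hence it suffices to bound the dimension of the space of these tuples. (The companion equation~\eqref{classcondForEta0} plays no role here, since it only constrains the trivial summand $Z(\eta^0)\in\mathfrak g^{\rm lin}_V$.)

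First I would use that the right-hand side of~\eqref{classcond} is a polynomial of degree two in~$x$, while the left-hand side carries~$V$ and its first derivatives. Differentiating~\eqref{classcond} once in~$x$ gives
\[
\tau V_{tx}+\tfrac32\tau_t V_x+\bigl(\tfrac12\tau_t x+\chi\bigr)V_{xx}=\tfrac14\tau_{ttt}x+\tfrac12\chi_{tt},
\]
and differentiating a second time gives
\[
\tau V_{txx}+2\tau_t V_{xx}+\bigl(\tfrac12\tau_t x+\chi\bigr)V_{xxx}=\tfrac14\tau_{ttt}.
\]
Now fix any value $x=x_0$ in the $x$-domain. The second relation, whose right-hand side no longer depends on~$x$, expresses $\tau_{ttt}$ as a linear combination of $\tau,\tau_t,\chi$ with coefficients that are smooth functions of~$t$ (the derivatives of~$V$ evaluated at $x_0$). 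Substituting this into the first relation at $x=x_0$ expresses $\chi_{tt}$ through $\tau,\tau_t,\chi$; substituting both into~\eqref{classcond} itself at $x=x_0$ expresses $\sigma_t-i\rho_t$ as a complex linear combination of $\tau,\tau_t,\tau_{tt},\chi$. Since $\sigma$ and $\rho$ are real while~$V$ is complex, taking real and imaginary parts of this last identity yields $\sigma_t$ and $\rho_t$ separately as real linear combinations of $\tau,\tau_t,\tau_{tt},\chi$.

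Together with the tautological relations $\dot\tau=\tau_t$, $\dot{(\tau_t)}=\tau_{tt}$ and $\dot\chi=\chi_t$, these formulas form a homogeneous linear system $\dot Y=A(t)Y$ for $Y=(\tau,\tau_t,\tau_{tt},\chi,\chi_t,\sigma,\rho)$ with a smooth $7\times7$ coefficient matrix $A(t)$. The linear map sending $Q\in\mathfrak g^{\rm ess}_V$ to $Y(t_0)\in\mathbb R^7$ is then injective: if $Y(t_0)=0$, then by uniqueness for linear ODEs $Y\equiv0$, so $\tau\equiv\chi\equiv\sigma\equiv\rho\equiv0$ and $Q=0$. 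Therefore $\dim\mathfrak g^{\rm ess}_V\leqslant7$. (As a sanity check, the seven independent data $\tau,\tau_t,\tau_{tt},\chi,\chi_t,\sigma,\rho$ correspond to the $3+2+1+1$ dimensions attached to $D(\tau)$, $G(\chi)$, $M$ and $I$, matching the maximal, free-particle case.)

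The one point needing care is the closing-up of the system, namely verifying that after differentiating in~$x$ and evaluating at a single point $x_0$ one genuinely recovers each of $\tau_{ttt},\chi_{tt},\sigma_t,\rho_t$. This is unobstructed because in each of the three relations the quantity to be isolated occurs with a nonzero constant coefficient ($\tfrac14$, $\tfrac12$ and $1$, respectively), so no nondegeneracy assumption on~$V$ is required and the argument is valid for every potential. I expect this closing-up, together with the real/imaginary splitting forced by the complex-valued potential, to be the main thing to get right; the dimension count itself is then immediate from linear ODE theory.
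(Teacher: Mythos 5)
Your proof is correct and takes essentially the same route as the paper: both turn the classifying condition~\eqref{classcond} into a closed linear ODE system in~$t$ of total order $3+2+1+1=7$ for $(\tau,\chi,\sigma,\rho)$ (using the real/imaginary split forced by the complex potential) and read off the bound from the dimension of the solution space of such a system. The only cosmetic difference is in how the system is extracted from the quadratic-in-$x$ structure of~\eqref{classcond}: you differentiate twice in~$x$ and evaluate at a single point~$x_0$, whereas the paper evaluates the condition at three points $x_0-\delta$, $x_0$, $x_0+\delta$ and solves the resulting linear (Vandermonde-type) system for $\tau_{ttt}$, $\chi_{tt}$, $\sigma_t-i\rho_t$.
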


\begin{proof}
Since we work within the local framework, we can assume 
that the equation $\mathcal L_V$ is considered on a domain of the form $\Omega_0\times\Omega_1$, 
where $\Omega_0$ and $\Omega_1$ are intervals on the $t$- and $x$-axes, respectively. 
Then we successively evaluate the classifying condition~\eqref{classcond} at three different points 
$x=x_0-\delta$, $x=x_0$ and $x=x_0+\delta$ from~$\Omega_1$ for varying~$t$. This gives
 \begin{gather*}
\frac18\tau_{ttt}(x_0-\delta)^2+\frac12\chi_{tt}(x_0-\delta)-i\rho_t+\sigma_t-\frac i4\tau_{tt}=R_1,\\
\frac18\tau_{ttt}x_0^2+\frac12\chi_{tt}x_0-i\rho_t+\sigma_t-\frac i4\tau_{tt}=R_2,\\
\frac18\tau_{ttt}(x_0+\delta)^2+\frac12\chi_{tt}(x_0+\delta)-i\rho_t+\sigma_t-\frac i4\tau_{tt}=R_3,
\end{gather*}
where the right hand sides~$R_1$, $R_2$ and $R_3$ are the results of substituting the above values of $x$ into
$\tau V_t+\left(\frac12\tau_tx+\chi\right)V_x+\tau_tV$.
Combining the above equations and splitting them into real and imaginary parts, 
we obtain a canonical system of linear ordinary differential equations of the form
\begin{gather*}
 \tau_{ttt}=\dots,\quad
 \chi_{tt}=\dots,\quad
 \rho_t=\dots,\quad
 \sigma_t=\dots
\end{gather*}
to $\tau$, $\chi$, $\rho$ and $\sigma$.
The qualification ``canonical'' means that the system is solved 
with respect to the highest-order derivatives.  
The right hand sides of all its equations are denoted by dots 
since their precise form is not important for our argument. 
It is obvious that the solution set of the above system is 
a linear space and parameterized by seven arbitrary constants. 
\end{proof}
In order to classify appropriate subalgebras of~$\mathfrak g^{\rm ess}_\spanindex$, 
for each subalgebra $\mathfrak s$ of~$\mathfrak g^{\rm ess}_\spanindex$ 
we introduce two integers 
\[
k_1=k_1(\mathfrak s):=\dim\pi^0_*\mathfrak s,\quad
k_2=k_2(\mathfrak s):=\dim\mathfrak s\cap\langle G(\chi),\sigma M,\rho I\rangle-2,
\]
where $\pi^0$ denotes the projection onto the space of the variable $t$ 
and $
\pi^0_*\mathfrak s\subset\pi^0_*\mathfrak g^{\rm ess}_\spanindex=\langle \tau\p_t\rangle$. 
The values of $k_1$ and $k_2$
are invariant under the action of~$\pi_*G^\sim$.

\begin{lemma}\label{gclaslema2}
$\pi^0_*\mathfrak g^{\rm ess}_V$ is a Lie algebra for any potential $V$ and
$k_1=\dim\pi^0_*\mathfrak g^{\rm ess}_V\leqslant 3$.
Further, $\pi^0_*\mathfrak g^{\rm ess}_V\in
\{0,\langle \p_t\rangle,\langle \p_t,t\p_t\rangle,\langle \p_t,t\p_t,t^2\p_t\rangle\}
\bmod \pi^0_*G^\sim$. 
\end{lemma}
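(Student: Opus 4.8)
The plan is to notice that $\pi^0_*$ restricts to a Lie-algebra homomorphism on~$\mathfrak g^{\rm ess}_V$, deduce finite-dimensionality from Lemma~\ref{dimgess}, and then read off the dimension bound and the four normal forms from the classical Lie classification of finite-dimensional Lie algebras of vector fields on the line.

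First I would check that $\pi^0_*$ is a homomorphism. Every $Q\in\mathfrak g^{\rm ess}_V$ has the form $Q=\tau\p_t+\xi\p_x+\eta\p_\psi+\eta^*\p_{\psi^*}$ with $\tau=\tau(t)$, so $\pi^0_*Q=\tau\p_t$ is a well-defined vector field on~$\Omega_0$. Since $\tau^1$ and $\tau^2$ depend on~$t$ alone, the $\p_t$-coefficient of the commutator $[Q_1,Q_2]$ equals $Q_1(\tau^2)-Q_2(\tau^1)=\tau^1\tau^2_t-\tau^2\tau^1_t$, whence $\pi^0_*[Q_1,Q_2]=(\tau^1\tau^2_t-\tau^2\tau^1_t)\p_t=[\pi^0_*Q_1,\pi^0_*Q_2]$. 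This is consistent with the relation $[D(\tau^1),D(\tau^2)]=D(\tau^1\tau^2_t-\tau^2\tau^1_t)$ together with the fact that all other brackets among the spanning fields of $\mathfrak g^{\rm ess}_\spanindex$ produce only $G$-, $M$- or $I$-terms, which lie in $\ker\pi^0_*$. Being the image of a Lie algebra under a homomorphism, $\pi^0_*\mathfrak g^{\rm ess}_V$ is a Lie algebra, which proves the first assertion.

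Finite-dimensionality is then immediate: $\dim\pi^0_*\mathfrak g^{\rm ess}_V\leqslant\dim\mathfrak g^{\rm ess}_V\leqslant7$ by Lemma~\ref{dimgess}, so $\pi^0_*\mathfrak g^{\rm ess}_V$ is a finite-dimensional Lie algebra of vector fields on the interval~$\Omega_0$. The bound $k_1\leqslant3$ and the list of representatives then both follow from Lie's classification of such algebras: up to a local change of the independent variable~$t$, which is precisely the action of~$\pi^0_*G^\sim$ since by Corollary~\ref{eqgpe} the time component of every equivalence transformation is an arbitrary reparametrization $\tilde t=T(t)$ with $T_t\neq0$, each finite-dimensional Lie algebra of vector fields on the line coincides with one of $0$, $\langle\p_t\rangle$, $\langle\p_t,t\p_t\rangle$ or $\langle\p_t,t\p_t,t^2\p_t\rangle$, the last being the projective realization of $\mathfrak{sl}(2,\mathbb R)$.

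The main obstacle is this last, classification step. If a self-contained argument is preferred, I would prove $k_1\leqslant3$ directly by fixing $t_0\in\Omega_0$ and considering the linear $2$-jet map $\tau\p_t\mapsto(\tau(t_0),\tau_t(t_0),\tau_{tt}(t_0))\in\mathbb R^3$; writing $\mathfrak a^{(k)}$ for the subalgebra of $\pi^0_*\mathfrak g^{\rm ess}_V$ consisting of fields vanishing to order at least $k+1$ at~$t_0$ and using $[\mathfrak a^{(j)},\mathfrak a^{(k)}]\subseteq\mathfrak a^{(j+k)}$, one shows that, on a subinterval where some field is nonvanishing, no nonzero element can vanish to order three at~$t_0$, so this map is injective and the dimension is at most~$3$. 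Straightening a nonvanishing field to~$\p_t$ and normalizing the residual freedom then yields the four normal forms. The delicate point throughout is transitivity: a field $\tau\p_t$ may vanish at interior points of~$\Omega_0$, which is exactly why the classification is stated locally and up to~$\pi^0_*G^\sim$, in the generic setting in which Lie's theorem applies.
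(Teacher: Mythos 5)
Your proposal is correct and follows essentially the same route as the paper: show that $\pi^0_*$ preserves linear combinations and Lie brackets so that $\pi^0_*\mathfrak g^{\rm ess}_V$ is a Lie algebra, bound its dimension by $\dim\mathfrak g^{\rm ess}_V\leqslant 7$ via Lemma~\ref{dimgess}, observe that $\pi^0_*G^\sim$ is the full group of local diffeomorphisms of the $t$-line, and invoke Lie's theorem on finite-dimensional Lie algebras of vector fields on the real line to obtain $k_1\leqslant 3$ and the four normal forms. Your optional self-contained jet/filtration argument replacing the citation of Lie's theorem is a sound (and standard) supplement, but it does not change the structure of the proof, which matches the paper's.
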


\begin{proof}
To prove that $\pi^0_*\mathfrak g^{\rm ess}_V$ is a Lie algebra 
we show that it is a linear subspace and closed under Lie bracket of vector fields.
Given $\tau^j\p_t\in \pi^0_*\mathfrak g^{\rm ess}_V$, $ j=1,2$,
there exist $Q^j\in\mathfrak g^{\rm ess}_V$ such that $\pi^0_*Q^j=\tau^j\p_t$. 
Then for any real constants $c_1$ and $c_2$ 
we have $c_1Q^1+c_2Q^2\in \mathfrak g^{\rm ess}_V$.
Therefore, $c_1\tau^1\p_t+c_2\tau^2\p_t=\pi^0_*(c_1Q^1+c_2Q^2)\in\pi^0_*\mathfrak g^{\rm ess}_V$.
Next, $[\tau^1\p_t,\tau^2\p_t]=(\tau^1\tau^2_t-\tau^2\tau^1_t)\p_t
=\pi^0_*[Q^1,Q^2]\in\pi^0_*\mathfrak g^{\rm ess}_V$. 
Further, $\dim\pi^0_*\mathfrak g^{\rm ess}_V\leqslant\dim\mathfrak g^{\rm ess}_V \leqslant 7$.

Thus $\pi^0_*\mathfrak g^{\rm ess}_V$ is a finite-dimensional subalgebra of 
the Lie algebra $\pi^0_*\mathfrak g^{\rm ess}_\spanindex$ 
of vector fields on the real line.
The group $\pi^0_*G^\sim$ coincides with the entire group of local diffeomorphisms of the real line 
and the rest of the lemma follows from Lie's theorem on 
finite-dimensional Lie algebras of vector fields on the real line.
\end{proof}

\begin{lemma}\label{gclaslema}
If a vector field $Q$ is of the form $Q=G(\chi)+\sigma M+\rho I$ with $\chi\neq 0$, 
then $Q=G(1)+\tilde\rho I\bmod\pi_*G^\sim$ for another function $\tilde\rho$. 
\end{lemma}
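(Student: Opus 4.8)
The plan is to reduce $Q$ in two steps using only the elementary equivalence transformations $\mathcal D(T)$ and $\mathcal G(X^0)$ from $\pi_*G^\sim$. The transformations $\mathcal M(\Sigma)$ and $\mathcal I(\Upsilon)$ act nontrivially only on the $D(\tau)$-component of a vector field, as shown by the listed pushforward actions $\mathcal M_*(\Sigma)D(\tau)=\tilde D(\tau)+\tau\Sigma_t\tilde M$ and $\mathcal I_*(\Upsilon)D(\tau)=\tilde D(\tau)+\tau\Upsilon_t\tilde I$; hence they leave a field of the present shape (with no $D$-term) unchanged and play no role here. Since we work in the local framework, I may assume that $\chi$ is nowhere vanishing on the domain under consideration, so that $\chi^{-2}$ is a well-defined positive smooth function of~$t$.

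First I would normalize the coefficient of~$G$. By the pushforward action $\mathcal D_*(T)G(\chi)=\tilde G(T_t^{1/2}\chi)$, choosing $T$ with $T_t=\chi^{-2}>0$ (which exists by integration and is compatible with $\varepsilon'=1$) turns $G(\chi)$ into $\tilde G(1)$. Under the same transformation one has $\mathcal D_*(T)(\sigma M)=\sigma\tilde M$ and $\mathcal D_*(T)(\rho I)=\rho\tilde I$, so that $Q$ becomes $\tilde G(1)+\sigma\tilde M+\rho\tilde I$, where $\sigma$ and $\rho$ are now read as functions of $\tilde t=T(t)$, that is, reparametrized by~$T^{-1}$. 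Renaming $\tilde t$ as~$t$, I reduce to the case $\chi=1$.

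Next I would remove the $M$-term. With $\chi=1$ the action $\mathcal G_*(X^0)G(1)=\tilde G(1)+\frac12 X^0_t\tilde M$ holds, while $\mathcal G(X^0)$ fixes the $M$- and $I$-generators, $\mathcal G_*(X^0)(\sigma M)=\sigma\tilde M$ and $\mathcal G_*(X^0)(\rho I)=\rho\tilde I$ (these actions are trivial and hence absent from the list of nontrivial pushforwards; a one-line computation confirms that multiplying~$\psi$ by an $(x,t)$-dependent factor commutes with the phase and scaling generators). Hence $\mathcal G_*(X^0)Q=\tilde G(1)+\left(\sigma+\frac12 X^0_t\right)\tilde M+\rho\tilde I$, and choosing $X^0$ with $X^0_t=-2\sigma$ (again obtained by integration) annihilates the coefficient of~$\tilde M$. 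This leaves $Q=G(1)+\tilde\rho I\bmod\pi_*G^\sim$ with $\tilde\rho$ the suitably reparametrized~$\rho$, as claimed.

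The only delicate point is the solvability of the two auxiliary ordinary differential equations $T_t=\chi^{-2}$ and $X^0_t=-2\sigma$: the first requires $\chi\ne0$ pointwise, which is precisely where the hypothesis $\chi\neq0$ (in the local sense) enters, and both are integrated without obstruction once this is granted. I would also spell out the verification that $\mathcal G(X^0)$ acts as the identity on the $\sigma M$ and $\rho I$ summands, since the reduction hinges on the normalized $G(1)$-coefficient and the $I$-component surviving the second step intact.
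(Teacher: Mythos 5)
Your proof is correct and takes essentially the same approach as the paper: both reduce $Q$ by pushing it forward with the two elementary transformations $\mathcal D(T)$ (with $T_t=1/\chi^2$) and $\mathcal G(X^0)$, each obtained by solving an auxiliary ODE that is solvable precisely because $\chi$ is locally nonvanishing. The only difference is the order of composition --- the paper applies $\mathcal G(X^0)$ first, solving $\chi X^0_t-\chi_t X^0=2\sigma$ for general $\chi$, and then normalizes $\chi$ by $\mathcal D(T)$, whereas you normalize $\chi$ first so that the equation for $X^0$ simplifies to $X^0_t=-2\sigma$; this is an immaterial reordering.
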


\begin{proof}
We successively push forward the vector field $Q$ by the transformations 
$\mathcal G(X^0)$ and $\mathcal D(T)$, where $X^0$ and $T$ are arbitrary fixed solutions 
of the ordinary differential equations $\chi X^0_t-\chi_t X^0=2\sigma$ 
and $T_t=1/\chi^2$, respectively. This leads to a vector field of the same form, 
with $\chi=1$ and $\sigma=0$.
\end{proof}

\begin{lemma}\label{lemmatauzero}
If $G(1)+\rho^1 I \in\mathfrak g^{\rm ess}_V$, then also $G(t)+\rho^2 I\in \mathfrak g^{\rm ess}_V$ with
$\rho^2=\int t\rho^1_t \,{\rm d}t$.
\end{lemma}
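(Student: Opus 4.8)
The plan is to read off membership in $\mathfrak g^{\rm ess}_V$ directly from the classifying condition~\eqref{classcond}, which characterizes exactly when a vector field $D(\tau)+G(\chi)+\sigma M+\rho I$ lies in $\mathfrak g^{\rm ess}_V$. Both the hypothesis $G(1)+\rho^1 I$ and the desired conclusion $G(t)+\rho^2 I$ have $\tau=0$ and $\sigma=0$, so only a small portion of~\eqref{classcond} is relevant, and the whole argument stays within this slice.

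First I would substitute the parameters of the hypothesis, $\tau=0$, $\chi=1$, $\sigma=0$ and $\rho=\rho^1$, into~\eqref{classcond}. On the right-hand side the terms $\frac18\tau_{ttt}x^2$, $\frac12\chi_{tt}x$ and $\frac i4\tau_{tt}$ all vanish (as $\tau\equiv0$ and $\chi_{tt}=0$ for constant~$\chi$), while the left-hand side reduces to $\chi V_x=V_x$. Thus $G(1)+\rho^1 I\in\mathfrak g^{\rm ess}_V$ is equivalent to the single relation $V_x=-i\rho^1_t$. The structural fact to record here is that this forces $V_x$ to be a function of~$t$ alone.

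Next I would test the target. Substituting $\tau=0$, $\chi=t$, $\sigma=0$ and $\rho=\rho^2$ into~\eqref{classcond}, the right-hand side again loses its $x$-dependent terms, since $\chi_{tt}=0$ for $\chi=t$, leaving the condition $tV_x=-i\rho^2_t$. Inserting the relation $V_x=-i\rho^1_t$ obtained above turns the left-hand side into $-it\rho^1_t$, so the condition collapses to $\rho^2_t=t\rho^1_t$. This is precisely the equation solved by the prescribed primitive $\rho^2=\int t\rho^1_t\,{\rm d}t$, which establishes $G(t)+\rho^2 I\in\mathfrak g^{\rm ess}_V$ and completes the argument.

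I do not anticipate a genuine obstacle: the result is an algebraic consequence of the linearity of~\eqref{classcond} in the group parameters, together with the fact that the hypothesis pins $V_x$ down as a function of~$t$ only. The single point requiring care is the bookkeeping of which terms in~\eqref{classcond} survive — in particular, confirming that the $\frac12\chi_{tt}x$ term drops out for both $\chi=1$ and $\chi=t$, so that no spurious $x$-dependence appears and $\rho^2$ can indeed be taken as a function of~$t$ alone.
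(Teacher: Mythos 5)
Your proposal is correct and follows exactly the paper's own argument: test the hypothesis in the classifying condition~\eqref{classcond} to obtain $V_x=-i\rho^1_t$, then multiply by $t$ and observe that the resulting relation $tV_x=-i\rho^2_t$ is the classifying condition for $\tau=\sigma=0$, $\chi=t$, $\rho=\rho^2=\int t\rho^1_t\,{\rm d}t$. The only difference is that you spell out the term-by-term bookkeeping that the paper leaves implicit.
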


\begin{proof}
The fact that $G(1)+\rho^1 I \in\mathfrak g^{\rm ess}_V$ means that the values $\tau=\sigma=0$, 
$\chi=1$ and $\rho=\rho^1$ satisfy the classifying condition~\eqref{classcond} 
with the given potential $V$, which gives $V_x=-i\rho_t$. 
Then $tV_x=-it\rho_t$ implies that the classifying condition~\eqref{classcond} is also satisfied by $\tau=\sigma=0$, 
$\chi=t$ and $\rho^2=\int t\rho^1_t \,{\rm d}t$.
\end{proof}

\begin{lemma}
 $\mathfrak g^{\rm ess}_V\cap\langle \sigma M,\rho I\rangle=\mathfrak g^\cap$ for any potential $V$.
\end{lemma}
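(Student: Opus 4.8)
The plan is to prove the asserted equality by establishing the two inclusions separately; both follow almost immediately from the classifying condition~\eqref{classcond} together with the explicit form of the kernel given in Proposition~\ref{propLinSchEqkernel}.

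First I would verify the inclusion $\mathfrak g^\cap\subseteq\mathfrak g^{\rm ess}_V\cap\langle\sigma M,\rho I\rangle$. By Proposition~\ref{propLinSchEqkernel}, $\mathfrak g^\cap=\langle M,I\rangle$, so any element of $\mathfrak g^\cap$ is a constant-coefficient combination $c_1M+c_2I$ and thus lies in $\langle\sigma M,\rho I\rangle$ (take the parameter-functions $\sigma\equiv c_1$, $\rho\equiv c_2$). On the other hand, the kernel is defined as $\mathfrak g^\cap=\bigcap_V\mathfrak g_V$, whence $\mathfrak g^\cap\subseteq\mathfrak g_V$; since $M,I\in\mathfrak g^{\rm ess}_\spanindex$, this yields $\mathfrak g^\cap\subseteq\mathfrak g_V\cap\mathfrak g^{\rm ess}_\spanindex=\mathfrak g^{\rm ess}_V$.

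For the reverse inclusion I would take an arbitrary $Q\in\mathfrak g^{\rm ess}_V\cap\langle\sigma M,\rho I\rangle$ and write it as $Q=\sigma(t)M+\rho(t)I$ with real-valued smooth functions $\sigma$ and $\rho$; in the parameterization of Theorem~\ref{Determing theorem} this is exactly the case $\tau=0$, $\chi=0$. Since $Q\in\mathfrak g_V$, its parameter-functions must satisfy the classifying condition~\eqref{classcond}, and setting $\tau=\chi=0$ there makes every term on the left-hand side together with the three $\tau$- and $\chi$-dependent terms on the right-hand side vanish, leaving
\[
\sigma_t-i\rho_t=0.
\]
Because $\sigma$ and $\rho$ are real-valued, the real and imaginary parts of this scalar relation split to give $\sigma_t=0$ and $\rho_t=0$, so $\sigma$ and $\rho$ are constants and $Q\in\langle M,I\rangle=\mathfrak g^\cap$.

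There is no serious obstacle in this argument: the whole point is that the classifying condition degenerates into a purely algebraic (indeed homogeneous) relation once the $\p_t$- and $\p_x$-components $\tau$ and $\chi$ are switched off. The only step deserving a word of care is invoking the reality of $\sigma$ and $\rho$ to split $\sigma_t-i\rho_t=0$ into two independent real equations; overlooking this would falsely suggest a surviving one-parameter family of non-constant solutions.
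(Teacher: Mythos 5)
Your proof is correct and follows essentially the same route as the paper: the reverse inclusion is immediate from the kernel's definition, and the forward inclusion comes from setting $\tau=\chi=0$ in the classifying condition~\eqref{classcond}, which collapses to $\sigma_t-i\rho_t=0$ and hence $\sigma_t=\rho_t=0$ by reality of $\sigma$ and $\rho$. Your write-up merely makes explicit two points the paper leaves implicit (the real/imaginary splitting and the membership $M,I\in\mathfrak g^{\rm ess}_V$), which is fine.
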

\begin{proof}
We need to show that $\mathfrak g^{\rm ess}_V\cap\langle \sigma M,\rho I\rangle \subset\mathfrak g^\cap$ 
and $\mathfrak g^\cap\subset\mathfrak g^{\rm ess}_V\cap\langle \sigma M,\rho I\rangle$.
The first inclusion follows from the classifying condition~\eqref{classcond} for $\tau=\chi=0$, 
which implies $\sigma_t=\rho_t=0$. The second inclusion is obvious 
since the kernel invariance algebra $\mathfrak g^\cap$ is contained in $\mathfrak g^{\rm ess}_V$ 
for any $V$. 
\end{proof}
\begin{lemma}\label{lemcondfork_2}
$k_2=\dim\mathfrak g^{\rm ess}_V\cap\langle G(\chi),\sigma M,\rho I\rangle-2 \in \{0,2\}$ 
for any potential $V$. 
\end{lemma}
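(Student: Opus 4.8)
The plan is to project the intersection $\mathfrak g^{\rm ess}_V\cap\langle G(\chi),\sigma M,\rho I\rangle$ onto its $G$-component and to control the resulting space of functions~$\chi$. Every element of the intersection has the form $Q=G(\chi)+\sigma M+\rho I$ with $\tau=0$, so the classifying condition~\eqref{classcond} collapses to the single relation
\[
\chi V_x=\tfrac12\chi_{tt}\,x+\sigma_t-i\rho_t .
\]
The linear map $Q\mapsto\chi$ carries the intersection onto a space $K$ of admissible functions~$\chi$, and its kernel consists of the elements with $\chi=0$, i.e.\ of those $\sigma M+\rho I$ lying in $\mathfrak g^{\rm ess}_V$, which by the preceding lemma is exactly $\mathfrak g^\cap=\langle M,I\rangle$ of dimension~$2$. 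Hence $\dim\bigl(\mathfrak g^{\rm ess}_V\cap\langle G(\chi),\sigma M,\rho I\rangle\bigr)=2+\dim K$ and $k_2=\dim K$, so the whole assertion reduces to proving $\dim K\in\{0,2\}$.

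For the upper bound I would argue as follows. If some $\chi\in K$ is nonzero, then on the set where $\chi\neq0$ the displayed relation forces $V_x$ to be affine in~$x$, whence $V=\alpha(t)x^2+\beta(t)x+\gamma(t)$ with $\alpha=\chi_{tt}/(4\chi)$ necessarily real. Substituting this form back and splitting by powers of~$x$ then yields, for \emph{every} $\chi\in K$, the one fixed second-order linear ODE $\chi_{tt}=4\alpha\chi$, the remaining relation $\beta\chi=\sigma_t-i\rho_t$ merely recovering $\sigma$ and $\rho$ by integration. Since all admissible~$\chi$'s solve a single homogeneous second-order linear ODE, $\dim K\leqslant2$.

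It then remains to exclude $\dim K=1$, and this is the real heart of the matter — it is exactly the doubling phenomenon recorded in Lemmas~\ref{gclaslema} and~\ref{lemmatauzero}. Since $k_2$ is invariant under $\pi_*G^\sim$, if $K\neq\{0\}$ I may first use Lemma~\ref{gclaslema} to reduce, modulo $\pi_*G^\sim$, to an element $G(1)+\rho^1I\in\mathfrak g^{\rm ess}_V$, placing $\chi=1$ in~$K$ for the transformed potential; Lemma~\ref{lemmatauzero} then delivers $G(t)+\rho^2I\in\mathfrak g^{\rm ess}_V$, placing $\chi=t$ in~$K$ as well. As $1$ and~$t$ are linearly independent, $\dim K\geqslant2$, and together with the upper bound this gives $\dim K=2$. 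The main obstacle is thus precisely this exclusion of the intermediate value: the bound $\dim K\leqslant2$ is routine once the ODE structure is visible, but ruling out a one-dimensional~$K$ rests on the nontrivial fact that a single $G$-type symmetry automatically generates a second independent one. I would be careful to check that the two produced vector fields genuinely sit in $\mathfrak g^{\rm ess}_V\cap\langle G(\chi),\sigma M,\rho I\rangle$ (both have $\tau=0$) and that replacing $V$ by a $\pi_*G^\sim$-equivalent potential is harmless because $k_2$ is an invariant of that action.
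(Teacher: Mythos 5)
Your proof is correct, and its skeleton coincides with the paper's own: your dichotomy $K=\{0\}$ versus $K\neq\{0\}$ is exactly the paper's case split on whether $\mathfrak a_V:=\mathfrak g^{\rm ess}_V\cap\langle G(\chi),\sigma M,\rho I\rangle$ lies inside $\langle\sigma M,\rho I\rangle$; your identification of the kernel of $Q\mapsto\chi$ with $\mathfrak g^\cap$ invokes the same preceding lemma; and the crucial exclusion of the intermediate value rests, exactly as in the paper, on Lemma~\ref{gclaslema} followed by Lemma~\ref{lemmatauzero} to manufacture the second independent $G$-type field, with the $\pi_*G^\sim$-invariance of $k_2$ justifying the normalization. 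The one genuine divergence is the upper bound. The paper never solves for $V$: it reuses the three-point evaluation trick from the proof of Lemma~\ref{dimgess} under the constraint $\tau=0$, which yields a canonical linear ODE system in $(\chi,\rho,\sigma)$ whose solution space is four-dimensional, hence $\dim\mathfrak a_V\leqslant 4$ without any hypothesis on $V$ or on zeros of $\chi$. You instead solve the classifying condition~\eqref{classcond} for $V$ using one chosen nonzero $\chi\in K$ and conclude that every member of $K$ obeys the single ODE $\chi'_{tt}=4\alpha\chi'$ with $\alpha=\chi_{tt}/(4\chi)$. This is more explicit (and foreshadows the quadratic potentials of Cases~4a--4c), but as stated it is only valid on the open set where the chosen $\chi$ does not vanish; since a smooth $\chi$ may vanish on a set that disconnects the $t$-interval, the inequality you obtain is $\dim K|_J\leqslant2$ on a subinterval $J$, and restriction to $J$ need not be injective on $K$, so $\dim K\leqslant2$ does not immediately follow. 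The gap is harmless and closes by reordering your own steps: invoke Lemma~\ref{gclaslema} first, so that $\chi=1\in K$ for the transformed potential; this nowhere-vanishing representative gives $V_x=-i\rho^1_t$, whence splitting in $x$ gives $\chi'_{tt}=0$ for every $\chi'\in K$ on the whole interval, and the bound (for the original $V$) follows by invariance of $k_2$. In short, the paper's evaluation argument buys robustness against zeros of $\chi$, while your route buys an explicit description of the admissible potentials; with the indicated reordering the two are equally rigorous.
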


\begin{proof}
Denote $\mathfrak a_V:=\mathfrak g^{\rm ess}_V\cap\langle G(\chi),\sigma M,\rho I\rangle$. 

If $\mathfrak a_V\subseteq\langle\sigma M,\rho I\rangle$, 
then $\mathfrak a_V=\mathfrak g^{\rm ess}_V\cap\langle \sigma M,\rho I\rangle=\mathfrak g^\cap$, 
i.e., $k_2=\dim \mathfrak a_V-2=0$.

If $\mathfrak a_V\nsubseteq\langle\sigma M,\rho I\rangle$, then there exists 
$Q^1\in\mathfrak a_V$ such that $Q^1\notin\langle\sigma M,\rho I\rangle$. 
From Lemma \ref{gclaslema}, up to $\pi_*G^\sim$-equivalence 
we may assume that $Q^1$ is locally of the form $Q^1=G(1)+\rho^1I$. 
Then Lemma~\ref{lemmatauzero} implies 
that $Q^2=G(t)+\rho^2 I$ with $\rho^2=\int t\rho^1_t\,{\rm d}t$ belongs to~$\mathfrak a_V$. 
We also have $\mathfrak a_V\supset\mathfrak g^\cap$.
It then follows that $\langle M, I, Q^1,Q^2\rangle\subseteq\mathfrak a_V$ and hence
$\dim \mathfrak a_V\geqslant4$.
On the other hand, as follows from the proof of Lemma~\ref{dimgess} under the constraint $\tau=0$, 
the classifying condition~\eqref{classcond} implies in particular 
a canonical system of linear ordinary differential equations of the form
\begin{gather*}
 \chi_{tt}=\dots,\quad
 \rho_t=\dots,\quad
 \sigma_t=\dots
\end{gather*}
in the parameter-functions $\chi$, $\rho$ and $\sigma$, whose solution space is four-dimensional. 
This means that $\dim\mathfrak a_V\leqslant4$. Therefore, $k_2=\dim\mathfrak a_V-2=2$. 
\end{proof}

Summarizing the above results, any appropriate subalgebra of~$\mathfrak g^{\rm ess}_{\langle \,\rangle}$ is spanned by 
\begin{itemize}
\item the basis vector fields $M$ and $I$ of the kernel $\mathfrak g^\cap$, 
\item $k_1$ vector fields $D(\tau^j)+G(\chi^j)+\sigma^jM+\rho^jI$, where $j=1,\dots,k_1$, $k_1\leqslant3$, and $\tau^1,\dots,\tau^{k_1}$ are linearly independent,
 \item $k_2$ vector fields $G(\chi^l)+\sigma^lM+\rho^lI$ where $l=1,\dots,k_2$, $k_2\in\{0,2\}$  and $\chi^1,\dots,\chi^{k_2}$ are linearly independent.
\end{itemize}

\begin{theorem}\label{theoremresults}
A complete list of $G^\sim$-inequivalent (and, therefore, $\mathcal G^\sim$-inequivalent)
Lie symmetry extensions in the class~\eqref{LinSchEqs} is exhausted by the cases collected in Table~1.
\begin{table}[!ht]
\renewcommand{\arraystretch}{1.8}
\begin{center}
Table 1. Results of classification.
$$
\begin{array}{|c|c|c|c|l|}
\hline
\mbox{no.}&k_1&k_2& V &\hfil\mbox{Basis of }\mathfrak g^{\rm ess}_V\\
\hline
1         & 0 & 0 & V(t,x)      & M,\;I \\
2         & 0 & 2 & i\gamma(t)x & M,\;I,\;G(1)-\left(\int\gamma(t)\,{\rm d}t\right)I,\;G(t)-\left(\int t\gamma(t)\,{\rm d}t\right)I\\
3         & 1 & 0 & V(x)        & M,\;I,\;D(1)\\
4\mbox{a} & 1 & 2 & \frac14x^2+ibx,\,b\in\mathbb R_* & M,\;I,\;D(1),\;G(e^t)-be^tI,\;G(e^{-t})+be^{-t}I\\
4\mbox{b} & 1 & 2 & -\frac14x^2+ibx,\,b\in\mathbb R_* & M,\;I,\,D(1),\;G(\cos t)+b(\sin t)I,\;G(\sin t)-b(\cos t)I\\
4\mbox{c} & 1 & 2 & ibx,\,b\in\mathbb R_* & M,\;I,\;D(1),\;G(1)-btI,\;G(t)-\frac12 bt^2I\\
5         & 3 & 0 & cx^{-2},\,c\in\mathbb C_*& M,\;I,\,D(1),\;D(t),\;D(t^2)-\frac12tI\\
6         & 3 & 2 & 0 & M,\;I,\;D(1),\;D(t),\;D(t^2)- \frac12 tI,\;G(1),\;G(t)\\
\hline
\end{array}
$$
\end{center}
\footnotesize
Lie symmetry extensions given in Table~1 are maximal if the parameters involved satisfy 
the following conditions: In~Case~1, the potential~$V$ does not satisfy an equation of the form~\eqref{classcond}.
In~Case~2, the real-valued function~$\gamma$ of~$t$ is constrained by the condition 
$\gamma\ne c_3|c_2t^2+c_1t+c_0|^{-3/2}$ for any real constants $c_0$, $c_1$, $c_2$ and $c_3$ with 
$c_0$, $c_1$ and $c_2$ not vanishing simultaneously. 
In Case~3, $V\ne b_2x^2+b_1x+b_0+c(x+a)^{-2}$ for any real constants~$a$, $b_2$ 
and for any complex constants~$b_1$, $b_0$ and $c$ with $c\mathop{\rm Im}b_1=0$. 
The real constant~$b$ in Cases~4a--4c and the complex constant~$c$ in Case~5 are nonzero.
Further, $b>0\bmod G^\sim$ in Cases~4a and~4b and $b=1\bmod G^\sim$ in Case~4c.
\end{table}
\end{theorem}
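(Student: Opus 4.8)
The plan is to use the reduction already in place: by the Proposition preceding the theorem, classifying the Lie symmetry extensions is the same as listing, up to the action of~$\pi_*G^\sim$, the appropriate subalgebras $\mathfrak s=\mathfrak g^{\rm ess}_V$ of~$\mathfrak g^{\rm ess}_\spanindex$, and by the spanning result summarized just above the theorem each such subalgebra has the stated basis built from~$M,I$, from~$k_1$ generators $D(\tau^j)+G(\chi^j)+\sigma^jM+\rho^jI$, and from~$k_2$ generators $G(\chi^l)+\sigma^lM+\rho^lI$. I would organize the entire classification by the two $\pi_*G^\sim$-invariants $k_1$ and~$k_2$. By Lemma~\ref{gclaslema2} we have $k_1\in\{0,1,2,3\}$ with $\pi^0_*\mathfrak g^{\rm ess}_V$ reducible mod~$\pi^0_*G^\sim$ to one of four model algebras, and by Lemma~\ref{lemcondfork_2} we have $k_2\in\{0,2\}$. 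Once the value $k_1=2$ is excluded (the crux, discussed below), this leaves exactly the six pairs $(k_1,k_2)\in\{0,1,3\}\times\{0,2\}$, which I expect to match the rows of Table~1, with Case~4 splitting into three discrete subcases.

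For each pair I would substitute the normalized basis into the classifying condition~\eqref{classcond} and read off~$V$. When $k_1=0$ and $k_2=0$ the potential is unconstrained (Case~1). When $k_1=0$, $k_2=2$, Lemmas~\ref{gclaslema} and~\ref{lemmatauzero} normalize the two extra generators to $G(1)+\rho^1I$ and $G(t)+\rho^2I$; substituting $\tau=\sigma=0$, $\chi=1$ into~\eqref{classcond} gives $V_x=-i\rho^1_t$, and gauging away the purely $t$-dependent part of~$V$ by an $\mathcal M$- and an $\mathcal I$-transformation yields $V=i\gamma(t)x$ (Case~2). When $k_1=1$ I would use the pushforwards $\mathcal G_*(X^0)$, $\mathcal M_*(\Sigma)$, $\mathcal I_*(\Upsilon)$ to strip a generator with $\tau=1$ down to~$D(1)$, whence~\eqref{classcond} forces $V=V(x)$ (Case~3 if $k_2=0$). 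If moreover $k_2=2$, the commutator $[D(1),G(\chi)]=G(\chi_t)$ shows the $\chi$-space is closed under differentiation, so the two $G$-generators satisfy $\chi_{tt}=2\alpha\chi$ for a constant~$\alpha$; the three signs of~$\alpha$ give the fundamental systems $\{e^{\pm t}\}$, $\{\cos t,\sin t\}$, $\{1,t\}$, while matching the coefficients of~$x$ in~\eqref{classcond} forces $V_x$ to be affine. After a $\mathcal D$-scaling fixing the magnitude of~$\alpha$ and equivalence transformations removing the residual real-linear and constant terms, this gives $V=\tfrac14x^2+ibx$, $-\tfrac14x^2+ibx$, $ibx$ respectively (Cases~4a--4c).

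The case $k_1=3$ uses the full triple $D(1),D(t),D(t^2)+\cdots$: here $D(1)$ gives $V=V(x)$, $D(t)$ gives $\tfrac12xV_x+V=\text{const}$, hence $V=cx^{-2}$ modulo a constant absorbable by equivalence, and one checks $D(t^2)-\tfrac12tI$ is then automatically admitted, producing Case~5 when $c\neq0$ and the free equation with the extra generators $G(1),G(t)$ (so $k_2=2$) when $c=0$, which is Case~6. The main obstacle, which I would treat most carefully, is ruling out $k_1=2$, i.e.\ showing the model algebra $\langle\partial_t,t\partial_t\rangle$ never occurs. Suppose $D(1)$ and a generator $D(t)+G(\chi)+\sigma M+\rho I$ both lie in~$\mathfrak g^{\rm ess}_V$; then $V=V(x)$ and~\eqref{classcond} for $\tau=t$ reduces to $(\tfrac12x+\chi)V_x+V=\tfrac12\chi_{tt}x+\sigma_t-i\rho_t$. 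Differentiating twice in~$x$ splits the analysis: if $\chi$ is constant, then $V=c(x+2\chi)^{-2}+\text{const}$, which modulo equivalence is~$cx^{-2}$ and automatically admits $D(t^2)-\tfrac12tI$; if $\chi$ is non-constant, matching powers of~$x$ forces $V$ to be real-affine, hence equivalent to the free equation. In both branches the generator $t^2\partial_t$ is forced, so $\partial_t$ and $t\partial_t$ cannot appear without it, and $k_1=2$ collapses into $k_1=3$. This rigidity—the affine algebra on the line forcing the projective one—is what reduces the four a priori values of~$k_1$ to the three appearing in the table.

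Finally, I would establish the maximality (nondegeneracy) conditions beneath Table~1 by determining, in each row, exactly which parameter specializations enlarge~$\mathfrak g^{\rm ess}_V$ into a higher row. For Case~1 this is that $V$ satisfy no relation of the form~\eqref{classcond}; for Case~2 it is the constraint $\gamma\neq c_3|c_2t^2+c_1t+c_0|^{-3/2}$, which is precisely the obstruction to a $D(\tau)$-extension; for Case~3 it is the exclusion of the quadratic-plus-inverse-square potentials that would generate extra $D$- or $G$-symmetries; and for Cases~4a--4c it is $b\neq0$, since $b=0$ yields a real potential lying in the $k_1=3$ stratum. The residual $G^\sim$-normalizations $b>0$ in Cases~4a,~4b and $b=1$ in Case~4c then follow by applying the scaling and the reflection transformations of Remark~\ref{RemarkOnDiscreteEquivTransOf1+1DLinSchEqs}. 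Collecting these normal forms together with their maximality conditions yields precisely Table~1.
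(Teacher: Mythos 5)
Your overall strategy coincides with the paper's: classify appropriate subalgebras by the invariants $(k_1,k_2)$, normalize generators using Lemmas~\ref{gclaslema2}, \ref{gclaslema} and~\ref{lemmatauzero}, and read off~$V$ from the classifying condition~\eqref{classcond}. Your Cases~1, 2, 3 and~5 follow the paper's proof, your exclusion of $k_1=2$ is a sound variant of it (you work directly with~\eqref{classcond} instead of first killing the tail of the second generator via the commutator with $D(1)$), and your derivation of Case~6 as the branch $c=0$ of the $k_1\geqslant2$ analysis is a legitimate shortcut compared with the paper's separate Jordan-form argument for $k_1\geqslant2$, $k_2=2$. However, there is a genuine gap at the heart of the classification, the case $(k_1,k_2)=(1,2)$. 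You assert that, because $\operatorname{span}\{\chi^1,\chi^2\}$ is closed under $d/dt$, the functions $\chi^p$ satisfy $\chi_{tt}=2\alpha\chi$ for a constant~$\alpha$, so that only the fundamental systems $\{e^{\pm t}\}$, $\{\cos t,\sin t\}$, $\{1,t\}$ can occur. This inference is false as stated: closure under differentiation only gives $\chi^p_t=a_{pq}\chi^q$ for a constant matrix $(a_{pq})$, hence $\chi_{tt}=\operatorname{tr}(a)\,\chi_t-\det(a)\,\chi$. Without knowing $\operatorname{tr}(a)=0$ you must also admit fundamental systems such as $\{e^{\lambda_1 t},e^{\lambda_2 t}\}$ with $\lambda_1+\lambda_2\ne0$, $\{1,e^t\}$ or $\{e^t,te^t\}$, and then your trichotomy 4a--4c is not exhaustive; no equivalence transformation preserving $P^0=D(1)$ can remove the trace, so this cannot be normalized away.

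The missing ingredient, which the paper supplies, is the second commutation relation $[Q^1,Q^2]=(\chi^1\chi^2_t-\chi^2\chi^1_t)M$. Closure of $\mathfrak g^{\rm ess}_V$, combined with the lemma stating $\mathfrak g^{\rm ess}_V\cap\langle\sigma M,\rho I\rangle=\mathfrak g^\cap$, forces the Wronskian $W=\chi^1\chi^2_t-\chi^2\chi^1_t$ to be a \emph{constant}; since $\chi^1,\chi^2$ are linearly independent solutions of one and the same second-order equation, $W\ne0$, and differentiating gives $W_t=\operatorname{tr}(a)\,W$, whence $\operatorname{tr}(a)=0$. Only after this step (by Cayley--Hamilton for a traceless matrix, or equivalently by the paper's reduction to the three traceless real Jordan forms~\eqref{linSchSimirlty}) does one obtain $\chi_{tt}=-\det(a)\,\chi$ and your three subcases. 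Alternatively, you could rule out the extra fundamental systems one by one against~\eqref{classcond}, but some such argument must be made; as written, the step fails. A secondary, smaller omission in the same case: you never use the companion relations $\sigma^p_t=a_{pq}\sigma^q+a_{p3}$, $\rho^p_t=a_{pq}\rho^q+a_{p4}$ and the mutual consistency of the three equations that~\eqref{classcond} imposes on~$V$; it is exactly this consistency analysis that produces the constraints on the integration constants and hence the precise $I$-tails of the generators (e.g.\ $G(e^t)-be^tI$, $G(e^{-t})+be^{-t}I$) recorded in Table~1, not merely the form of the potential.
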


\begin{proof}
We consider possible cases for the various values of $k_1$ and $k_2$.

\medskip\par\noindent
$\boldsymbol{k_1=k_2=0.}$ This is the general case with no extension, 
i.e., $\mathfrak g^{\rm ess}_V=\mathfrak g^\cap$ (Case~1 of Table~1).

\medskip\par\noindent
$\boldsymbol{k_1=0,\,k_2=2.}$ Lemmas \ref{gclaslema} and 
\ref{lemmatauzero} imply that up to $G^\sim$-equivalence 
the algebra $\mathfrak g^{\rm ess}_V$ contains the vector fields $G(1)+\rho^1I$ 
and $G(t)+\rho^2I$, where $\rho^1$ is a smooth real-valued function of $t$ 
and $\rho^2=\int t\rho^1_t \,{\rm d}t$.
Integrating the classifying condition~\eqref{classcond} for these vector fields 
with respect to $V$ gives $V=-i\rho_tx+\alpha(t)+i\beta(t)$, 
and $\alpha=\beta=0\bmod G^\sim$. Denoting $-\rho_t$ by $\gamma$, 
we obtain $V=i\gamma(t)x$, $\rho^1=-\int\gamma\,{\rm d}t$ 
and $\rho^2=-\int t\gamma\,{\rm d}t$,
which leads to Case~2 of Table~1.

Let us describe the values of $\gamma$ for which the 
Lie symmetry extension constructed is maximal. 
We substitute the potential $V=i\gamma(t)x$ 
into the classifying condition~\eqref{classcond} and, after splitting with respect to $x$, 
derive the system 
\[                      
\tau_{ttt}=0,\quad \chi_{tt}=0,\quad \sigma_t=0,\quad \rho_t=\chi\gamma -\frac{\tau_{tt}}{4},
\quad \tau\gamma_t+\frac32\tau_t \gamma=0.
\] 
An additional Lie symmetry extension for such a potential 
may be realized only by vector fields with nonzero values of $\tau$. 
Then the integration of the first and last equations of the above system yields 
\[
\tau=c_2t^2+c_1t+c_0,\quad
\gamma=c_3|\tau|^{-3/2}=c_3|c_2t^2+c_1t+c_0|^{-3/2},
\]
where $c_0$, $c_1$, $c_2$ and $c_3$ are real constants.
Therefore, Case~2 presents a maximal Lie symmetry extension 
if $\gamma\neq c_3|c_2t^2+c_1t+c_0|^{-3/2}$ 
for any real constants $c_0$, $c_1$, $c_2$ and $c_3$, 
where the constants $c_0$, $c_1$ and $c_2$ do not vanish simultaneously. 

\medskip\par\noindent
$\boldsymbol{k_1=1,\,k_2=0.}$
The algebra $\mathfrak g^{\rm ess}_V$ necessarily contains a vector field 
$P^0$ of the form $P^0=D(\tau^0)+G(\chi^0)+\sigma^0M+
\rho^0I$, where all the parameter functions are real-valued functions 
of $t$ with $\tau^0\ne 0$. 
Pushing forward $P^0$ by a transformation from $\pi_*G^\sim$, we can set $\tau^0=1$ and   
$\chi^0=\sigma^0=\rho^0=0$. That is, up to $\pi_*G^\sim$-equivalence
we can assume that $P^0=D(1)$, cf.\ Lemma~\ref{gclaslema2}.
The classifying condition~\eqref{classcond} for the vector 
field $P^0$ gives $V_t=0$, which implies Case~3 of Table~1 
with an arbitrary time-independent potential~$V$.

We now find the condition when the Lie symmetry extension obtained is really maximal. 
The presence of any additional extension means that the algebra 
$\mathfrak g^{\rm ess}_V$ necessarily contains a vector field 
$Q=D(\tau)+G(\chi)+\sigma M+\rho I$  with $\tau_t\ne 0$ or $\chi\ne 0$. 
Substituting $Q$ into the classifying condition~\eqref{classcond} 
and fixing a value of $t$ gives a linear ordinary differential 
equation with respect to $V=V(x)$.
The general solution of any such equation is of the form $V=b_2x^2+b_1x+b_0+c(x+a)^{-2}$, 
where $a$ and $b_2$ are real constants and $b_1$, $b_0$ and $c$ are complex constants.
Moreover, the constant $b_1$ is zero if $\tau_t\ne 0$ and, if $\tau_t= 0$ and 
$\chi\ne 0$, we have $c=0$.
Therefore, the Lie symmetry extension of Case~3 is maximal 
if and only if $V\ne~b_2x^2+b_1x+b_0+c(x+a)^{-2}$ for any real constants $a$, $b_2$ 
and for any complex constant  $b_1$, $b_0$ and $c$
with $c\mathop{\rm Im}b_1=0$. 

\medskip\par\noindent
$\boldsymbol{k_1=1,\,k_2=2.}$
In this case a basis of $\mathfrak g^{\rm ess}_V$ consists of the vector 
fields $M$, $I$, $P^0=D(\tau^0)+G(\chi^0)+\sigma^0M+\rho^0 I$ 
and $Q^p=G(\chi^p)+\sigma^pM+\rho^pI$, where all the parameters are real-valued 
functions of $t$ with $\tau^0\ne 0$ and $\chi^1$ and $\chi^2$ being linearly independent. 
Here and in the following the indices $p$ and $q$ run from $1$ to $2$ and 
we sum over repeated indices. 
The vector field $P^0$ is reduced 
to $D(1)$ up to $\pi_*G^\sim$-equivalence, as in the previous case.
The commutation relations of $\mathfrak g^{\rm ess}_V$
\begin{gather*}
[P^0,Q^p]=G(\chi^p_t)+\sigma^p_tM+\rho^p_tI=a_{pq}Q^q+a_{p3}M+a_{p4}I,\\
[Q^1,Q^2]=(\chi^1\chi^2_t-\chi^2\chi^1_t)M=a_0M, 
\end{gather*}
where $a_{pq}$, $a_{p3}$, $a_{p4}$ and $a_0$ are real constants,
yield 
\begin{gather}\label{LinSchEqscasec}
\chi^p_t=a_{pq}\chi^q,\quad \sigma^p_t=a_{pq}\sigma^q+a_{p3},\quad 
\rho^p_t=a_{pq}\rho^q+a_{p4}, \quad \chi^1\chi^2_t-\chi^2\chi^1_t=a_0. 
\end{gather}
The matrix~$(a_{pq})$ is not zero in view of the linear independence of $\chi^1$ and $\chi^2$.
Moreover, the consistency of the system~\eqref{LinSchEqscasec} implies 
that the trace of~$(a_{pq})$ is zero. 
Using equivalence transformations of time scaling, we can further scale the eigenvalues 
of the matrix $(a_{pq})$ with the same nonzero real values. 
Replacing the vector fields $Q^1$ and $Q^2$ by their independent 
linear combinations leads to a matrix similarity transformation of $(a_{pq})$.
Hence, the matrix $(a_{pq})$ can be assumed to be of one of 
the following real Jordan forms:   
\begin{gather}\label{linSchSimirlty}
\begin{pmatrix}
1 & 0 \\
0 &-1\\
\end{pmatrix},\quad 
\begin{pmatrix}
0 &-1 \\
1 &  0\\
\end{pmatrix},\quad
\begin{pmatrix}
0 & 0 \\
1 & 0 \\
\end{pmatrix}.
\end{gather}
The further consideration of each of these forms consists of a few steps:
We integrate the system of differential equations~\eqref{LinSchEqscasec} 
for the chosen form of~$(a_{pq})$, which gives the components 
of the vector fields~$Q^1$ and~$Q^2$.
From the classifying condition~\eqref{classcond} 
for the basis vector fields of the algebra~$\mathfrak g^{\rm ess}_V$ 
we obtain three independent equations for the potential~$V$, including the equation~$V_t=0$. 
These equations must be solved jointly, and their consistency leads 
to additional constraints for constant parameters involved in~$Q^1$ and~$Q^2$. 
The expressions for both the vector fields~$Q^1$ and~$Q^2$ 
and the potential~$V$ can be simplified by equivalence transformations 
and by changing the basis in the algebra~$\mathfrak g^{\rm ess}_V$ 
we can obtain expressions for $Q^1$ and $Q^2$ as in Cases~4a--4c of Table~1. 

Integrating the system~\eqref{LinSchEqscasec} for the first Jordan form, 
we obtain $\chi^1=b_{01}e^t$, $\sigma^1=b_{11}e^t-a_{13}$,
$\rho^1=b_{12}e^t-a_{14}$, $\chi^2=b_{02}e^{-t}$, $\sigma^2=b_{21}e^{-t}+a_{23}$, 
$\rho^2=b_{22}e^{-t}+a_{24}$, 
where $a_{p3}$, $a_{p4}$, $b_{0p}$ and~$b_{pq}$ are real constants. 
Scaling the vector fields $Q^1$ and $Q^2$ and taking linear combinations of them with $M$ and $I$, 
we can set $b_{0q}=1$ and $a_{p3}=a_{p4}=0$.  
The classifying condition~\eqref{classcond} 
for the vector fields $P^0$, $Q^1$ and $ Q^2$ leads to three independent equations in $V$, 
$V_t=0$, $V_x=\frac12x-ib_{12}+b_{11}$ and $V_x=\frac12x+ib_{22}-b_{21}$.
These equations are consistent only if the constant parameters involved in $Q^1$ and $Q^2$ 
satisfy the constraints $-b_{12}=b_{22}=:b$ and $b_{11}=-b_{21}=:-\hat b$.
Then the potential $V$ is of the form $V=\frac14 (x+2\hat b)^2+ibx+c_1+ic_2$ 
for some real constants $ c_1$ and $c_2$. 
We apply the equivalence transformation~\eqref{equgroupoid} with $T=t$, $\smash{X^0=2\hat b}$, 
$\Sigma=-c_1t$, $\Upsilon=c_2 t$, $\varepsilon=1$ and $\Phi=0$ and 
take a linear combination of the transformed vector field $P^0$ with $M$ and $I$.
This allows us to set $\hat b=c_1=c_2=0$ and finally gives
Case~4a of Table~1.

In the same way, we consider the second Jordan form from \eqref{linSchSimirlty}. 
After integrating the corresponding system~\eqref{LinSchEqscasec}, we obtain  
$\chi^1=b_{01}\cos t-b_{02}\sin t$, $\sigma^1= b_{11}\cos t- b_{12}\sin t-a_{23}$, 
$\rho^1=b_{21}\cos t-b_{22}\sin t-a_{24}$, $\chi^2=b_{01}\sin t+b_{02}\cos t$, 
$\sigma^2=b_{11}\sin t+b_{12}\cos t+a_{13}$ and $\rho^2=b_{21}\sin t+b_{22}\cos t+a_{14}$,
where $b_{0q}$ and $b_{pq}$ are real constants.
Combining the vector fields $Q^1$ and $Q^2$ with each other and with $M$ and $I$, 
we can put $b_{01}=1$, $b_{02}=0$ and $a_{p3}=a_{p4}=0$. 
Substituting the components of $P^0$, $Q^1$ and $Q^2$ that we obtain into 
the classifying condition~\eqref{classcond} 
gives three independent equations in~$V$,  
\begin{gather*}
V_t=0,\\
V_x\cos t=-\frac12x \cos t+i(b_{21}\sin t+b_{22}\cos t)-b_{11}\sin t-b_{12}\cos t,\\
V_x\sin t=-\frac12x \sin t-i(b_{21}\cos t-b_{22}\sin t)+ b_{11}\cos t-b_{12}\sin t,
\end{gather*}
with the consistency condition $ b_{11}= b_{21}=0$. 
We denote $b_{12}=:\hat b$ and $b_{22}=:b$. 
Any solution of the above equations for $V$ can be written as
$V=-\frac14 (x+2\hat b)^2+ibx+c_1+ic_2$ for some real constants $c_1$ and $c_2$.     
By applying the equivalence transformation~\eqref{equgroupoid} 
with $T=t$, $\smash{X^0=2\hat b}$, 
$\Sigma=-c_1t$, $\Upsilon=c_2 t$, $\varepsilon=1$ and $\Phi=0$ and 
taking a linear combination of the transformed vector field $P^0$ with $M$ and $I$, 
we can put $\hat b=c_1=c_2=0$.
This yields Case~4b of Table~1. 

Finally, the general solution of the system~\eqref{LinSchEqscasec} 
for the last Jordan form of the matrix $(a_{pq})$ is 
$\chi^1=b_{01}$, $\sigma^1=a_{13}t+b_{11}$, $\rho^1=a_{14}t+b_{21}$, 
$\chi^2=b_{01}t+b_{02}$, $\sigma^2=\frac12a_{13}t^2+(a_{23}+b_{11})t+b_{12}$, 
$\rho^2=\frac12a_{14}t^2+(a_{24}+b_{21})t+b_{22}$, where $b_{0q}$ 
and $b_{pq}$ are real constants. The constants
$b_{pq}$ and $b_{02}$ can be put equal to zero and $b_{01}=1$ by taking a linear combination 
of the vector fields $Q^1$ and $Q^2$ with each other and with $M$ and $I$. 
Then we successively evaluate the classifying condition~\eqref{classcond} for 
the components of the vector fields $P^0$, $Q^1$ and $Q^2$. 
This gives the following equations for~$V$:
\[V_t=0,\quad tV_x=-ia_{14}t+a_{13}t,\quad
tV_x=-ia_{14}t+a_{13}t-ia_{24}+a_{23},
\] 
which are consistent if and only if $a_{23}=a_{24}=0$. 
Any solution of these equations is of the form $V=ibx+\hat bx+c_1+ic_2$, 
where $c_1$ and $c_2$ are real constants and 
we denote $a_{13}=:\hat b$ and $a_{14}=:-b$. Next we apply 
the equivalence transformation~\eqref{equgroupoid} 
with $T=t$, $X^0=-\hat b t^2$, $\Sigma=\frac13\hat b^2t^3-c_1t$, 
$\Upsilon=\frac13b\hat b t^3+c_2t$, $\varepsilon=1$ and $\Phi=0$
and take a linear combination of the vector fields $P^0$ with $Q^2$, $M$ and~$I$.  
In this way the constants $\hat b$, $c_1$ and $c_2$ are set equal to zero, 
which gives Case~4c of Table~1.

The Lie symmetry extensions presented in Cases~4a--4c of Table~1 
are really maximal if $b\ne 0$. Moreover, up to $G^\sim$-equivalence
we can set $b=1$ in Case~2a and $b>0$ in Cases~2b and~2c; 
cf.\ the proof of Theorem~\ref{theoremresultsigammax}.

\medskip\par\noindent
$\boldsymbol{k_1\geqslant2,\,k_2=0.}$
Lemma~\ref{gclaslema2} implies that up to $\pi_*G^\sim$-equivalence 
the algebra $\mathfrak g^{\rm ess}_V$ 
contains at least two operators $P^0=D(1)$ and $P^1=D(t)+G(\chi^1)+\sigma^1M+\rho^1I$. 
Here we also annihilate the tail of $P^0$
with pushing forward $P^0$ by a transformation from $\pi_*G^\sim$.
As $[P^0,P^1]\in \mathfrak g^{\rm ess}_V$,
we~have 
\[
[P^0,P^1]=D(1)+G(\chi^1_t)+\sigma^1_tM+\rho^1_tI=P^0+a_1M+b_1I
\]
for some constants $a_1$ and $b_1$.
Collecting components in the above equality 
gives the system 
$\chi^1_t=0$, $\sigma^1_t=a_1$, $\rho^1_t=b_1$ with the general solution   
$\chi^1=a_2$, $\sigma^1=a_1t+a_0$ and $\rho^1=b_1t+b_0$, 
where $a_2$, $a_0$ and $ b_0$ are real constants of integration.
Pushing forward $P^0$ and $P^1$ with $\mathcal G_*(2a_2)$, 
$\mathcal M_*(-a_1t)$ and $\mathcal I_*(-b_1t)$    
and taking a linear combination of  $P^0$ and  $P^1$ with $M$ and $I$, 
we find that we can set the constants $a_0$, $a_1$, $a_2$, $b_0$ and $b_1$ to zero. 
Therefore, the basis vector field  $P^1$ reduces to the form $P^1=D(t)$, 
whereas the forms of $P^0$, $M$ and $I$ are preserved. 

The classifying condition~\eqref{classcond} for $P^0=D(1)$ and $P^1=D(t)$ 
gives two independent equations in $V$, $V_t=0$ and $x V_x+2V=0$. 
Integrating these equations gives $V=cx^{-2}$,
where $c$ is a complex constant. If $c=0$, then $k_2> 0$, which  
contradicts the case assumption $k_2=0$. Thus, the constant $c$ is nonzero. 
We find the maximal Lie invariance algebra in this case.
We substitute $V=cx^{-2}$ with $c\ne 0$ into the classifying 
condition~\eqref{classcond} and derive the system of differential equations 
for functions parameterizing vector fields from $\mathfrak g^{\rm ess}_V$,
$\tau_{ttt}=0$, $\chi=0$, $\sigma_t=0$, $\rho_t=-\frac14\tau_{tt}$.
The solution of the above system implies that the algebra 
$\mathfrak g^{\rm ess}_V$ is spanned by $M$, $I$, $P^0$, $P^1$ 
and one more vector field $P^2=D(t^2)-\frac12tI$, which gives Case~5 of Table~1.  

\medskip\par\noindent
$\boldsymbol{k_1\geqslant2,\,k_2=2.}$
In this case, the algebra $\mathfrak g^{\rm ess}_V$ necessarily
contains the vector fields
$M$, $I$, $P^l=D(\bar\tau^l)+G(\bar\chi^l)+\bar\sigma^lM+\bar\rho^l$, $l=0,1$ 
and $Q^p=G(\chi^p)+\sigma^pM+\rho^pI$, 
where all the parameters are real-valued smooth functions of $t$ 
with $\bar \tau^0$ and $\bar \tau^1$ (resp. $\chi^1$ and $\chi^2$) 
being linearly independent. 
Recall that the indices $p$ and $q$ run from $1$ to $2$, 
and we sum over repeated indices.
As in the previous case, up to $\pi_*G^\sim$-equivalence the vector 
fields $P^0$ and $P^1$ reduce to the form $P^0=D(1)$ and 
$P^1=D(t)+G(\bar\chi^1)+\bar\sigma^1M+\bar\rho^1I$. 

Since the algebra $\mathfrak g^{\rm ess}_V$ is closed with respect 
to the Lie bracket of vector fields, we have $[P^l,Q^p]\in\mathfrak g^{\rm ess}_V$, i.e.,
\begin{gather*}
[P^0,Q^p]=G(\chi^p_t)+\sigma^p_tM+\rho^p_tI=a_{pq}Q^q+a_{p3}M+a_{p4}I,\\
[P^1,Q^p]=d_{pq}Q^q+d_{p3}M+d_{p4}I,
\end{gather*}
where $a_{pq}$, $a_{p3}$, $ a_{p4}$, 
$d_{pq}$, $d_{p3}$ and $d_{p4}$ are real constants.
Using the above commutation relations with $P^0$ 
in the same way as in the case $k_1=1,\,k_2=2$,
we derive three inequivalent cases for the vector fields $Q^p$ 
depending on the Jordan forms of the matrix $(a_{pq})$ 
presented in~\eqref{linSchSimirlty}.  
For the first and second Jordan forms, 
the commutators~$[P^1,Q^p]$ do not belong to the linear span 
of $P^0$, $P^1$, $ Q^1$, $ Q^2$, $M$ and~$I$. 
Hence these cases are irrelevant.

For the last Jordan form from $\eqref{linSchSimirlty}$, 
up to $G^\sim$-equivalence 
and up to linear combining of the above vector fields, 
we can further assume that  
$Q^1=G(1)-btI$, $Q^2=G(t)-\frac12 bt^2I$ and $V=ibx$ for some real constant $b$. 
We expand the commutation relation for the vector fields $P^1$ and~$Q^1$:
\begin{gather*}
[P^1,Q^1]=-\frac12G(1)-bte^tI-\bar\chi^1_tM= d_{11}Q^1+d_{12}Q^2+ d_{13}M+d_{14}I,
\end{gather*}
and equating components gives $b=0$, i.e., $V=0$.
Substituting the value $V=0$ into the classifying condition~\eqref{classcond} 
and splitting with respect to $x$
yields the system of differential equations
$\tau_{ttt}=0$, $\chi_{tt}=0$, $\sigma_t=0$, $\rho_t=-\frac14\tau_{tt}$.
The solution of this system  for~$V=0$ shows that 
the algebra $\mathfrak g^{\rm ess}_V$ is spanned 
by the vector fields presented in Case~6 of Table~1.
\end{proof}

\begin{remark} 
It might be convenient to completely describe properties of appropriate subalgebras 
before their classification but often such an approach is not justified. 
Thus, Lemma~\ref{gclaslema2} shows that the invariant $k_1$ is not greater than three, i.e., $k_1\in\{0,1,2,3\}$. 
As we proved in Theorem~\ref{theoremresults}, 
this invariant cannot be equal to two. 
The reason is that any (finite-dimensional) subalgebra~$\mathfrak s$ 
of~$\mathfrak g^{\rm ess}_\spanindex$ 
with $\dim\pi^0_*\mathfrak s=2$ is not appropriate since
the  condition of extension maximality is not satisfied.
Therefore, Lemma~\ref{gclaslema2} could be strengthened by the constraint
$k_1\in\{0,1,3\}$.
At the same time, the proof of the condition $k_1\ne 2$ needs realizing 
the major part of the group classification of the class~\eqref{LinSchEqs}.  
 \end{remark}

\section{Alternative proof}\label{LinScEqsalternative}

Here we present an alternative way of classifying Lie symmetry extensions 
in the class~\eqref{LinSchEqs}, in which the invariant $k_2$ is considered as leading.
The case $k_2=0$, after partitioning into the subcases $k_1=0$, $k_1=1$ 
or $k_1\geqslant 2$, results in the same extensions as presented in Table~1 
for these values of $k_1$ and $k_2$.  

Let us consider the case $k_2=2$ more closely.
Lemmas \ref{gclaslema} and~\ref{lemmatauzero} imply that, up to $G^\sim$-equivalence, 
the algebra $\mathfrak g^{\rm ess}_V$ contains the vector fields $G(1)+\rho^1I$ 
and $G(t)+\rho^2I$, where $\rho^1$ is a smooth real-valued function of $t$ 
and $\rho^2=\int t\rho^1_t \,{\rm d}t$.
Integrating the classifying condition~\eqref{classcond} for these vector fields 
with respect to $V$ gives $V=-i\rho_tx+\alpha(t)+i\beta(t)$, 
and $\alpha=\beta=0\bmod G^\sim$. Denoting $-\rho_t$ by $\gamma$, we obtain $V=i\gamma(t)x$.  
Thus, we carry out the group classification of the subclass of equations 
from the class~\eqref{LinSchEqs} with potentials of this form, i.e.,
\begin{gather}\label{igammax subclass}
i\psi_t+\psi_{xx}+i\gamma(t)x\psi=0, 
\end{gather}
where $\gamma(t)$ is an arbitrary real-valued function of $t$, 
which will be taken as the arbitrary element of the subclass instead of $V$.
For potentials of the above form, the equation~\eqref{equgroupoid_c} splits with respect
to $x$ and gives the system of differential equations 
\[
2T_{ttt}T_t-3T_{tt}^2=0,\quad 
\left(\frac{X^0_t}{T_t}\right)_t=0,\quad 
\Sigma_t=\frac{(X^0_t)^2}{4T_t},\quad 
\Upsilon_t=-\frac{T_{tt}}{4T_t}-\varepsilon\frac{\gamma X^0}{|T_t|^{1/2}},
\]
whose general solution is
\begin{gather*}
T=\frac{a_1t+a_0}{a_3t+a_2},\quad 
X^0=b_1T+b_0,\quad 
\Sigma=\frac{b_1^2}4 T+c_1,\quad
\Upsilon=-\frac14\ln|T_t|-\varepsilon\int\frac{\gamma X^0}{|T_t|^{1/2}}\,{\rm d}t+c_0,
\end{gather*}
where $a_i$, $i=0,\dots,3$, $b_j$ and $c_j$, $j=0,1$, 
are real constants with $a_1a_2-a_0a_3\ne 0$ 
and the integral denotes a fixed primitive function for the integrand. 
Since the constants $a_i$, $i=0,\dots,3$, are defined up to a nonzero constant multiplier 
(and thus only three of the constants are essential), 
we set $a_1a_2-a_0a_3=\sgn T_t:=\varepsilon'=\pm1$.

To single out the equivalence groupoid~$\mathcal G^\sim_{\eqref{igammax subclass}}$ of the subclass~\eqref{igammax subclass} 
from the equivalence groupoid $\mathcal G^\sim$ of the whole class~\eqref{LinSchEqs}, 
we substitute the above values of~$T$, $X^0$, $\Sigma$ and $\Upsilon$ into~\eqref{equgroupoid} 
and obtain the following statement:

\begin{theorem}
\begin{subequations}\label{equgroupoidigamamx subclass}
The equivalence groupoid~$\mathcal G^\sim_{\eqref{igammax subclass}}$ of    
the subclass~\eqref{igammax subclass} 
consists of triples of the form $(\gamma,\tilde\gamma,\varphi)$, 
where $\varphi$ is a point transformation in the space of variables, 
whose components~are
\begin{gather}\label{equgroupoidigamamx subclass_a}
\tilde t=T:=\frac{a_1t+a_0}{a_3t+a_2}, \quad
\tilde x=\varepsilon|T_t|^{1/2}x+b_1T+b_0,
\\\label{equgroupoidigamamx subclass_b}
\tilde \psi=\frac c{|T_t|^{1/4}}\exp\left(
\frac i8\frac{T_{tt}}{|T_t|}x^2+\frac i2\varepsilon b_1|T_t|^{1/2}x-\varepsilon\int\gamma\frac{b_1T+b_0}{|T_t|^{1/2}}\,{\rm d}t+i\frac{b_1^2}4 T
\right)(\hat\psi+\hat\Phi), 
\end{gather}
the transformed parameter $\tilde\gamma$ is given in terms of~$\gamma$ as
\begin{gather}\label{equgroupoidigamamx subclass_c}
\tilde \gamma=\frac{\varepsilon\varepsilon'}{|T_t|^{3/2}}\gamma,
\end{gather}
\end{subequations}
$a_0$, $a_1$, $a_2$, $a_3$, $b_0$ and $b_1$ are arbitrary real constants with $a_1a_2-a_0a_3=:\varepsilon'=\pm 1$, 
$c$ is a nonzero complex constant, $\Phi=\Phi(t,x)$ is an arbitrary solution of the initial equation, 
$\varepsilon=\pm 1$.
\end{theorem}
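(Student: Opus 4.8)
The plan is to realize the equivalence groupoid~$\mathcal G^\sim_{\eqref{igammax subclass}}$ as the restriction of the full equivalence groupoid of Theorem~\ref{thmequivptransf} to the subclass~\eqref{igammax subclass}. Every equation of the form~\eqref{igammax subclass} belongs to the class~\eqref{LinSchEqs} with $V=i\gamma(t)x$, so each admissible transformation of the subclass is in particular an admissible transformation of the whole class. Its transformational part~$\varphi$ must therefore have the form~\eqref{equgroupoid_a}--\eqref{equgroupoid_b}, with source and target potentials related by~\eqref{equgroupoid_c}. The only additional requirement is that both potentials lie in the subclass, that is, $V=i\gamma(t)x$ and $\tilde V=i\tilde\gamma(\tilde t)\tilde x$.

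First I would substitute $V=i\gamma x$ into~\eqref{equgroupoid_c}, noting that $\hat V=\varepsilon'i\gamma x$ since $\gamma$ is real-valued, and set the right-hand side equal to $i\tilde\gamma\,\tilde x=i\tilde\gamma(\varepsilon|T_t|^{1/2}x+X^0)$, using~\eqref{equgroupoid_a} to express $\tilde x$ through~$x$. The resulting identity is polynomial in~$x$, so I would split it with respect to the powers $x^2$, $x^1$, $x^0$ and then separate each coefficient into real and imaginary parts, using that $\gamma$, $\tilde\gamma$, $T$, $X^0$, $\Sigma$, $\Upsilon$ are real. The $x^2$-coefficient gives $2T_{ttt}T_t-3T_{tt}^{\,2}=0$; the imaginary and real parts of the $x^1$-coefficient give, respectively, the relation~\eqref{equgroupoidigamamx subclass_c} between~$\tilde\gamma$ and~$\gamma$ and the equation $(X^0_t/T_t)_t=0$; and the real and imaginary parts of the constant term yield $\Sigma_t=(X^0_t)^2/(4T_t)$ and the equation for~$\Upsilon_t$. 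This reproduces the system of ordinary differential equations stated immediately before the theorem.

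It then remains to integrate this system. The equation $2T_{ttt}T_t-3T_{tt}^{\,2}=0$ is proportional to the Schwarzian derivative of~$T$, whose vanishing is classically equivalent to~$T$ being a M\"obius transformation $T=(a_1t+a_0)/(a_3t+a_2)$; the remaining equations are then solved by successive quadratures to give $X^0=b_1T+b_0$, $\Sigma=\tfrac14 b_1^2T+c_1$ and $\Upsilon=-\tfrac14\ln|T_t|-\varepsilon\int\gamma(b_1T+b_0)|T_t|^{-1/2}\,{\rm d}t+c_0$. Since the M\"obius coefficients are determined only up to a common nonzero factor, I would normalize them by $a_1a_2-a_0a_3=\varepsilon'$. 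Substituting all of these back into~\eqref{equgroupoid_a}--\eqref{equgroupoid_b} and absorbing the integration constants together with the factor $|T_t|^{-1/4}$ arising from $\exp\Upsilon$ into a single nonzero complex constant~$c$ gives precisely~\eqref{equgroupoidigamamx subclass_a}--\eqref{equgroupoidigamamx subclass_b}.

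The one point requiring care is the bookkeeping of the hat-conjugation and of the signs $\varepsilon$ and $\varepsilon'$ when separating real and imaginary parts: the factor $\varepsilon'$ in~$\hat V$ together with the relation $T_t=\varepsilon'|T_t|$ must be tracked consistently to arrive at the clean formula $\tilde\gamma=\varepsilon\varepsilon'|T_t|^{-3/2}\gamma$ and at the stated right-hand side of the $\Upsilon_t$-equation. I expect the integration of the Schwarzian equation and these sign checks to be the only nontrivial points; the rest is a direct specialization of Theorem~\ref{thmequivptransf}.
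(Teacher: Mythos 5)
Your proposal is correct and follows essentially the same route as the paper: the paper likewise singles out $\mathcal G^\sim_{\eqref{igammax subclass}}$ from the groupoid of Theorem~\ref{thmequivptransf} by splitting the relation~\eqref{equgroupoid_c} with $V=i\gamma x$, $\tilde V=i\tilde\gamma\tilde x$ with respect to~$x$ (and into real and imaginary parts), obtaining exactly the system for $T$, $X^0$, $\Sigma$, $\Upsilon$ that you derive, integrating it (M\"obius~$T$, $X^0=b_1T+b_0$, etc.), normalizing $a_1a_2-a_0a_3=\varepsilon'$, and substituting back into~\eqref{equgroupoid}. Your sign bookkeeping ($\hat V=\varepsilon' i\gamma x$, $T_t=\varepsilon'|T_t|$) and the absorption of the integration constants $e^{c_0+ic_1}$ into the nonzero complex constant~$c$ match the paper's computation.
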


\begin{corollary}\label{eqgpe_{13}}
The (usual) equivalence group~$G^\sim_{\eqref{igammax subclass}}$ of the subclass~\eqref{igammax subclass} 
consists of point transformations of the form~\eqref{equgroupoidigamamx subclass} with $b_0=b_1=0$ and $\Phi=0$.
\end{corollary}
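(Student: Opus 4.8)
The plan is to follow the argument used for Corollary~\ref{eqgpe}, now restricting attention to the subclass~\eqref{igammax subclass}, whose single arbitrary element is the real-valued function $\gamma=\gamma(t)$ in place of the potential $V$. First I would note that every transformation in $G^\sim_{\eqref{igammax subclass}}$ generates a family of admissible transformations of the subclass, so its restriction to the space of variables must already be of the form~\eqref{equgroupoidigamamx subclass_a}--\eqref{equgroupoidigamamx subclass_b}. It therefore remains only to determine which values of the parameters $a_i$, $b_j$, $c$ and which functions $\Phi$ are compatible with the defining requirement of a usual equivalence group: the components for the variables $(t,x,\psi,\psi^*)$ must be independent of the arbitrary element~$\gamma$.

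Next I would inspect the component~\eqref{equgroupoidigamamx subclass_b} for $\tilde\psi$. The variables $\tilde t$ and $\tilde x$ given by~\eqref{equgroupoidigamamx subclass_a} carry no dependence on $\gamma$, and in the exponent of~\eqref{equgroupoidigamamx subclass_b} the only $\gamma$-dependent term is $-\varepsilon\int\gamma\,(b_1T+b_0)|T_t|^{-1/2}\,{\rm d}t$. Since $\gamma$ ranges over all real-valued smooth functions of $t$, this integral is independent of $\gamma$ only if its integrand vanishes identically, i.e.\ $b_1T+b_0\equiv0$. As $T=(a_1t+a_0)/(a_3t+a_2)$ is a nonconstant function of $t$ (recall $T_t\ne0$), this forces $b_1=b_0=0$; the remaining $b_1$-dependent terms in~\eqref{equgroupoidigamamx subclass_a}--\eqref{equgroupoidigamamx subclass_b} then drop out automatically.

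Finally I would deal with $\Phi$. In an admissible transformation $(\gamma,\tilde\gamma,\varphi)$ the function $\Phi$ is a solution of the source equation~\eqref{igammax subclass} for that particular $\gamma$; for $\varphi$ to act as a single equivalence transformation across the whole subclass, with components independent of $\gamma$, the function $\Phi$ must be a common solution of~\eqref{igammax subclass} for all $\gamma$. Substituting such a $\Phi$ into $i\Phi_t+\Phi_{xx}+i\gamma(t)x\Phi=0$ and splitting with respect to the arbitrary $\gamma$ gives $x\Phi=0$, whence $\Phi=0$. Thus $b_0=b_1=0$ and $\Phi=0$, which is exactly the announced restriction of~\eqref{equgroupoidigamamx subclass}.

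I expect the only delicate point to be the splitting argument that forces $b_1T+b_0\equiv0$: one must use that $\gamma$ is a genuinely free functional parameter (so no nonzero multiple of it can be cancelled against the remaining $t$-dependent, $\gamma$-independent terms) together with the nonconstancy of the M\"obius function $T$. Everything else is a direct reading-off from~\eqref{equgroupoidigamamx subclass} and a verbatim reuse of the common-solution argument of Corollary~\ref{eqgpe}.
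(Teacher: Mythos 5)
Your proposal is correct and follows essentially the same route as the paper's proof: reduce to the admissible-transformation form~\eqref{equgroupoidigamamx subclass}, then impose independence of the variable components on~$\gamma$, which kills the exponent term (forcing $b_1T+b_0\equiv0$, hence $b_0=b_1=0$ since $T$ is nonconstant) and forces $\Phi$ to be a common solution of the subclass, i.e.\ $\Phi=0$. You merely spell out the splitting details that the paper compresses into the remark that the zero function is the only common solution and that $\gamma$-independence is equivalent to $b_0=b_1=0$, $\Phi=0$.
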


\begin{proof}
We argue in a similar way to Corollary~\ref{eqgpe}:
since each transformation from~$G^\sim_{\eqref{igammax subclass}}$ 
generates a family of admissible transformations in the subclass~\eqref{igammax subclass}, 
it is necessarily of the form~\eqref{equgroupoidigamamx subclass}.
There is only one common solution for the equations from the subclass~\eqref{igammax subclass}: 
the zero function.
Hence the independence of the transformation components for the variables on the arbitrary element~$\gamma$ 
is equivalent to the conditions $b_0=b_1=0$ and $\Phi=0$. 
\end{proof}

\begin{corollary}
The equivalence algebra of the subclass~\eqref{igammax subclass} is the algebra 
\begin{gather*}
\mathfrak g^\sim_{\eqref{igammax subclass}}=\langle\hat D_{\eqref{igammax subclass}}(1),\hat 
D_{\eqref{igammax subclass}}(t), \hat D_{\eqref{igammax subclass}}(t^2),\hat M(1),\hat I(1)\rangle,
\end{gather*}
where, as in Corollary~\ref{corolequivalg},
$\hat M(1)=i(\psi\p_\psi-\psi^*\p_{\psi^*})$, 
$\hat I(1)=\psi\p_\psi+\psi^*\p_{\psi^*}$, and 
\begin{gather*}
\hat D_{\eqref{igammax subclass}}(\tau)=\tau\p_t+\frac12\tau_t x\p_x
+\frac 18\tau_{tt}x^2\hat M(1)-\frac14\tau_t\hat I(1)-\frac32\tau_t\gamma\p_\gamma.
\end{gather*}
\end{corollary}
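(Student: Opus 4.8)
The plan is to follow the strategy used in the proof of Corollary~\ref{corolequivalg} and to avoid the infinitesimal invariance algorithm altogether, reading the equivalence algebra off the finite-form equivalence group $G^\sim_{\eqref{igammax subclass}}$ already obtained in Corollary~\ref{eqgpe_{13}}. Since $\mathfrak g^\sim_{\eqref{igammax subclass}}$ is by definition the set of generators of one-parameter subgroups of $G^\sim_{\eqref{igammax subclass}}$, it suffices to restrict to the identity component by setting $\varepsilon=1$ and $\varepsilon'=\sgn T_t=1$, to let a single continuous group parameter depend on a subgroup parameter $\delta$ while freezing the others at their identity values, to differentiate the components \eqref{equgroupoidigamamx subclass_a}--\eqref{equgroupoidigamamx subclass_b} together with \eqref{equgroupoidigamamx subclass_c} with respect to $\delta$, and to evaluate at $\delta=0$.

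The conceptually decisive point is that here the reparameterization $T$ is no longer an arbitrary function of $t$, as it was for the whole class, but is the fractional-linear map $T=(a_1t+a_0)/(a_3t+a_2)$ with $a_1a_2-a_0a_3=1$, which is precisely the general solution of the Schwarzian equation $2T_{ttt}T_t-3T_{tt}^2=0$ that appears when the component \eqref{equgroupoid_c} is split for $V=i\gamma x$. Hence the admissible maps $T$ form a three-dimensional group whose Lie algebra is the standard realization $\langle\p_t,t\p_t,t^2\p_t\rangle$ of $\mathfrak{sl}(2,\mathbb R)$. First I would take the three one-parameter M\"obius subgroups $T=t+\delta$, $T=e^{\delta}t$ and $T=t/(1-\delta t)$ (with $c=1$ fixed), whose infinitesimal generators correspond to $\tau=1$, $\tau=t$ and $\tau=t^2$. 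Expanding $|T_t|^{1/2}$, $|T_t|^{-1/4}$, $T_{tt}/|T_t|$ and $|T_t|^{-3/2}$ to first order in $\delta$ and collecting the $\psi$-terms by means of $\hat M(1)=i(\psi\p_\psi-\psi^*\p_{\psi^*})$ and $\hat I(1)=\psi\p_\psi+\psi^*\p_{\psi^*}$ collapses the result into
\[
\hat D_{\eqref{igammax subclass}}(\tau)=\tau\p_t+\tfrac12\tau_tx\p_x+\tfrac18\tau_{tt}x^2\hat M(1)-\tfrac14\tau_t\hat I(1)-\tfrac32\tau_t\gamma\p_\gamma,
\]
the final term being read directly from \eqref{equgroupoidigamamx subclass_c}; specializing $\tau$ to $1$, $t$, $t^2$ then gives the three generators $\hat D_{\eqref{igammax subclass}}(1)$, $\hat D_{\eqref{igammax subclass}}(t)$, $\hat D_{\eqref{igammax subclass}}(t^2)$.

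The two remaining parameters reside in the nonzero complex constant $c$. Freezing $T=t$, so that $T_t=1$, $T_{tt}=0$ and $\tilde\gamma=\gamma$, and taking $c=e^{\delta}$ gives $\tilde\psi=e^{\delta}\hat\psi$, whose $\delta$-derivative at $0$ is the scaling generator $\hat I(1)$, whereas $c=e^{i\delta}$ gives $\tilde\psi=e^{i\delta}\hat\psi$ and hence the phase generator $\hat M(1)$. These furnish the last two basis vector fields.

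Finally, I would confirm that nothing has been missed by a dimension count: the identity component of $G^\sim_{\eqref{igammax subclass}}$ carries exactly five essential continuous parameters — three from the M\"obius map (the four constants $a_i$ modulo the common scaling fixed by $a_1a_2-a_0a_3=1$) and two from $c\in\mathbb C_*$ — so its Lie algebra is five-dimensional, and the five manifestly independent generators produced above span it. The hard part is really just the first observation: one must recognize that passing to the subclass forces $T$ to be M\"obius and thereby cuts the infinite-dimensional family $\hat D(\tau)$ of Corollary~\ref{corolequivalg} down to its three-dimensional $\mathfrak{sl}(2,\mathbb R)$ piece with $\tau\in\langle 1,t,t^2\rangle$; the parameter-functions $\Sigma$ and $\Upsilon$, being determined by $T$ up to the two constants absorbed into $c$, contribute no generators beyond $\hat M(1)$ and $\hat I(1)$.
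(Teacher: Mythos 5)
Your proposal is correct and follows essentially the same route as the paper: the paper's proof is simply the statement that one argues ``analogously to Corollary~\ref{corolequivalg}'', i.e., one reads the algebra off the finite-form equivalence group of Corollary~\ref{eqgpe_{13}} by differentiating one-parameter subgroups at the identity, which is precisely what you do. Your explicit choice of the three M\"obius subgroups (yielding $\tau=1,t,t^2$), the phase and amplitude scalings in $c$, and the closing five-parameter dimension count faithfully fill in the details the paper leaves implicit.
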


The proof is analogous to that of Corollary~\ref{corolequivalg}.

\begin{corollary}\label{CorollaryOnCommonSymGroupOfEqsFromSubclass(13)}
For each $\gamma=\gamma(t)$, the equation~$\mathcal L_V$ with $V=i\gamma x$ admits
the group~$G^{\rm unf}_V$ of point symmetry transformations of the form~\eqref{equgroupoidigamamx subclass_a}--\eqref{equgroupoidigamamx subclass_b} 
with $T=t$ and $\varepsilon=1$. 
\end{corollary}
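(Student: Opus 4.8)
The plan is to obtain the statement as a direct specialization of the description of the equivalence groupoid $\mathcal{G}^\sim_{\eqref{igammax subclass}}$ established just above. The key observation is that a point symmetry transformation of the fixed equation $\mathcal{L}_V$ with $V=i\gamma x$ is precisely an admissible transformation of the subclass~\eqref{igammax subclass} whose source and target arbitrary elements coincide, that is, a triple $(\gamma,\gamma,\varphi)\in\mathcal{G}^\sim_{\eqref{igammax subclass}}$; this is just the interpretation of $G_V$ as a vertex group of the groupoid. Consequently every such symmetry is necessarily of the form~\eqref{equgroupoidigamamx subclass_a}--\eqref{equgroupoidigamamx subclass_b}, and the task reduces to isolating those parameter values for which the transformed arbitrary element equals the original one.

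First I would substitute $T=t$ and $\varepsilon=1$ into the transformation rule~\eqref{equgroupoidigamamx subclass_c} for the arbitrary element. Since $T=t$ gives $T_t=1$, hence $\varepsilon'=\sgn T_t=1$ and $|T_t|=1$, that rule collapses to $\tilde\gamma=\gamma$. Therefore, for any choice of the remaining parameters $b_0$, $b_1$, $c$ and of the solution $\Phi$ of $\mathcal{L}_V$, the transformation~\eqref{equgroupoidigamamx subclass_a}--\eqref{equgroupoidigamamx subclass_b} with $T=t$ and $\varepsilon=1$ maps $\mathcal{L}_V$ to itself and is a genuine point symmetry. The decisive point is that this conclusion holds for \emph{every} $\gamma$, which is exactly the content recorded by the qualification ``uniform'' in the notation $G^{\rm unf}_V$.

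It remains to verify that these transformations form a group. Rather than composing the explicit exponential phase factors in~\eqref{equgroupoidigamamx subclass_b}, I would argue abstractly: the composite of two point symmetries of $\mathcal{L}_V$ is again a point symmetry, so by the description of $\mathcal{G}^\sim_{\eqref{igammax subclass}}$ it is of the form~\eqref{equgroupoidigamamx subclass_a}--\eqref{equgroupoidigamamx subclass_b} for some parameter values, and I only have to check that these values satisfy $T=t$ and $\varepsilon=1$. To this end I would track two scalar data of a transformation: its $t$-component $T$ and its $x$-scaling coefficient $\varepsilon|T_t|^{1/2}$. For the two factors the $t$-components are both the identity on the $t$-line and the $x$-scalings are both $+1$; under composition the $t$-components compose, again to the identity, so the composite has $T=t$ and hence $|T_t|^{1/2}=1$, while the $x$-scaling coefficients multiply to $+1$, forcing the composite value $\varepsilon=1$. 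The same bookkeeping applied to the inverse transformation shows that it, too, has $T=t$ and $\varepsilon=1$. Thus the set in question is closed under composition and inversion and is a subgroup $G^{\rm unf}_V$ of $G_V$.

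I expect no essential obstacle, as the corollary is a mere specialization of the already established groupoid theorem. The only step needing a little care is the closure argument, and tracking solely the $t$-component and the $x$-scaling coefficient sidesteps any computation with the phase factor in~\eqref{equgroupoidigamamx subclass_b}.
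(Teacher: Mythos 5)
Your proposal is correct and follows essentially the same route as the paper: the paper's proof likewise observes that the relation~\eqref{equgroupoidigamamx subclass_c} with $T=t$ (hence $T_t=1$, $\varepsilon'=1$) and $\varepsilon=1$ reduces to $\tilde\gamma=\gamma$ for every~$\gamma$, so that these transformations are point symmetries of~$\mathcal L_V$. Your additional closure-and-inverse bookkeeping via the $t$-component and the $x$-scaling coefficient is sound; the paper simply leaves the group structure implicit.
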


\begin{proof}
The relation~\eqref{equgroupoidigamamx subclass_c} obviously implies that 
for each fixed value of the arbitrary element~$\gamma$,
transformations of the form~\eqref{equgroupoidigamamx subclass_a}--\eqref{equgroupoidigamamx subclass_b} 
with $T=t$ and $\varepsilon=1$ leave this value invariant. 
Other transformations are point symmetries of~$\mathcal L_V$ with $V=i\gamma x$ only for some values of~$\gamma$.
\end{proof}

\begin{corollary}\label{CorollaryOnUniformSemi-nomalizationOf_igammax_subclass}
The subclass~\eqref{igammax subclass} is uniformly semi-normalized 
with respect to the family of uniform point symmetry groups $\{G^{\rm unf}_V\}$ of equations from this subclass 
and the subgroup~$H$ of~$G^\sim_{\eqref{igammax subclass}}$ singled out by the constraint $c=1$.
\end{corollary}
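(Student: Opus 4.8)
The plan is to verify directly the three conditions of Definition~\ref{DefinitionOfUniformlySemi-normalizedClasses}, using the footnote to that definition which permits a subgroup of the equivalence group in place of the whole group; here this subgroup is~$H$. Accordingly I set $N_\gamma:=G^{\rm unf}_V$ with $V=i\gamma x$. Writing $N_\gamma$ explicitly from Corollary~\ref{CorollaryOnCommonSymGroupOfEqsFromSubclass(13)} by putting $T=t$ and $\varepsilon=1$ in~\eqref{equgroupoidigamamx subclass_a}--\eqref{equgroupoidigamamx subclass_b}, its elements are $\tilde t=t$, $\tilde x=x+b_1t+b_0$ and $\tilde\psi=c\exp\left(\frac i2b_1x-\int\gamma(b_1t+b_0)\,{\rm d}t+\frac i4b_1^2t\right)(\psi+\Phi)$, parameterized by $b_0,b_1\in\mathbb R$, $c\in\mathbb C_*$ and an arbitrary solution~$\Phi$ of~$\mathcal L_V$; whereas $H$, obtained from Corollary~\ref{eqgpe_{13}} by additionally imposing $c=1$, consists of the transformations~\eqref{equgroupoidigamamx subclass} with $b_0=b_1=0$, $c=1$ and $\Phi=0$, so each element of~$H$ carries a nontrivial M\"obius function~$T$ and reflection sign~$\varepsilon$ but no shift, scaling, phase or superposition part. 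The structural fact I would isolate first is that the \emph{$T$-part} of an admissible transformation~\eqref{equgroupoidigamamx subclass}, namely the pair $(T,\varepsilon)$, behaves multiplicatively under composition (the $t$-components compose as composition of M\"obius maps, the reflection signs as a product), while $N_\gamma$ is precisely the set of symmetries whose $T$-part is trivial, $T=t$ and $\varepsilon=1$.

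For the first condition I would show $N_\gamma\cap H|_{(t,x,\psi)}=\{{\rm id}\}$ together with faithfulness of the $H$-action on the variables. An element of $H|_{(t,x,\psi)}$ has $\tilde t=T$ a genuine M\"obius function of~$t$, whereas an element of $N_\gamma$ has $\tilde t=t$; matching these forces $T=t$, hence $T_t=1$, $T_{tt}=0$ and $\varepsilon'=1$. Comparing the remaining components then forces $\varepsilon=1$, $b_0=b_1=0$, $c=1$ and $\Phi=0$, i.e.\ the identity. The same computation shows that the only element of~$H$ restricting to the identity on $(t,x,\psi)$ is the identity equivalence transformation, which is exactly what condition~1 requires.

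Conditions~2 and~3 both reduce to multiplicativity of the $T$-part. For condition~2, given $\mathcal T\in H$ with $\tilde\gamma=\mathcal T\gamma$ and $\varphi\in N_\gamma$, the conjugate $\mathcal T|_{(t,x,\psi)}\,\varphi\,(\mathcal T|_{(t,x,\psi)})^{-1}$ is a point symmetry of $\mathcal L_{i\tilde\gamma x}$ whose $T$-part is the product of those of~$\mathcal T$, of~$\varphi$ (trivial) and of~$\mathcal T^{-1}$; the contributions of $\mathcal T$ and $\mathcal T^{-1}$ cancel, the superposition part of~$\varphi$ is carried to a solution of the target equation, and so the conjugate lies in $N_{\tilde\gamma}=G^{\rm unf}_{i\tilde\gamma x}$. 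Applying the argument to~$\mathcal T^{-1}$ gives the reverse inclusion, hence the equality $N_{\mathcal T\gamma}=\mathcal T|_{(t,x,\psi)}N_\gamma(\mathcal T|_{(t,x,\psi)})^{-1}$. For condition~3, given $(\gamma^1,\gamma^2,\varphi)\in\mathcal G^\sim_{\eqref{igammax subclass}}$ I would take $\mathcal T\in H$ carrying exactly the same constants $a_0,\dots,a_3$ and sign~$\varepsilon$ as~$\varphi$ but with $b_0=b_1=0$, $c=1$, $\Phi=0$. Since the parameter law~\eqref{equgroupoidigamamx subclass_c} depends only on $(T,\varepsilon)$, we get $\mathcal T\gamma^1=\gamma^2$. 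Then $\varphi^2:=\varphi\circ(\mathcal T|_{(t,x,\psi)})^{-1}$ is a symmetry of $\mathcal L_{i\gamma^2x}$ whose $T$-part is the product of that of~$\varphi$ and that of~$\mathcal T^{-1}$, which again cancels to the trivial one; hence $\varphi^2\in N_{\gamma^2}$ and $\varphi=\varphi^2\,(\mathcal T|_{(t,x,\psi)})\,\varphi^1$ with $\varphi^1={\rm id}\in N_{\gamma^1}$, as permitted by the discussion following Definition~\ref{DefinitionOfUniformlySemi-normalizedClasses}.

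The routine-looking but genuinely delicate step is the bookkeeping of the reflection sign~$\varepsilon$ and of the Wigner-type complex conjugation (the passage $\hat\psi=\psi^*$ that occurs when $T_t<0$) through these compositions: I must check that the $x$-reflections of $\mathcal T$ and $\mathcal T^{-1}$ compose to $\varepsilon^2=1$ and that the conjugations governed by $\varepsilon'=\sgn T_t$ are consistent, so that $\varphi^2$ (and the conjugate in condition~2) really has $T_t=1>0$ and no residual conjugation. Once the $T$-part is confirmed to be genuinely multiplicative under composition, which is immediate from the explicit form~\eqref{equgroupoidigamamx subclass}, all three conditions drop out and the uniform semi-normalization of the subclass~\eqref{igammax subclass} with respect to $\{G^{\rm unf}_V\}$ and~$H$ follows.
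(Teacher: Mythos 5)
Your proof is correct, and it rests on the same ingredient as the paper's, namely the explicit description~\eqref{equgroupoidigamamx subclass} of the equivalence groupoid of the subclass~\eqref{igammax subclass}, but it organizes the verification of conditions~2 and~3 of Definition~\ref{DefinitionOfUniformlySemi-normalizedClasses} differently. The paper writes every admissible transformation as the explicit three-factor product $\varphi=\varphi^2\varphi^0\varphi^1$, where $\varphi^1=\varphi(t,1,0,0,1,\Phi)\in G^{\rm unf}_V$ carries the linear-superposition part on the source side, $\varphi^0=\varphi(T,\varepsilon,0,0,1,0)$ (prolonged to~$\gamma$) lies in~$H$, and $\varphi^2=\varphi(t,1,b_1,b_0,c,0)\in G^{\rm unf}_{\tilde V}$ carries the shifts and the scaling on the target side; membership of each factor is read off directly from the parameters, at the cost of an (unstated) verification of the composition identity. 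You instead use the option, noted after Definition~\ref{DefinitionOfUniformlySemi-normalizedClasses}, of taking $\varphi^1={\rm id}$: you define $\varphi^2:=\varphi\,(\mathcal T|_{(t,x,\psi)})^{-1}$ with $\mathcal T\in H$ the element sharing the $(T,\varepsilon)$-data of~$\varphi$, and prove $\varphi^2\in N_{\gamma^2}$ structurally, from the facts that the $(T,\varepsilon)$-part is multiplicative under composition and that $N_{\gamma^2}$ is exactly the set of point symmetries of the target equation with trivial $(T,\varepsilon)$-part (which is what the groupoid theorem together with Corollary~\ref{CorollaryOnCommonSymGroupOfEqsFromSubclass(13)} gives). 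The same multiplicativity yields your condition~2, where the paper simply asserts $G^{\rm unf}_{\tilde V}=\varphi\, G^{\rm unf}_V\varphi^{-1}$ for any admissible~$\varphi$ of the given form. The two factorizations differ only by where the superposition part ends up: conjugating the paper's $\varphi^1$ through $\varphi^0$ absorbs it into your~$\varphi^2$. What your route buys is that you never need the closed-form composition identity (with its exponential factors), only the cancellation of $T$-parts, signs, and the $T_t<0$ conjugations, which you correctly flag as the one delicate point; what the paper's route buys is that all three factors are exhibited explicitly, so every membership is immediate by inspection. Your treatment of condition~1 is a fleshed-out version of what the paper declares obvious.
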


\begin{proof}
It is obvious that for any~$V$ the intersection of~$\pi_*H$ and~$G^{\rm unf}_V$ consists of the identity transformation only. 
Consider an arbitrary admissible transformation $(\gamma,\tilde\gamma,\varphi)$ in the subclass~\eqref{igammax subclass}, 
which maps the equation~$\mathcal L_V$ with $V=i\gamma(t)x$ 
to the equation~$\mathcal L_{\tilde V}$ with $\tilde V=i\tilde\gamma(\tilde t)\tilde x$. 
Then $\varphi$ is of the form~\eqref{equgroupoidigamamx subclass_a}--\eqref{equgroupoidigamamx subclass_b} and thus
$G^{\rm unf}_{\tilde V}=\varphi G^{\rm unf}_V\varphi^{-1}$. 
We denote the dependence of~$\varphi$ on the transformation parameters 
appearing in~\eqref{equgroupoidigamamx subclass_a}--\eqref{equgroupoidigamamx subclass_b} 
by writing $\varphi=\varphi(T,\varepsilon,b_1,b_0,c,\Phi)$.
It is obvious that $\varphi=\varphi^2\varphi^0\varphi^1$, where 
$\varphi^1=\varphi(t,1,0,0,1,\Phi)\in G^{\rm unf}_V$, 
$\varphi^2=\varphi(t,1,b_1,b_0,c,0)\in G^{\rm unf}_{\tilde V}$, and
the transformation $\varphi^0=\varphi(T,\varepsilon,0,0,1,0)$, 
prolonged to~$\gamma$ according to~\eqref{equgroupoidigamamx subclass_c}, 
belongs to~$H$.
\end{proof}

Applying Theorem~\ref{Determing theorem} to equations from the subclass~\eqref{igammax subclass}, 
we consider the classifying condition~\eqref{classcond} for the associated form of potentials,
$V=i\gamma(t)x$, and split this condition with respect to~$x$. 
As a result, we obtain the following system of differential equations 
for the parameters of Lie symmetry vector fields:  
\begin{gather}\label{systemigammasubclass}
\tau_{ttt}=0,\quad\chi_{tt}=0,\quad\sigma_t=0,\quad\rho_t=-\gamma\chi-\frac14\tau_{tt},
\end{gather}
as well as the classifying condition
\begin{gather}\label{classcondigammax subclass}
\left(\gamma|\tau|^{3/2}\right)_t=0.
\end{gather}
It is then clear that the kernel invariance algebra~$\mathfrak g^\cap_{\eqref{igammax subclass}}$ of 
the subclass~\eqref{igammax subclass} is spanned by the vector fields
$M$ and $I$.

\begin{theorem}\label{Theoremigamma subclass}
The maximal Lie invariance algebra $\mathfrak g_V$ of an equation~$\mathcal L_V$ 
for $V=i\gamma(t)x$ is spanned by the vector fields $D_{\eqref{igammax subclass}}(\tau)$, 
$G(1)+\rho^1I$, $G(t)+\rho^2I$, $M$, $I$, $Z(\eta^0)$, where
\begin{gather*}
D_{\eqref{igammax subclass}}(\tau):=D(\tau)-\frac14\tau_tI=\tau\p_t+\frac12\tau_tx\p_x+
\frac18\tau_{tt}x^2M-\frac14\tau_tI,
\end{gather*}
the parameter $\tau$ runs through the set $\mathfrak P_\gamma$ of quadratic polynomials 
in $t$ that satisfy the classifying condition~\eqref{classcondigammax subclass}, 
$\rho^1=-\int\gamma(t)\,{\rm d}t$, $\rho^2=-\int t\gamma(t)\,{\rm d}t$ 
and $\eta^0$ runs through the solution set of the equation~$\mathcal L_V$.
\end{theorem}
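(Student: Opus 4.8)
The plan is to specialize the general description of $\mathfrak g_V$ furnished by Theorem~\ref{Determing theorem} to the particular potential $V=i\gamma(t)x$. By that theorem, every element of $\mathfrak g_V$ has the form $Q=D(\tau)+G(\chi)+\sigma M+\rho I+Z(\eta^0)$, where $\tau,\chi,\sigma,\rho$ are real-valued functions of $t$ satisfying the classifying condition~\eqref{classcond}, and $\eta^0$ is a solution of~$\mathcal L_V$. The summand $Z(\eta^0)$ then requires no further work: it already supplies the linear-superposition part of the asserted spanning set. What remains is to determine which tuples $(\tau,\chi,\sigma,\rho)$ are admissible for the chosen potential and to re-express the surviving vector fields in the claimed basis.

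To this end I would substitute $V=i\gamma(t)x$ into~\eqref{classcond} and split with respect to $x$, exactly as done immediately before the statement; this produces the system~\eqref{systemigammasubclass} together with the classifying condition~\eqref{classcondigammax subclass}. Integrating~\eqref{systemigammasubclass} is straightforward: the equation $\tau_{ttt}=0$ forces $\tau$ to be a quadratic polynomial in $t$, and combined with the constraint~\eqref{classcondigammax subclass} this says precisely that $\tau$ ranges over the set $\mathfrak P_\gamma$; the equation $\chi_{tt}=0$ gives $\chi=d_0+d_1t$ for real constants $d_0,d_1$, so that $G(\chi)=d_0G(1)+d_1G(t)$; and $\sigma_t=0$ makes $\sigma$ an arbitrary constant.

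The one genuinely informative step is the treatment of the final equation of~\eqref{systemigammasubclass}, namely $\rho_t=-\gamma\chi-\tfrac14\tau_{tt}$. Inserting $\chi=d_0+d_1t$ and integrating yields $\rho=d_0\rho^1+d_1\rho^2-\tfrac14\tau_t+\rho_0$, where $\rho^1=-\int\gamma\,{\rm d}t$ and $\rho^2=-\int t\gamma\,{\rm d}t$ are the stated primitives and $\rho_0$ is a free real constant. Regrouping, a general $Q\in\mathfrak g_V$ then reads
\[
Q=\big(D(\tau)-\tfrac14\tau_tI\big)+d_0\big(G(1)+\rho^1I\big)+d_1\big(G(t)+\rho^2I\big)+\sigma M+\rho_0I+Z(\eta^0),
\]
and recognizing $D(\tau)-\tfrac14\tau_tI=D_{\eqref{igammax subclass}}(\tau)$ while noting that $\sigma M+\rho_0I$ ranges over the kernel $\langle M,I\rangle$ exhibits $Q$ as a linear combination of the basis listed in the statement. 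I expect no real obstacle here; the only point needing care is the bookkeeping in this last regrouping, where one must check that the term $-\tfrac14\tau_t$ from integrating $-\tfrac14\tau_{tt}$ is exactly what merges with $D(\tau)$ to form $D_{\eqref{igammax subclass}}(\tau)$, and that the two genuinely free constants $\sigma$ and $\rho_0$ account for the kernel $\langle M,I\rangle$ and nothing more. Since the computation is linear and each arbitrary constant is matched to a single basis vector field, this is routine verification rather than a substantial difficulty.
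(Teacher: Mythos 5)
Your proposal is correct and follows essentially the same route as the paper: the paper derives the system~\eqref{systemigammasubclass} and the classifying condition~\eqref{classcondigammax subclass} by substituting $V=i\gamma(t)x$ into~\eqref{classcond} and splitting with respect to~$x$ in the text immediately preceding the theorem, and then states the theorem with the integration left implicit. Your explicit integration of~\eqref{systemigammasubclass} and the regrouping of the $-\tfrac14\tau_t I$ term into $D_{\eqref{igammax subclass}}(\tau)$ is exactly the intended (and correct) completion of that argument.
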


Each equation~$\mathcal L_V$ with $V=i\gamma(t)x$, 
belonging to the subclass~\eqref{igammax subclass}, is invariant with respect 
to the Lie algebra $\mathfrak g^{\rm unf}_V=\langle G(1)+\rho^1I,G(t)+\rho^2I,M,I,Z(\eta^0)\rangle$ 
of the group~$G^{\rm unf}_V$, where $\eta^0$ again runs through the solution set of the equation~$\mathcal L_V$.
Such algebras have a similar structure for all equations from the subclass.
The commutation relations between vector fields from $\mathfrak g_\spanindex$
imply that the essential part $\mathfrak g^{\rm ess}_V$ of $\mathfrak g_V $ admits the representation 
$\mathfrak g^{\rm ess}_V=\mathfrak g^{\rm ext}_V\lsemioplus(\mathfrak g^{\rm unf}_V\cap\mathfrak g^{\rm ess}_\spanindex)$, 
where $\mathfrak g^{\rm ext}_V=\{D_{\eqref{igammax subclass}}(\tau)\mid\tau\in\mathfrak P_\gamma\}$ 
is a subalgebra of $\mathfrak g^{\rm ess}_V$, 
and $\mathfrak g^{\rm unf}_V$ is an ideal of $\mathfrak g^{\rm ess}_V\cap\mathfrak g^{\rm ess}_\spanindex$.
Interpreting the above representation, we can say that the algebra $\mathfrak g^{\rm ess}_V$ 
is obtained by extending the algebra $\mathfrak g^{\rm unf}_V\cap\mathfrak g^{\rm ess}_\spanindex$ with elements of $\mathfrak g^{\rm ext}_V $.

Consider the linear span
\[
\mathfrak g^{\rm ext}_\spanindex:=\textstyle\sum\limits_{V=i\gamma(t)x}\mathfrak g^{\rm ext}_V =
\langle D_{\eqref{igammax subclass}}(1),\,D_{\eqref{igammax subclass}}(t),\,D_{\eqref{igammax subclass}}(t^2)\rangle
\subset\pi_*\mathfrak g^\sim_{\eqref{igammax subclass}},
\]
where $\pi$ is the projection of the joint space of the variables and the arbitrary 
element on the space of the variables only. 
The algebra $\mathfrak g^{\rm ext}_\spanindex$ is isomorphic to the algebra ${\rm sl}(2,\mathbb R)$.
The pushforwards of vector fields from $\mathfrak g^{\rm ext}_\spanindex$
by transformations from the group \smash{$\pi_*G^\sim_{\eqref{igammax subclass}}$} constitute 
the inner automorphism group ${\rm Inn}(\mathfrak g^{\rm ext}_\spanindex)$ 
of the algebra $\mathfrak g^{\rm ext}_\spanindex$.
The action of \smash{$G^\sim_{\eqref{igammax subclass}}$} on equations 
from the subclass~\eqref{igammax subclass} induces the action 
of ${\rm Inn}(\mathfrak g^{\rm ext}_\spanindex)$ 
on the subalgebras of the algebra $\mathfrak g^{\rm ext}_\spanindex$.
Consequently, the classification of possible Lie symmetry extensions 
in the subclass~\eqref{igammax subclass} reduces to the classification of subalgebras 
of the algebra ${\rm sl}(2,\mathbb R)$, which is well known.

\begin{theorem}\label{theoremresultsigammax}
A complete list of $G^\sim_{\eqref{igammax subclass}}$-inequivalent 
(and, therefore, $\mathcal G^\sim_{\eqref{igammax subclass}}$-inequivalent) 
Lie symmetry extensions in the subclass~\eqref{igammax subclass} 
is given by Table~2.
\begin{table}[!ht]
\renewcommand{\arraystretch}{1.8}
\begin{center}
Table 2. Results of the group classification of the subclass~\eqref{igammax subclass}.
$$
\begin{array}{|c|c|c|l|}
\hline
\mbox{no.}&k_1& V &\hfil\mbox{Basis of }\mathfrak g^{\rm ess}_V\\
\hline
1         & 0 & i\gamma(t)x                         & M,\;I,\;G(1)-\left(\int\gamma(t)\,{\rm d}t\right)I,\;G(t)-\left(\int t\gamma(t)\,{\rm d}t\right)I\\
2\mbox{a} & 1 & ibx,\,b\in\mathbb R_*               & M,\;I,\;G(1)-btI,\;G(t)-\frac12 bt^2I,\;D(1)\\
2\mbox{b} & 1 & ib|t|^{-3/2}x,\,b\in\mathbb R_*     & M,\;I,\;G(1)+2bt|t|^{-3/2}I,\;G(t)-2b|t|^{1/2}I,\;D(t)\\
2\mbox{c} & 1 & ib(t^2+1)^{-3/2}x,\,b\in\mathbb R_* & M,\;I,\;G(1)-bt(t^2+1)^{-1/2}I,\;G(t)+b(t^2+1)^{-1/2}I,\\
          &   &                                     &                     D(t^2+1)-\frac12 tI\\
3         & 3 & 0                                   & M,\;I,\;G(1),\;G(t),\;D(1),\;D(t),\;D(t^2)- \frac12 tI\\
\hline
\end{array}
$$
\end{center}
\footnotesize
Lie symmetry extension given in Case~1 of Table~2 is maximal if and only if
the arbitrary element~$\gamma$ is of the form $\gamma\ne c_3|c_2t^2+c_1t+c_0|^{-3/2}$ 
for any real constants $c_0$, $c_1$, $c_2$ and $c_3$ with $c_0$, $c_1$ 
and $c_2$ not vanishing simultaneously. 
$b=1\bmod G^\sim$ in Case~2a and $b>0\bmod G^\sim$ in Cases~2b and~2c.
\end{table}
\end{theorem}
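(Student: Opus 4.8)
The plan is to exploit the structural decomposition established just above the theorem. By Theorem~\ref{Theoremigamma subclass} every essential invariance algebra in the subclass splits as $\mathfrak g^{\rm ess}_V=\mathfrak g^{\rm ext}_V\lsemioplus(\mathfrak g^{\rm unf}_V\cap\mathfrak g^{\rm ess}_\spanindex)$, where the uniform summand, spanned by $M$, $I$, $G(1)+\rho^1I$ and $G(t)+\rho^2I$, is present for \emph{every} value of~$\gamma$. Hence the whole classification reduces to describing the possible extension parts $\mathfrak g^{\rm ext}_V=\{D_{\eqref{igammax subclass}}(\tau)\mid\tau\in\mathfrak P_\gamma\}$, which are subalgebras of $\mathfrak g^{\rm ext}_\spanindex\cong{\rm sl}(2,\mathbb R)$. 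Since the only generators carrying a $\p_t$-component are the $D_{\eqref{igammax subclass}}(\tau)$, the invariant $k_1$ equals $\dim\mathfrak g^{\rm ext}_V=\dim\mathfrak P_\gamma$, so I would organize the argument by this dimension.

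First I would pin down the admissible dimensions from the classifying condition~\eqref{classcondigammax subclass}, which reads $\gamma|\tau|^{3/2}=\const$. If $\gamma\equiv0$, every quadratic polynomial $\tau$ is admissible, giving $\mathfrak P_\gamma=\langle1,t,t^2\rangle$ and $k_1=3$; this is Case~3 with $V=0$. If $\gamma\not\equiv0$, any nonzero admissible $\tau$ forces $|\tau|^{3/2}=\const/\gamma$, so that any two nonzero admissible polynomials have constant ratio and hence satisfy $\tau_1^{\,2}=k^2\tau_2^{\,2}$, making them proportional as polynomials. Thus $\mathfrak P_\gamma$ is at most one-dimensional, which already rules out $k_1=2$ and shows that a nontrivial extension occurs precisely when $\gamma=c_3|\tau|^{-3/2}$ for some fixed quadratic $\tau=c_2t^2+c_1t+c_0$; then $\mathfrak g^{\rm ext}_V=\langle D_{\eqref{igammax subclass}}(\tau)\rangle$ and $k_1=1$. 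The remaining $\gamma$ give $k_1=0$ and Case~1, whose maximality clause is exactly the negation of this form.

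It then remains to classify the one-dimensional extensions up to equivalence. The fractional-linear action $t\mapsto T=(a_1t+a_0)/(a_3t+a_2)$ built into~\eqref{equgroupoidigamamx subclass_a} realizes the inner automorphism group ${\rm Inn}(\mathfrak g^{\rm ext}_\spanindex)\cong{\rm PSL}(2,\mathbb R)$ on the generators $D_{\eqref{igammax subclass}}(\tau)$, so classifying $\langle D_{\eqref{igammax subclass}}(\tau)\rangle$ up to $G^\sim_{\eqref{igammax subclass}}$ becomes the classical problem of classifying one-dimensional subalgebras of ${\rm sl}(2,\mathbb R)$ up to conjugation. The three conjugacy classes are separated by the discriminant $c_1^2-4c_0c_2$ of~$\tau$, and I would take the representatives $\tau=1$ (parabolic, discriminant $0$), $\tau=t$ (hyperbolic, $>0$) and $\tau=t^2+1$ (elliptic, $<0$), which through $\gamma=c_3|\tau|^{-3/2}$ yield the potentials $ibx$, $ib|t|^{-3/2}x$ and $ib(t^2+1)^{-3/2}x$ of Cases~2a--2c. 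For each representative the uniform generators come from $\rho^1=-\int\gamma\,{\rm d}t$ and $\rho^2=-\int t\gamma\,{\rm d}t$, and the residual stabilizer within $G^\sim_{\eqref{igammax subclass}}$ normalizes the constant~$b$: by~\eqref{equgroupoidigamamx subclass_c} the affine stabilizer of $\tau=1$ rescales~$b$ freely (giving $b=1$), whereas the stabilizers of $\tau=t$ and $\tau=t^2+1$ act only through the sign $\varepsilon\varepsilon'$ (giving $b>0$).

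The main obstacle I anticipate is not the ${\rm sl}(2,\mathbb R)$ subalgebra classification itself, which is standard, but the bookkeeping that ties it to the concrete data: verifying that $\mathfrak P_\gamma$ is genuinely a linear space of the claimed dimension, that the discriminant of~$\tau$ is the conjugation invariant separating Cases~2a--2c, and that the transformation law~\eqref{equgroupoidigamamx subclass_c} produces exactly the stated normalizations of~$b$ (in particular the asymmetry between $b=1$ in Case~2a and $b>0$ in Cases~2b--2c). Checking the maximality clause of Case~1 then reduces to re-solving the system~\eqref{systemigammasubclass}--\eqref{classcondigammax subclass} for the chosen~$\gamma$ and confirming that no further admissible~$\tau$ appears.
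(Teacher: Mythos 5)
Your proposal is correct, and its backbone coincides with the paper's proof: both reduce the problem to classifying subalgebras of $\mathfrak g^{\rm ext}_\spanindex\cong{\rm sl}(2,\mathbb R)$ realized by the M\"obius action of $G^\sim_{\eqref{igammax subclass}}$ on~$\tau$, both obtain Cases~2a--2c by integrating the classifying condition~\eqref{classcondigammax subclass} for the representatives $\tau=1$, $\tau=t$, $\tau=t^2+1$, and both normalize~$b$ by the residual equivalences. Where you genuinely deviate is in how the intermediate dimension is excluded: the paper runs through the known optimal list of subalgebras of ${\rm sl}(2,\mathbb R)$, which includes the two-dimensional subalgebra $\langle D(1),D(t)\rangle$, and disposes of it by showing that the corresponding overdetermined pair of equations forces $\gamma=0$; you instead prove directly from $\gamma|\tau|^{3/2}=\const$ that any two nonzero admissible quadratic polynomials are proportional (equivalently, their Wronskian vanishes on the open set where $\gamma\ne0$ and hence identically), so $\dim\mathfrak P_\gamma\leqslant1$ whenever $\gamma\not\equiv0$. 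This buys two things at once: it rules out $k_1=2$ without invoking any subalgebra list, and it yields the characterization ``extension exists iff $\gamma=c_3|c_2t^2+c_1t+c_0|^{-3/2}$'' as a byproduct, which is exactly the maximality clause of Case~1. Conversely, the paper's route is the one that scales: citing the optimal list is uniform with its general algebraic method and is what the authors recommend for higher-dimensional generalizations. Two details you should make explicit when writing this up: linearity of $\mathfrak P_\gamma$ is immediate once~\eqref{classcondigammax subclass} is rewritten in its differential form $\tau\gamma_t+\tfrac32\tau_t\gamma=0$ (the integrated form with $|\tau|^{3/2}$ obscures this), and the constants $\gamma|\tau_j|^{3/2}$ in your proportionality argument are nonzero for nonzero~$\tau_j$ precisely because a quadratic polynomial vanishing on an open set vanishes identically. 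Your stabilizer computation for the $b$-normalization (free rescaling by the affine stabilizer of $\tau=1$ versus only sign changes via $\varepsilon\varepsilon'$ for $\tau=t$ and $\tau=t^2+1$, by~\eqref{equgroupoidigamamx subclass_c}) is correct and in fact spells out what the paper leaves as a one-line remark.
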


\begin{proof}
An optimal set of subalgebras of the algebra $\mathfrak g^{\rm ext}_\spanindex$ 
is given by
\[
\{0\}, \,\ 
\langle D(1)\rangle, \,\
\langle D(t)\rangle, \,\
\langle D(t^2+1)-\tfrac12tI\rangle, \,\
\langle D(1),D(t)\rangle, \
\langle D(1),D(t),D(t^2+1)-\tfrac12tI\rangle.
\]

The zero subalgebra gives the general case with no extension of $\mathfrak g^{\rm unf}_V$, 
which is Case~1 of Table~2.

For the one-dimensional subalgebras, we substitute the corresponding values of $\tau $, $\tau=1$, $\tau=t$
and $\tau=t^2+1$ into the classifying condition~\eqref{classcondigammax subclass}, 
integrate the resulting equations with respect to $\gamma$ and obtain Cases~2a--2c of Table~2, respectively.
Using equivalence transformations that do not change the form of $\gamma$, 
we can set $b=1$ in Case~2a and $b>0$ in Cases~2b and~2c.

Similarly, the classifying condition~\eqref{classcondigammax subclass} 
for the two-dimensional subalgebra gives an overdetermined system of two equations 
with $\tau=1$ and $\tau=t$, for which the only solution is $\gamma=0$. 
The maximal extension of $\mathfrak g^{\rm unf}_V$ 
for $\gamma=0$ is three-dimensional and is given by the last subalgebra of the list. 
This gives Case~3 of Table~1. 
\end{proof}

All cases presented in Table~2 are related to those of Table~1.
In the symbol T.N, used in the following, 
T denotes the table number and N is the case number (in Table~T). 
Thus, Cases~2.1, 2.2a and~2.3
coincide with Cases~1.2, 1.4c and 1.6, respectively. 
Some cases are connected via equivalence transformations, 
which are of the form~\eqref{equgroupoid},
\begin{gather*}
2.2b\,\to 1.4a:\quad T=\frac{\sgn t}4 \ln |t|,\quad X^0=\Sigma=\Upsilon=0,\quad\Phi=0;\\
2.2c\,\to1.4b:\quad T=\arctan t,\quad X^0=\Sigma=\Upsilon=0,\quad \Phi=0.
\end{gather*}
Thus, the result of group classification of 
the class~\eqref{LinSchEqs} can be reformulated with involving Table~2.

\begin{corollary}\label{mainresultsclassifiaction}
A complete list of inequivalent Lie symmetry extensions in the class~\eqref{LinSchEqs} is exhausted 
by Cases~1, 3 and~5 of Table~1 and the cases collected in Table~2.
\end{corollary}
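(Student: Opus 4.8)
The plan is to show that the list in the corollary is merely a repackaging of the complete list already established in Theorem~\ref{theoremresults}, in which the classification cases with $k_2=2$ are replaced by their description within the subclass~\eqref{igammax subclass} provided by Theorem~\ref{theoremresultsigammax}. First I would invoke Lemma~\ref{lemcondfork_2}, by which the invariant $k_2$ takes only the values $0$ and $2$. This splits Table~1 into the cases with $k_2=0$, namely Cases~1, 3 and~5, and the cases with $k_2=2$, namely Cases~2, 4a, 4b, 4c and~6. The cases with $k_2=0$ are carried over verbatim, so the entire content of the corollary is the assertion that the cases with $k_2=2$ are exhausted, up to $G^\sim$-equivalence, by the cases of Table~2.

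Next I would argue that every equation with $k_2=2$ is $G^\sim$-equivalent to an equation from the subclass~\eqref{igammax subclass}. This is exactly the reduction carried out at the beginning of Section~\ref{LinScEqsalternative}: by Lemmas~\ref{gclaslema} and~\ref{lemmatauzero} the essential algebra $\mathfrak g^{\rm ess}_V$ contains, up to $G^\sim$-equivalence, the vector fields $G(1)+\rho^1I$ and $G(t)+\rho^2I$, and integrating the classifying condition~\eqref{classcond} for them forces $V=i\gamma(t)x$ after normalizing $\alpha=\beta=0\bmod G^\sim$. Hence the cases with $k_2=2$ are precisely the Lie symmetry extensions occurring inside the subclass~\eqref{igammax subclass}, and Theorem~\ref{theoremresultsigammax} classifies these in Table~2.

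The step I expect to be the main obstacle is reconciling the two equivalence relations: Table~2 is built modulo the equivalence group $G^\sim_{\eqref{igammax subclass}}$ of the subclass, whereas the corollary asks for a list modulo the equivalence group $G^\sim$ of the whole class, so one must check that these relations coincide on potentials of the form $i\gamma(t)x$. I would settle this by observing that any transformation from $G^\sim$ mapping one such potential to another must, by~\eqref{equgroupoid_c}, keep $\tilde V$ of the form $i\tilde\gamma\tilde x$; the resulting constraints are exactly those solved at the start of Section~\ref{LinScEqsalternative} (in particular $2T_{ttt}T_t-3T_{tt}^2=0$, which forces $T$ to be a Möbius function of $t$), so the transformation is already of the form~\eqref{equgroupoidigamamx subclass} and thus lies in $G^\sim_{\eqref{igammax subclass}}$. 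The reverse inclusion $G^\sim_{\eqref{igammax subclass}}\subset G^\sim$ is immediate from Corollaries~\ref{eqgpe} and~\ref{eqgpe_{13}}, so the two equivalences agree and Table~2 indeed enumerates the $G^\sim$-inequivalent cases with $k_2=2$.

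Finally I would confirm the bijection between Table~2 and the $k_2=2$ part of Table~1 using the correspondence displayed just before the corollary: Cases~2.1, 2.2a and~2.3 coincide with Cases~1.2, 1.4c and~1.6, while Cases~2.2b and~2.2c are carried to Cases~1.4a and~1.4b by the explicit equivalence transformations $T=\frac{\sgn t}{4}\ln|t|$ and $T=\arctan t$ (with $X^0=\Sigma=\Upsilon=0$ and $\Phi=0$). Since $k_2$ is a $G^\sim$-invariant, the Table~2 cases (all with $k_2=2$) are moreover inequivalent to the retained Cases~1, 3 and~5 (all with $k_2=0$). Combining the $k_2=0$ cases with this repackaging of the $k_2=2$ cases then yields the asserted complete list.
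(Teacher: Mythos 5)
Your overall route is the paper's own: the paper justifies this corollary exactly as you do, by combining the $k_2$-dichotomy of Lemma~\ref{lemcondfork_2}, the reduction of the $k_2=2$ cases to the subclass~\eqref{igammax subclass} via Lemmas~\ref{gclaslema} and~\ref{lemmatauzero}, the completeness of Table~2 (Theorem~\ref{theoremresultsigammax}), and the explicit correspondences between Table~2 and Cases~2, 4a--4c, 6 of Table~1 (identity for 2.1, 2.2a, 2.3, and the transformations $T=\frac{\sgn t}{4}\ln|t|$, $T=\arctan t$ for 2.2b, 2.2c). Most of your steps are sound, including the observation that $G^\sim$-invariance of $k_2$ separates the Table~2 cases from Cases~1, 3, 5.

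There is, however, one concretely wrong step in your reconciliation of the two equivalence relations: from the fact that a transformation in $G^\sim$ mapping one potential $i\gamma(t)x$ to another is of the form~\eqref{equgroupoidigamamx subclass}, you conclude that it ``lies in $G^\sim_{\eqref{igammax subclass}}$''. That inclusion is false. By Corollary~\ref{eqgpe_{13}}, membership in $G^\sim_{\eqref{igammax subclass}}$ requires $b_0=b_1=0$ (and $\Phi=0$), whereas the constraint system you invoke admits $X^0=b_1T+b_0\ne0$; for such solutions the $\psi$-component carries the factor $\exp\bigl(-\varepsilon\int\gamma X^0|T_t|^{-1/2}\,{\rm d}t\bigr)$, which is tuned to the fixed $\gamma$. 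These transformations are genuine elements of $G^\sim$ mapping that particular subclass equation into the subclass (so they generate admissible transformations, i.e.\ elements of $\mathcal G^\sim_{\eqref{igammax subclass}}$), but they are not equivalence transformations of the subclass. So at that point the coincidence of the two relations is asserted, not proved. The gap is easy to close in either of two ways. First, the induced action on the arbitrary element, Eq.~\eqref{equgroupoidigamamx subclass_c}, $\tilde\gamma=\varepsilon\varepsilon'|T_t|^{-3/2}\gamma$, involves only $T$ and $\varepsilon$ and not $b_0$, $b_1$, $c$, $\Phi$; hence whenever some $G^\sim$-transformation maps $i\gamma x$ to $i\tilde\gamma\tilde x$, the element of $G^\sim_{\eqref{igammax subclass}}$ with the same $(T,\varepsilon)$ and $b_0=b_1=0$, $c=1$ realizes the same map $\gamma\mapsto\tilde\gamma$, so $G^\sim$-equivalence of subclass potentials does imply $G^\sim_{\eqref{igammax subclass}}$-equivalence. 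Second, and more simply, you can bypass the issue: Theorem~\ref{theoremresultsigammax} asserts that the Table~2 cases are $\mathcal G^\sim_{\eqref{igammax subclass}}$-inequivalent, i.e.\ inequivalent under \emph{all} point transformations connecting equations of the subclass, and any whole-class equivalence between two subclass equations restricts to precisely such an admissible transformation; this yields pairwise inequivalence in the whole class with no comparison of equivalence groups at all. Your remaining direction (that elements of $G^\sim_{\eqref{igammax subclass}}$ extend to elements of $G^\sim$, with $X^0=0$, $\Sigma=\arg c$, $\Upsilon=\ln|c|-\frac14\ln|T_t|$) is correct, and with the above repair your proof is complete and coincides with the paper's.
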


\section{Subclass with real-valued potentials}\label{LinSchEqssubclassreal case}

We derive results on group analysis of the subclass~${\rm Sch}_{\mathbb R}$ of equations 
of the form~\eqref{LinSchEqs} with real-valued potentials using those for 
the whole class~\eqref{LinSchEqs}. The condition that potentials 
are real valued leads to additional constraints 
for transformations and infinitesimal generators.

\begin{theorem}\label{thmequivptransfsub}
The equivalence groupoid~$\mathcal G^\sim_{\mathbb R}$ of 
the subclass~${\rm Sch}_{\mathbb R}$  
consists of triples of the form $(V,\tilde V,\varphi)$, 
where $\varphi$ is a point transformation in the space of variables, 
whose components are
\begin{subequations}\label{equgroupoidsub}
\begin{gather}\label{equgroupoidsub_a}
\tilde t=T,\quad
\tilde x=\varepsilon|T_t|^{1/2}x+X^0,
\\[.5ex]\label{equgroupoidsub_b}
\tilde \psi= \frac a{|T_t|^{1/4}}\exp\left(\frac i8\frac{T_{tt}}{|T_t|}\,x^2
 +\frac i2\frac{\varepsilon\varepsilon' X^0_t}{|T_t|^{1/2}}\,x+i\Sigma\right)
 (\hat\psi+\hat\Phi), 
\end{gather}
the transformed potential $\tilde V$ is expressed in terms of~$V$ as
\begin{gather}\label{equgroupoidsub_c}
\tilde V=\frac V{|T_t|}
 +\frac{2T_{ttt}T_t-3T_{tt}^{\,2}}{16\varepsilon'T_t^{\,3}}x^2
 +\frac{\varepsilon\varepsilon'}{2|T_t|^{1/2}}\left(\dfrac{X^0_t}{T_t}\right)_{\!t}x
 -\frac{(X^0_t)^2}{4T_t^{\,2}}+\frac{\Sigma_t}{T_t},
\end{gather}
\end{subequations}
 $T=T(t)$, $X^0=X^0(t)$ and $\Sigma=\Sigma(t)$ are 
arbitrary smooth real-valued functions of $t$ with $T_t\ne 0$ and 
$\Phi=\Phi(t,x)$ is an arbitrary solution of the initial equation.  
$a$ is a nonzero real constant, $\varepsilon=\pm 1$ and $\varepsilon'=\sgn T_t$.
\end{theorem}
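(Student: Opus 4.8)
The plan is to obtain $\mathcal{G}^\sim_{\mathbb R}$ by restricting the already-known equivalence groupoid $\mathcal{G}^\sim$ of the full class~\eqref{LinSchEqs}, described in Theorem~\ref{thmequivptransf}, to the subclass ${\rm Sch}_{\mathbb R}$, which is singled out by the single condition that the potential be real, $V=V^*$. Any admissible transformation $(V,\tilde V,\varphi)$ of ${\rm Sch}_{\mathbb R}$ is in particular an admissible transformation of~\eqref{LinSchEqs} between two real potentials; hence $\varphi$ necessarily has the form~\eqref{equgroupoid_a}--\eqref{equgroupoid_b}, and $\tilde V$ is given by~\eqref{equgroupoid_c}. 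Conversely, a transformation of that form belongs to $\mathcal{G}^\sim_{\mathbb R}$ precisely when, starting from a real $V$, it yields a real $\tilde V$. Thus the whole problem reduces to imposing reality of $\tilde V$ on the formulas of Theorem~\ref{thmequivptransf}, and no new transformational analysis is required.

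First I would note that, since $V$ is real, we have $\hat V=V$ irrespective of the sign of $T_t$, so the leading term of~\eqref{equgroupoid_c} is $V/|T_t|$, which is real. As $T$, $X^0$, $\Sigma$, $\Upsilon$ are real-valued, the only non-real contributions to~\eqref{equgroupoid_c} are the two terms carrying an explicit $i$, namely $-iT_{tt}/(4T_t^{2})$ and $-i\Upsilon_t/T_t$. Collecting them gives $\operatorname{Im}\tilde V=-\tfrac{T_{tt}}{4T_t^{2}}-\tfrac{\Upsilon_t}{T_t}$, so requiring $\tilde V$ to be real is equivalent to the single ordinary differential equation $\Upsilon_t=-\tfrac{T_{tt}}{4T_t}$ for $\Upsilon$, while $\Sigma$, $T$ and $X^0$ remain unconstrained (the $\Sigma$-term contributes only $\Sigma_t/T_t$, which is automatically real).

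Next I would integrate this equation to get $\Upsilon=-\tfrac14\ln|T_t|+\const$, whence the amplitude factor $e^{\Upsilon}$ in~\eqref{equgroupoid_b} specializes to $a|T_t|^{-1/4}$, where $a>0$ comes from the constant of integration; negative values of $a$ are then admissible as well, being accounted for either by composing with the sign change $\psi\mapsto-\psi$ (a symmetry of every linear homogeneous equation) or, equivalently, by a shift of the free phase $\Sigma$ by $\pi$. Substituting this $\Upsilon$ back into~\eqref{equgroupoid_b} and~\eqref{equgroupoid_c} reproduces exactly~\eqref{equgroupoidsub_b} and~\eqref{equgroupoidsub_c}: in the potential formula the two imaginary terms cancel by construction, and since $\Upsilon$ enters~\eqref{equgroupoid_c} only through $-i\Upsilon_t/T_t$ it leaves no residual real contribution, so that $\Sigma_t/T_t$ survives. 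The parameter-function $\Phi$ stays an arbitrary solution of the initial equation, because restricting to real $V$ imposes no extra condition on solutions of $\mathcal{L}_V$.

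Finally, for the converse I would simply observe that every term on the right-hand side of~\eqref{equgroupoidsub_c} is manifestly real when $V$ is real, so any transformation of the stated form maps ${\rm Sch}_{\mathbb R}$ into itself and hence lies in $\mathcal{G}^\sim_{\mathbb R}$. The computation throughout is routine; the only point deserving a little care is the bookkeeping of the constant $a$, i.e.\ recognizing that the reality constraint collapses the two independent real parameter-functions $\Sigma$ and $\Upsilon$ of the full class down to a single free phase $\Sigma$ together with one real constant $a$, and justifying that $a$ may be taken to be any nonzero real number.
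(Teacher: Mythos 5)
Your proposal is correct and follows essentially the same route as the paper, which states Theorem~\ref{thmequivptransfsub} as a direct consequence of Theorem~\ref{thmequivptransf}: restrict the equivalence groupoid~$\mathcal G^\sim$ to real potentials, observe that $\operatorname{Im}\tilde V=-\tfrac{T_{tt}}{4T_t^{2}}-\tfrac{\Upsilon_t}{T_t}$ must vanish, and integrate $\Upsilon_t=-\tfrac{T_{tt}}{4T_t}$ to get $e^{\Upsilon}=a|T_t|^{-1/4}$. Your bookkeeping of the constant~$a$ (including the reduction of negative $a$ to a shift of~$\Sigma$ by~$\pi$) and the verification of the converse inclusion are both sound.
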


\begin{corollary}\label{eqgpesub}
The subclass ${\rm Sch}_{\mathbb R}$ is uniformly semi-normalized with respect to linear superposition of solutions. 
Its equivalence group~$G^\sim_{\mathbb R}$ consists of point 
transformations of the form~\eqref{equgroupoidsub} with $\Phi=0$.
\end{corollary}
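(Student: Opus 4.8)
The plan is to deduce both assertions from Theorem~\ref{thmequivptransfsub} in exactly the same manner as the analogous results were obtained for the full class~\eqref{LinSchEqs}, since the subclass~${\rm Sch}_{\mathbb R}$ is again a class of homogeneous linear equations. First I would observe that Theorem~\ref{thmequivptransfsub} already exhibits every element of~$\mathcal G^\sim_{\mathbb R}$ with transformational part~$\varphi$ in the explicit form~\eqref{equgroupoidsub}: the components for~$t$ and~$x$ are potential-independent, the component for~$\psi$ is $\tilde\psi=M(t,x)(\hat\psi+\hat\Phi)$ with a nonvanishing factor $M(t,x)=a|T_t|^{-1/4}\exp(\cdots)$ and with $\Phi$ an arbitrary solution of the source equation~$\mathcal L_V$, and $T_t\ne0$ gives $X_x\ne0$. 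Together with the local solvability of linear Schr\"odinger equations and the fact that $\psi\equiv0$ is the only common solution of all equations from the subclass, this is precisely the list of properties singled out in Section~\ref{SectionOnUniformlySemi-normalizedClasses} under which a class of homogeneous linear equations is uniformly semi-normalized with respect to linear superposition of solutions. I would accordingly take the uniform point symmetry group~$G^{\rm lin}_V$ of each~$\mathcal L_V$ to consist of the linear superposition transformations $\tilde t=t$, $\tilde x=x$, $\tilde\psi=\psi+\Phi$ with $\Phi$ running through the solution set of~$\mathcal L_V$.

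The core step is to verify the factorization property (property~3 of Definition~\ref{DefinitionOfUniformlySemi-normalizedClasses}) by copying the argument used for the whole class. Given an admissible transformation $(V,\tilde V,\varphi)\in\mathcal G^\sim_{\mathbb R}$, I would set $\varphi^1$ to be the linear superposition transformation $\tilde t=t$, $\tilde x=x$, $\tilde\psi=\psi+\Phi$ with the same~$\Phi$ as in~$\varphi$; by linearity and homogeneity this is a point symmetry of~$\mathcal L_V$, so $\varphi^1\in G^{\rm lin}_V$. I would then let $\varphi^2$ be the transformation of the form~\eqref{equgroupoidsub} with the same parameters $T$, $X^0$, $\Sigma$, $a$, $\varepsilon$ but with $\Phi=0$; by~\eqref{equgroupoidsub_c} its extension to the arbitrary element is independent of~$V$ and maps~${\rm Sch}_{\mathbb R}$ into itself, hence belongs to~$G^\sim_{\mathbb R}$. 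A short check of the composition then gives $\varphi=\varphi^2\varphi^1$, and one sees directly that $\varphi$ carries the linear superposition subgroup of~$\mathcal L_V$ onto that of~$\mathcal L_{\tilde V}$. Properties~1 and~2 of Definition~\ref{DefinitionOfUniformlySemi-normalizedClasses} are immediate: transformations from~$G^\sim_{\mathbb R}$ have $\Phi=0$ and therefore meet each~$G^{\rm lin}_V$ only in the identity, while the equivariance $G^{\rm lin}_{\tilde V}=\varphi^2 G^{\rm lin}_V(\varphi^2)^{-1}$ follows from the fact that an equivalence transformation maps the solution set of the source equation bijectively onto that of the target equation.

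For the second assertion I would repeat the reasoning of Corollary~\ref{eqgpe}. A usual equivalence transformation must act on the variables $(t,x,\psi)$ with components that do not involve the arbitrary element~$V$ and that can be applied uniformly to every equation of~${\rm Sch}_{\mathbb R}$. Among the transformations~\eqref{equgroupoidsub} this forces~$\Phi$ to be a common solution of all equations in the subclass, and the only such solution is~$\Phi=0$; hence $G^\sim_{\mathbb R}$ is exactly the set of transformations~\eqref{equgroupoidsub} with $\Phi=0$.

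The step that needs the most care is the composition $\varphi=\varphi^2\varphi^1$, because the definition of~$\hat\psi$ switches between~$\psi$ and~$\psi^*$ according to the sign of~$T_t$; when $T_t<0$ one must track the complex conjugation so that~$\varphi^1$, which conjugates to $\psi^*\mapsto\psi^*+\Phi^*$ and hence contributes $\hat\Phi$ in the composed map, produces exactly the factor $(\hat\psi+\hat\Phi)$ appearing in~\eqref{equgroupoidsub_b}. Everything else is bookkeeping already carried out for the full class.
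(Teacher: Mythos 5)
Your proposal is correct and takes essentially the same approach the paper intends: Corollary~\ref{eqgpesub} is stated there without a separate proof precisely because it follows from Theorem~\ref{thmequivptransfsub} by repeating, for the subclass ${\rm Sch}_{\mathbb R}$, the arguments of Corollary~\ref{eqgpe} (the common-solution argument forcing $\Phi=0$) and of the uniform semi-normalization corollary for the class~\eqref{LinSchEqs} (the factorization $\varphi=\varphi^2\varphi^1$ into a linear superposition symmetry and an equivalence transformation). Your additional bookkeeping of the conjugation $\hat\Phi$ when $T_t<0$ is consistent with the paper's conventions and does not change the argument.
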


\begin{corollary}\label{corolequivalgsub}
The equivalence algebra of the subclass ${\rm Sch}_{\mathbb R}$ is the algebra 
\[
\mathfrak g^\sim_{\mathbb R}=\langle \hat D_{\mathbb R}(\tau),\hat G_{\mathbb R}(\chi),
\hat M_{\mathbb R}(\sigma),\hat I_{\mathbb R}\rangle 
\]
where $\tau$, $\chi$ and $\sigma$ run through the set 
of smooth real-valued functions of~$t$. 
The vector fields $\hat D_{\mathbb R}(\tau)$, $\hat G_{\mathbb R}(\chi)$,
 $\hat M_{\mathbb R}(\sigma)$ and $\hat I_{\mathbb R}$ are given by
\begin{gather*}
\hat D_{\mathbb R}(\tau)=\tau\p_t+\frac12\tau_t x\p_x+\frac i8\tau_{tt}x^2(\psi\p_\psi-\psi^*\p_{\psi^*})
-\frac14 \tau_t\hat I_{\mathbb R}-\left(\tau_t V-\frac18\tau_{ttt}x^2\right)\p_V,\\
\hat G_{\mathbb R}(\chi)=\chi\p_x+\frac i2\chi_tx(\psi\p_\psi-\psi^*\p_{\psi^*})+\frac{\chi_{tt}}2x\p_V,\\[1ex]
\hat M_{\mathbb R}(\sigma)=i\sigma(\psi\p_\psi-\psi^*\p_{\psi^*})+\sigma_t\p_V,\quad
\hat I_{\mathbb R}=\psi\p_\psi+\psi^*\p_{\psi^*}.
\end{gather*}
\end{corollary}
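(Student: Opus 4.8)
The plan is to follow the strategy of the proof of Corollary~\ref{corolequivalg}, taking advantage of the fact that the complete equivalence group $G^\sim_{\mathbb R}$ has already been obtained in explicit finite form in Corollary~\ref{eqgpesub} (and Theorem~\ref{thmequivptransfsub}). By definition $\mathfrak g^\sim_{\mathbb R}$ is the set of infinitesimal generators of one-parameter subgroups of $G^\sim_{\mathbb R}$, so there is no need to set up and integrate determining equations; instead I would differentiate the known group transformations directly.

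First I would restrict to the identity component of $G^\sim_{\mathbb R}$ by putting $\varepsilon=1$ and $T_t>0$ (hence $\varepsilon'=1$) and discard the linear-superposition part by setting $\Phi=0$ in~\eqref{equgroupoidsub}. As in Corollary~\ref{corolequivalg}, I would reparameterize the phase as $\Sigma=\tfrac14X^0X^0_t+\bar\Sigma$ so that the $X^0$- and $\Sigma$-directions decouple at the level of one-parameter subgroups. The identity-component transformations are then governed by the three arbitrary functions $T$, $X^0$, $\bar\Sigma$ of $t$ and by the single nonzero real constant $a$, with identity values $T=t$, $X^0=\bar\Sigma=0$ and $a=1$.

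Next I would let each of these four parameters in turn depend on a subgroup parameter $\delta$, hold the remaining three at their identity values, differentiate the components $\tilde t$, $\tilde x$, $\tilde\psi$, $\tilde V$ of~\eqref{equgroupoidsub} with respect to $\delta$, and evaluate at $\delta=0$. Varying $T$ yields $\hat D_{\mathbb R}(\tau)$: the $x^2$-phase in~\eqref{equgroupoidsub_b} produces the term $\tfrac i8\tau_{tt}x^2(\psi\p_\psi-\psi^*\p_{\psi^*})$, the quadratic-in-$x$ contribution to~\eqref{equgroupoidsub_c} produces $\tfrac18\tau_{ttt}x^2\p_V$, and the explicit prefactor $|T_t|^{-1/4}$ in~\eqref{equgroupoidsub_b} is precisely what contributes the tail $-\tfrac14\tau_t\hat I_{\mathbb R}$. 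Varying $X^0$ gives $\hat G_{\mathbb R}(\chi)$, varying $\bar\Sigma$ gives $\hat M_{\mathbb R}(\sigma)$, and finally varying the constant $a$ (say $a=e^\delta$) gives $\hat I_{\mathbb R}=\psi\p_\psi+\psi^*\p_{\psi^*}$, with no $\p_V$-component whatsoever.

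The step carrying the real content is this last one, together with its contrast with the complex case. In the full class~\eqref{LinSchEqs} the amplitude scaling was an arbitrary function $e^{\Upsilon(t)}$, which produced the whole family $\hat I(\rho)$ with the nontrivial potential-component $-i\rho_t(\p_V+\p_{V^*})$; the reality requirement $V=V^*$ forces this imaginary contribution to the potential to vanish, which is exactly why the amplitude factor degenerates to a single real constant $a$ in Theorem~\ref{thmequivptransfsub} and why the family $\hat I(\rho)$ collapses to the single generator $\hat I_{\mathbb R}$. The same mechanism --- reality tying the amplitude normalization to $|T_t|^{-1/4}$ rather than to a free function --- is what produces the $-\tfrac14\tau_t$ coefficient in $\hat D_{\mathbb R}(\tau)$, which is absent from its complex counterpart $\hat D(\tau)$. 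Once these bookkeeping points are settled the remaining differentiations are routine, and no generators are missed because the four quantities $T$, $X^0$, $\bar\Sigma$, $a$ exhaust the identity component of $G^\sim_{\mathbb R}$, so every tangent vector at the identity is a combination of the four corresponding contributions.
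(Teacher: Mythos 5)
Your proposal is correct and follows essentially the same route the paper intends: the paper derives Corollary~\ref{corolequivalgsub} by specializing the argument of Corollary~\ref{corolequivalg}, namely differentiating one-parameter families in the finite-form equivalence group of Theorem~\ref{thmequivptransfsub}/Corollary~\ref{eqgpesub} (with $\Phi=0$, $\varepsilon=1$, $T_t>0$, and the shift $\Sigma=\tfrac14X^0X^0_t+\bar\Sigma$) rather than integrating determining equations. Your computations of the four generators, including the origin of the $-\tfrac14\tau_t\hat I_{\mathbb R}$ tail in $\hat D_{\mathbb R}(\tau)$ from the $|T_t|^{-1/4}$ prefactor and the collapse of $\hat I(\rho)$ to the single generator $\hat I_{\mathbb R}$ because the amplitude parameter is a constant $a$ rather than a free function $e^{\Upsilon(t)}$, match the paper's statement exactly.
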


\begin{corollary}
A (1+1)-dimensional linear Schr\"odinger equation of the form~\eqref{LinSchEqs} 
with a real-valued potential $V$ is equivalent to the free linear Schr\"odinger equation 
with respect to a point transformation if and only if the potential is a quadratic polynomial in $x$, 
i.e., $V=\gamma^2(t)x^2+ \gamma^1(t)x+\gamma^0(t)$ 
for some smooth real-valued functions $\gamma^0$, $\gamma^1$
and $\gamma^2$ of $t$.
\end{corollary}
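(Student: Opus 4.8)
The plan is to read the condition directly off the potential-transformation formula~\eqref{equgroupoidsub_c} of Theorem~\ref{thmequivptransfsub}. The equation~$\mathcal L_V$ is equivalent to the free equation~$\mathcal L_0$ exactly when there is an admissible transformation of the subclass~${\rm Sch}_{\mathbb R}$ whose source potential is~$V$ and whose target potential vanishes, $\tilde V=0$. Thus the whole proof consists in inverting the relation~\eqref{equgroupoidsub_c} under the constraint $\tilde V=0$.

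First I would establish necessity. Setting $\tilde V=0$ in~\eqref{equgroupoidsub_c} and solving for~$V$ gives
\[
V=-|T_t|\frac{2T_{ttt}T_t-3T_{tt}^{\,2}}{16\varepsilon'T_t^{\,3}}x^2
-\frac{\varepsilon\varepsilon'|T_t|^{1/2}}2\left(\frac{X^0_t}{T_t}\right)_{\!t}x
+|T_t|\left(\frac{(X^0_t)^2}{4T_t^{\,2}}-\frac{\Sigma_t}{T_t}\right).
\]
Since $T$, $X^0$ and~$\Sigma$ are real-valued functions of~$t$ alone, the right-hand side is manifestly a real-valued quadratic polynomial in~$x$ with $t$-dependent coefficients. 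Reading off these coefficients yields the functions $\gamma^2$, $\gamma^1$ and~$\gamma^0$, which proves the ``only if'' part.

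For sufficiency I would run this computation backwards: given $V=\gamma^2(t)x^2+\gamma^1(t)x+\gamma^0(t)$, I would construct real parameter-functions $T$, $X^0$ and~$\Sigma$ realizing $\tilde V=0$ by successively matching the coefficients of $x^2$, $x^1$ and~$x^0$. Using $|T_t|=\varepsilon'T_t$, the coefficient of~$x^2$ collapses to the single ordinary differential equation
\[
\frac{2T_{ttt}T_t-3T_{tt}^{\,2}}{T_t^{\,2}}=-16\gamma^2(t),
\]
whose left-hand side is twice the Schwarzian derivative of~$T$. Once $T$ is fixed, the coefficient of~$x^1$ determines $(X^0_t/T_t)_t$ explicitly, hence $X^0$ after two quadratures, and the coefficient of~$x^0$ then expresses $\Sigma_t$ directly, giving~$\Sigma$ by a further quadrature.

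The main obstacle will be solving the Schwarzian equation for~$T$ while guaranteeing $T_t\ne0$, but this step is classical and linearizes. Taking a fundamental system $w_1,w_2$ of the linear equation $w_{tt}-4\gamma^2(t)w=0$ and setting $T=w_1/w_2$ produces a solution of the Schwarzian equation whose derivative $T_t=(w_{1,t}w_2-w_1w_{2,t})/w_2^{\,2}$ equals the nonzero constant Wronskian divided by~$w_2^{\,2}$, and is therefore nowhere zero. Hence a suitable~$T$ exists locally, the remaining parameter-functions follow by quadrature, and the transformation~\eqref{equgroupoidsub} built from $T$, $X^0$, $\Sigma$ maps $\mathcal L_V$ to~$\mathcal L_0$, which completes the ``if'' direction.
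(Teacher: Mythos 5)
Your proof is correct and follows the route the paper intends: the corollary is stated as an immediate consequence of Theorem~\ref{thmequivptransfsub}, and both directions amount to setting $\tilde V=0$ in the potential relation~\eqref{equgroupoidsub_c} and reading off (resp.\ reconstructing) the parameter-functions $T$, $X^0$, $\Sigma$, exactly as you do. The only point the paper leaves implicit---local existence of $T$ with $T_t\ne0$ solving the Schwarzian-derivative equation coming from the $x^2$-coefficient---you settle correctly by the classical linearization $T=w_1/w_2$ with $w_1,w_2$ a fundamental system of $w_{tt}=4\gamma^2(t)\,w$, whose constant nonzero Wronskian guarantees $T_t\ne0$.
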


A study of the determining equations for Lie symmetries 
of equations from the subclass ${\rm Sch}_{\mathbb R}$ 
shows that the classifying condition in this 
case is of the form~\eqref{classcond} with $\rho_t=-\frac14\tau_{tt}$,
\begin{equation}\label{classcondsub}
\tau V_t+\left(\frac12\tau_tx+\chi\right)V_x+\tau_tV=\frac18\tau_{ttt}x^2+\frac12\chi_{tt}x+\sigma_t.
\end{equation}
The kernel invariance algebra~$\mathfrak g^\cap_{\mathbb R}$ of 
the subclass ${\rm Sch}_{\mathbb R}$ coincides with the kernel invariance 
algebra $\mathfrak g^\cap$ of the whole class~\eqref{LinSchEqs}, 
cf. Proposition~1.

\begin{theorem}
The maximal Lie invariance algebra $\mathfrak g_V$ of an equation~$\mathcal L_V$ from 
the subclass ${\rm Sch}_{\mathbb R}$ is spanned by the vector fields
$D_{\mathbb R}(\tau)$, $G(\chi)$, $I$, $\sigma M$ and $Z(\eta^0)$,
where
\begin{gather*}
D_{\mathbb R}(\tau):=D(\tau)-\frac14\tau_tI=\tau\p_t+\frac12\tau_tx\p_x+
\frac18\tau_{tt}x^2M-\frac14\tau_tI,\quad
G(\chi)=\chi\p_x+\frac12\chi_txM,\\
M=i\psi\p_\psi-i\psi^*\p_{\psi^*},\quad 
I=\psi\p_\psi+\psi^*\p_{\psi^*},\quad
Z(\eta^0)=\eta^0\p_\psi+\eta^0{^*}\p_{\psi^*},
\end{gather*}
the parameters $\tau$, $\chi$ and $\sigma$ run 
through the set of real-valued 
smooth functions of $t$ satisfying the classifying condition~\eqref{classcondsub}, 
and $\eta^0$ runs through the solution set of the equation~$\mathcal L_V$.
\end{theorem}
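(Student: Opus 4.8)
The plan is to obtain this result as a direct specialization of Theorem~\ref{Determing theorem} to real-valued potentials, rather than re-deriving the determining equations from scratch. By Theorem~\ref{Determing theorem}, every element of the maximal Lie invariance algebra has the form $Q=D(\tau)+G(\chi)+\sigma M+\rho I+Z(\eta^0)$, where the real-valued functions $\tau,\chi,\rho,\sigma$ of $t$ satisfy the classifying condition~\eqref{classcond} and $\eta^0$ runs through the solution set of $\mathcal L_V$. Since $\mathcal L_V$ is the same partial differential equation whether $V$ is regarded as lying in the full class~\eqref{LinSchEqs} or in the subclass~${\rm Sch}_{\mathbb R}$, its maximal Lie invariance algebra~$\mathfrak g_V$ is intrinsic to the equation, and it suffices to impose the reality constraint $V=V^*$ in~\eqref{classcond}.

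First I would observe that for a real-valued potential the entire left-hand side of~\eqref{classcond}, namely $\tau V_t+(\tfrac12\tau_tx+\chi)V_x+\tau_tV$, is real, because $\tau,\chi$ are real-valued functions of~$t$ and $V,V_t,V_x$ are real. Splitting~\eqref{classcond} into its real and imaginary parts therefore forces the imaginary part of its right-hand side to vanish, which yields the single additional constraint $\rho_t=-\tfrac14\tau_{tt}$, while the surviving real part is precisely the classifying condition~\eqref{classcondsub}. This reproduces the relation noted in the text just before~\eqref{classcondsub}.

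Next I would integrate $\rho_t=-\tfrac14\tau_{tt}$ to obtain $\rho=-\tfrac14\tau_t+c$ for an arbitrary real constant~$c$, and substitute this back into~$Q$. The combination $D(\tau)+\rho I$ then rewrites as $\bigl(D(\tau)-\tfrac14\tau_tI\bigr)+cI=D_{\mathbb R}(\tau)+cI$, so that a general element takes the form $Q=D_{\mathbb R}(\tau)+G(\chi)+\sigma M+cI+Z(\eta^0)$. Reading off the independent building blocks gives exactly the spanning set $D_{\mathbb R}(\tau),G(\chi),\sigma M,I,Z(\eta^0)$ claimed in the statement, with $\tau,\chi,\sigma$ constrained only by~\eqref{classcondsub} and $\eta^0\in\mathcal L_V$; the $Z(\eta^0)$ part is untouched, since equation~\eqref{classcondForEta0} is just $\mathcal L_V$ itself and is insensitive to the reality of~$V$.

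There is no serious obstacle here, since the argument is a clean real/imaginary splitting of an already-known classifying condition. The only point requiring a little care is the bookkeeping: one should check that the constant of integration in~$\rho$ is absorbed exactly into the coefficient of~$I$, so that $I$ reappears as an independent basis vector and the kernel $\mathfrak g^\cap_{\mathbb R}=\langle M,I\rangle$ is recovered in the degenerate case $\tau=\chi=0$ (where~\eqref{classcondsub} reduces to $\sigma_t=0$), and confirm that no further constraints on~$\chi$ or~$\sigma$ arise beyond those already encoded in~\eqref{classcondsub}.
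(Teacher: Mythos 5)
Your proposal is correct and follows essentially the same route as the paper: the paper derives this theorem by specializing the results for the whole class~\eqref{LinSchEqs} to real-valued potentials, noting (just as you do via the real/imaginary split of~\eqref{classcond}) that reality of~$V$ forces $\rho_t=-\tfrac14\tau_{tt}$, which after integration is absorbed as $D(\tau)+\rho I=D_{\mathbb R}(\tau)+cI$. Your additional bookkeeping checks (the constant of integration reappearing as the independent basis vector~$I$, and the $Z(\eta^0)$ part being unaffected) are exactly the points implicit in the paper's treatment.
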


It is obvious that properties of appropriate subalgebras for 
the subclass ${\rm Sch}_{\mathbb R}$ can be obtained 
by specifying the same properties of appropriate subalgebras for the whole class~\eqref{LinSchEqs}.
Thus, inequivalent cases of real-valued potentials admitting Lie symmetry 
extensions can be singled out from the classification list presented in Table~1. 
We note, however, that the group classification of real-valued potentials 
can be easily carried out from the outset.

\begin{theorem}\label{theoremresultsub}
A complete list of inequivalent Lie symmetry extensions in the subclass ${\rm Sch}_{\mathbb R}$ 
is given in Table~3.
\begin{table}[!th]
\renewcommand{\arraystretch}{1.92}
\begin{center}
Table 3. The classification list for real-valued potentials.
$$
\begin{array}{|c|c|c|c|l|}
\hline
\mbox{no.}&k_1&k_2& V &\hfil\mbox{Basis of }\mathfrak g^{\rm ess}_V\\
\hline
1 & 0 & 0 & V(t,x)& M,\;I \\
2 & 1 & 0 & V(x)& M,\;I,\;D(1)\\
3 & 3 & 0 & cx^{-2},\,c\in\mathbb R_*& M,\;I,\,D(1),\;D(t),\;D(t^2)-\frac12tI\\
4 & 3 & 2 & 0 & M,\;I,\;D(1),\;D(t),\;D(t^2)- \frac12 tI,\;G(1),\;G(t)\\
\hline
\end{array}
$$
\end{center}
\footnotesize
Lie symmetry extensions given in Table~3 are maximal if and only if
the potential~$V$ does not satisfy an equation of the form~\eqref{classcondsub} in~Case~1
and $V\ne b_2x^2+b_1x+b_0+c(x+a)^{-2}$ 
for any real constants~$a$, $b_0$, $b_1$, $b_2$ and $c$ in~Case~2. 
\end{table}
\end{theorem}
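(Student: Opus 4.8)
The plan is to reuse wholesale the machinery built for the full class~\eqref{LinSchEqs}, replacing the general classifying condition by its real counterpart~\eqref{classcondsub}. One could instead extract Table~3 from Table~1 by retaining only those cases whose potential admits a real representative, but since passing to real $V$ only \emph{adds} constraints, the cleanest route is to rerun the case analysis of Theorem~\ref{theoremresults} directly with~\eqref{classcondsub}. First I would attach to each essential algebra $\mathfrak g^{\rm ess}_V=\langle D_{\mathbb R}(\tau),G(\chi),\sigma M,I\rangle$ the same two invariants $k_1=\dim\pi^0_*\mathfrak g^{\rm ess}_V$ and $k_2$. The arguments of Lemmas~\ref{dimgess}, \ref{gclaslema2} and~\ref{lemcondfork_2} carry over verbatim with~\eqref{classcondsub} in place of~\eqref{classcond}, since they use only the form of the spanning vector fields and the splitting of the classifying condition with respect to~$x$; hence $k_1\in\{0,1,3\}$, $k_2\in\{0,2\}$, and a subalgebra with $\dim\pi^0_*\mathfrak g^{\rm ess}_V=2$ is again non-appropriate.

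Next I would dispose of the three cases with $k_2=0$. When $k_1=0$ there is no extension and we are in Case~1. When $k_1=1$, modulo $\pi_*G^\sim_{\mathbb R}$ the algebra contains $D(1)$, and~\eqref{classcondsub} with $\tau=1$, $\chi=\sigma=0$ gives $V_t=0$, i.e.\ $V=V(x)$, which is Case~2. When $k_1=3$, after reduction the algebra contains $D(1)$ and $D(t)$; feeding $\tau=1$ and then $\tau=t$ into~\eqref{classcondsub} yields $V_t=0$ together with $xV_x+2V=0$, so $V=cx^{-2}$, and reality of~$V$ forces $c\in\mathbb R_*$ (Case~3). In each case the maximality criterion is obtained, exactly as in Theorem~\ref{theoremresults}, by substituting the normal form back into~\eqref{classcondsub} and reading off when no further $\tau$ or $\chi$ is admitted.

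The decisive case, and the only point at which the real subclass behaves qualitatively differently from the full class, is $k_2=2$. Here $\mathfrak g^{\rm ess}_V$ contains an element $G(\chi)+\sigma M+cI$ with $\chi\ne0$, and evaluating~\eqref{classcondsub} at $\tau=0$ gives $\chi V_x=\tfrac12\chi_{tt}x+\sigma_t$, so that $V_x=\tfrac{\chi_{tt}}{2\chi}x+\tfrac{\sigma_t}{\chi}$ is affine in~$x$ with \emph{real} coefficients; thus $V$ is a real quadratic polynomial in~$x$ with $t$-dependent coefficients. By the corollary characterizing real potentials equivalent to the free equation, every such $V$ satisfies $V\sim0\bmod G^\sim_{\mathbb R}$, and the free equation realizes $k_1=3$, $k_2=2$, giving Case~4. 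This collapse is precisely what erases the rich families of Table~1: in the complex setting the classifying condition~\eqref{classcond} carries an extra $-i\rho_t$, so a $G$-symmetry only forces the $x$-coefficient of~$V$ to be complex, permitting the imaginary-linear potentials $i\gamma(t)x$ (and $\pm\tfrac14x^2+ibx$) of Cases~1.2 and~1.4a--4c, which are \emph{not} equivalent to the free equation. For real~$V$ that imaginary freedom disappears, the quadratic coefficients are forced to be real, and every real quadratic is equivalent to the free equation. The subtle step is therefore not a computation but this structural dichotomy; once it is in place it also rules out any intermediate case $k_1=1$, $k_2=2$, since $V\sim0$ automatically carries the full $k_1=3$ extension.
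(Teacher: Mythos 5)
Your proposal is correct and follows essentially the same route as the paper's proof: that proof likewise reruns the case analysis of Theorem~\ref{theoremresults} with the real classifying condition~\eqref{classcondsub} in place of~\eqref{classcond}, treats the three $k_2=0$ subcases exactly as before (noting only that now $\rho_t=-\frac14\tau_{tt}$), and shows that $k_2=2$ collapses onto the free equation. The one local difference is the mechanism in the $k_2=2$ case: the paper first absorbs the (necessarily constant) $\rho^1$-component into~$I$ and normalizes the extra symmetry $G(\chi^1)+\sigma^1M+\rho^1I$ to $G(1)$ by Lemma~\ref{gclaslema}-type equivalence transformations, whence $V_x=0$ and $V=V(t)$ is trivially gauged to zero, whereas you keep $\chi$ general, integrate~\eqref{classcondsub} at $\tau=0$ to conclude that $V$ is a real quadratic polynomial in~$x$, and then invoke the corollary on real quadratic potentials (implicitly using the uniform semi-normalization of ${\rm Sch}_{\mathbb R}$, Corollary~\ref{eqgpesub}, to upgrade point equivalence to $G^\sim_{\mathbb R}$-equivalence). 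Both mechanisms are valid and of comparable length, and your structural remark on why the potentials $i\gamma(t)x$ and $\pm\frac14x^2+ibx$ of Table~1 disappear --- the loss of the $-i\rho_t$ term in~\eqref{classcondsub} --- together with the observation that this also excludes $k_1\in\{0,1\}$ when $k_2=2$, makes explicit a point the paper's sketch leaves implicit.
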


\begin{proof}
The proof follows the same pattern as Theorem~\ref{theoremresults}, and 
we sketch the proof by considering the invariants $k_1$ and $k_2$.
The case $k_2=0$ is split into the three subcases $k_1=0$, $k_1=1$ and $k_1\geqslant 2$. 
The proof for each subcase is the same as for Theorem~\ref{theoremresults} 
except that the parameter $\rho$ in each Lie symmetry vector field satisfies the equation $\rho_t=-\frac14\tau_{tt}$.
If $k_2=2$, then the algebra $\mathfrak g^{\rm ess}_V$ contains a vector field $Q^1=G(\chi^1)+\sigma^1M+\rho^1I$, 
where the parameters $\chi^1$ and $\sigma^1$ are real-valued smooth functions of $t$ with $\chi^1\ne 0$ 
and $\rho^1$ is a real constant.
Combining $Q^1$ with $I$ and using $G^\sim$-equivalence, we may assume that $Q^1=G(1)$.
The equation $\mathcal L_V$ is invariant with respect to $G(1)$ 
if and only if the potential $V$ does not depend on $x$. Then the equation $\mathcal L_V$
is equivalent to the free linear Schr\"odinger equation.
\end{proof}

\section{Conclusion}\label{Concl}

In this paper we have completely solved the group classification problem 
for (1+1)-dimensional linear Schr\"odinger equations with complex-valued potentials. 
The classification list is presented in Theorem~\ref{theoremresults} or, equivalently, 
in Corollary~\ref{mainresultsclassifiaction}.
This also gives the group classifications for the larger class of similar equations with variable mass 
and for the smaller class of such equations with real-valued potentials. 
We have introduced the notion of uniformly semi-normalized classes of differential equations 
and developed a special version of the algebraic method of group classification for such classes. 
This is, in fact, the main result of the paper. 
The class~\eqref{LinSchEqs} has the specific property 
of uniform semi-normalization with respect to linear superposition transformations, 
which is quite common for classes of homogeneous linear differential equations. 
Within the framework of the algebraic method, the group classification problem of the class~\eqref{LinSchEqs} 
reduces to the classification of appropriate low-dimensional subalgebras 
of the associated equivalence algebra~$\mathfrak g^\sim$. 

We show that the linear span~$\mathfrak g_\spanindex$ of the vector fields 
from the maximal Lie invariance algebras of equations from the class~\eqref{LinSchEqs}
is itself a Lie algebra. 
For each potential $V$, the maximal Lie invariance algebra~$\mathfrak g_V$ 
of the equation~$\mathcal L_V$ from the class~\eqref{LinSchEqs} is the semi-direct sum  
of a subalgebra $\mathfrak g^{\rm ess}_V$, of dimension not greater than seven,
and an infinite dimensional abelian ideal $\mathfrak g^{\rm lin}_V$, 
which is the trivial part of~$\mathfrak g_V$ and is associated with the linear superposition principle, 
$\mathfrak g_V=\mathfrak g^{\rm ess}_V\lsemioplus\mathfrak g^{\rm lin}_V$.
The above representation of~$\mathfrak g_V$'s yields 
a similar representation for~$\mathfrak g_\spanindex=\sum_V\mathfrak g_V$,  
$\mathfrak g_\spanindex=\mathfrak g^{\rm ess}_\spanindex\lsemioplus\mathfrak g^{\rm lin}_\spanindex$, 
where $\mathfrak g^{\rm ess}_\spanindex=\sum_V\mathfrak g^{\rm ess}_V$ 
is a (finite-dimensional) subalgebra of~$\mathfrak g_\spanindex$, 
and $\mathfrak g^{\rm lin}_\spanindex=\sum_V\mathfrak g^{\rm lin}_V$ is its abelian ideal.
The projection of the equivalence algebra~$\mathfrak g^\sim$ of the class~\eqref{LinSchEqs} on the space of variables 
coincides with~$\mathfrak g^{\rm ess}_\spanindex$.
Thus, two objects, $\mathfrak g^{\rm ess}_\spanindex$ and $\mathfrak g^\sim$, 
are directly related to the class~\eqref{LinSchEqs} and consistent with each other. 
This is why we classify appropriate subalgebras of~$\mathfrak g^{\rm ess}_\spanindex$ 
up to $G^\sim$-equivalence, each of which coincides with~$\mathfrak g^{\rm ess}_V$ for some~$V$.
\looseness=-1

The partition into classification cases is provided by two nonnegative integers $k_1$ and $k_2$, 
which are characteristic invariants of subalgebras of~$\mathfrak g^{\rm ess}_\spanindex$. 
This leads to two equivalent classification lists for the potential~$V$ depending on which of 
these invariants is assumed as the leading invariant. 
The list presented in Table~1 (resp.\ described in Corollary~\ref{mainresultsclassifiaction})
is constructed under the assumption that the invariant~$k_1$ (resp.\ $k_2$) is leading. 
Each of the lists consists of eight~$G^\sim$-inequivalent families of potentials. 
We have proved that for appropriate subalgebras the invariant~$k_2$ can take only two values: $0$ and $2$, 
and the invariant~$k_1$ is not greater than three. 
Further, the invariant $k_1$ cannot equal two for appropriate subalgebras due to the fact 
that the corresponding subalgebras cannot be maximal Lie symmetry algebras 
for equations from the class~\eqref{LinSchEqs}. 
At the same time, the proof of the condition $k_1\ne 2$ 
needs realizing the major part of the group classification of the class under study. 

The cases in the second list for which $k_2=0$ coincide with those from the first list.
For $k_2=2$, the group classification of the class~\eqref{LinSchEqs} reduces to 
the group classification of its subclass~\eqref{igammax subclass}.
This subclass is uniformly semi-normalized with respect to a larger family of point symmetry groups 
than the corresponding groups of linear superposition transformations, 
which makes the subclass~\eqref{igammax subclass} a useful example for group analysis of differential equations. 
For each equation~$\mathcal L_V$ from the subclass,
the essential part $\mathfrak g^{\rm ess}_V$ of its maximal Lie invariance algebra~$\mathfrak g_V$ 
can be written as 
$\mathfrak g^{\rm ess}_V=\mathfrak g^{\rm ext}_V\lsemioplus(\mathfrak g^{\rm unf}_V\cap\mathfrak g^{\rm ess}_\spanindex)$, 
where $\mathfrak g^{\rm unf}_V$ is an ideal of $\mathfrak g_V$ and has a similar structure for all equations from the subclass,
and $\mathfrak g^{\rm ext}_V$ is a subalgebra of $\mathfrak g^{\rm ess}_V$.
The vector fields from all $\mathfrak g^{\rm ext}_V$'s of equations from the subclass~\eqref{igammax subclass} 
constitute the~algebra $\mathfrak g^{\rm ext}_\spanindex$, 
which is contained in the projection of the equivalence algebra 
of the subclass~\eqref{igammax subclass} and is isomorphic to the algebra ${\rm sl}(2,\mathbb R)$. 
Therefore, the classification of subalgebras of ${\rm sl}(2,\mathbb R)$ (which is well known) yields
the solution of the group classification problem of the subclass~\eqref{igammax subclass}, 
whose result is presented in Table~2.
  
Since the subclass ${\rm Sch}_{\mathbb R}$ of (1+1)-dimensional linear Schr\"odinger equations with real-valued potentials  
is important for applications, we have given its group classification separately 
by singling out related results from the group classification of the class~\eqref{LinSchEqs}.
Since the subclass ${\rm Sch}_{\mathbb R}$ is also uniformly semi-normalized with respect to linear superposition of solutions, 
this procedure can be realized within the framework of the algebraic approach 
by specifying the properties of appropriate subalgebras for the case of real-valued potentials.

Furthermore, the semi-normalization of the above classes of linear Schr\"odinger equations guarantees that 
there are no additional point equivalence transformations between classification cases listed for each of these classes. 

The new version of the algebraic method
that is given in Section~\ref{SectionOnUniformlySemi-normalizedClasses}  
and then applied to the symmetry analysis of the class~\eqref{LinSchEqs}
can be regarded as a model 
for optimizing the group classification of other classes of differential equations (including higher-dimensional cases).
We intend to extend our approach 
to multidimensional linear Schr\"odinger equations with complex-valued potentials. 
In this context, it seems that the technique used in the proof of Theorem~\ref{theoremresults}
is more useful for generalizing to the multidimensional case than the alternative proof presented in Section~\ref{LinScEqsalternative}.

\subsection*{Acknowledgements}

The research of C.K. was supported by International Science Programme (ISP) in collaboration 
with East African Universities Mathematics Programme (EAUMP).
The research of R.O.P. was supported by the Austrian Science Fund (FWF), project P25064.
The authors are pleased to thank Anatoly Nikitin, Olena Vaneeva and Vyacheslav Boyko for stimulating discussions.

\end{document}